\def\withnotes{0}
\def\authornameob{Omri Ben-Eliezer}
\def\authoraffiob{Blavatnik School of Computer Science, Tel-Aviv University. Email: \email{omrib@mail.tau.ac.il}.}
\def\authornamecc{Cl\'{e}ment L. Canonne}
\def\authorafficc{Columbia University. Email: \email{ccanonne@cs.columbia.edu}. Research supported by NSF grants CCF-1115703 and NSF CCF-1319788.}
\title{Improved Bounds for Testing Forbidden Order Patterns}
\date{\today}
\author{
  \authornameob\thanks{\authoraffiob}
  \and \authornamecc\thanks{\authorafficc}
}
\begin{document}

\maketitle

\pagenumbering{gobble}
\begin{abstract}
    A sequence $f\colon\{1,\dots,n\}\to\mathbb{R}$ contains a permutation $\pi$ of length $k$ if there exist $i_1<\dots<i_k$ such that, for all $x,y$, $f(i_x)<f(i_y)$ if and only if $\pi(x)<\pi(y)$; otherwise, $f$ is said to be $\pi$-free. In this work, we consider the problem of testing for $\pi$-freeness with one-sided error, continuing the investigation of [Newman et al., SODA'17].

We demonstrate a surprising behavior for non-adaptive tests with 
one-sided error: While a trivial sampling-based approach yields an $\varepsilon$-test for $\pi$-freeness making 
$\Theta(\varepsilon^{-1/k} n^{1-1/k})$ queries, our lower bounds imply that this is almost optimal for most permutations! Specifically, for most permutations $\pi$ of length $k$, any non-adaptive one-sided $\varepsilon$-test requires 
$\varepsilon^{-1/(k-\Theta(1))}n^{1-1/(k-\Theta(1))}$ 
queries; furthermore, the permutations that are hardest to test require $\Theta(\varepsilon^{-1/(k-1)}n^{1-1/(k-1)})$ queries, which is tight in $n$ and $\varepsilon$.

Additionally, we show two hierarchical behaviors here. First, for any $k$ and $l\leq k-1$, there exists some $\pi$ of length $k$ that requires $\tilde{\Theta}_{\varepsilon}(n^{1-1/l})$ non-adaptive queries. Second, we show an adaptivity hierarchy for $\pi=(1,3,2)$ by proving upper and lower bounds for (one- and two-sided) testing of $\pi$-freeness with $r$ rounds of adaptivity. The results answer open questions of Newman et al. and [Canonne and Gur, CCC'17].
\end{abstract}

\ifnum\withnotes=1
  \clearpage
  \listoftodos
  \hfill
  \newpage
  \tableofcontents
  \clearpage
\fi

\clearpage\pagenumbering{arabic}

\section{Introduction}\label{sec:intro}
\subsection{Background and motivation}
Introduced by Rubinfeld and Sudan~\cite{RS:96} and Goldreich, Goldwasser, and Ron~\cite{GGR:98}, the field of property testing is concerned with obtaining query- and time-efficient randomized algorithms (usually running in sublinear time), which decide whether their input satisfies some prescribed property of interest, or differs significantly from any object with this property. Originating from the early work on Probabilistically Checkable Proofs, property testing quickly evolved to become an area of study of its own right, encompassing a variety of topics, including (but not limited to) testing of graphs, Boolean and real-valued functions, and probability distributions (see e.g.~\cite{Gol:10,Gol:17,BY:17,Ron:08,Ron:09,Canonne:15:Survey} for recent books and surveys).

In this work, we focus on testing properties of real-valued functions, of the form $f\colon[n]\to\R$, i.e., \emph{real-valued sequences}. A significant body of work has been dedicated to such properties, with the examples of monotonicity~\cite{Ergun:00,BGJRW:12,Dodis:99,Fischer2004,CS:2013,Raskhodnikova:14}, the Lipschitz property~\cite{JhaR:13,CS:2013}, convexity~\cite{PRR:03}, and $k$-monotonicity~\cite{CGGKW:17,GKW:17}. Very recently, Newman, Rabinovich, Rajendraprasad, and Sohler~\cite{NewmanRabinovich2017} initiated the study of a massively parameterized property,\footnote{In the massively parameterized framework, the property to be tested depends on an underlying structure, typically of a significant size, which is considered ``fixed'' (i.e., not part of the input itself). See e.g.~\cite{Newman:10} for a survey on testing massively parameterized properties.}{} \emph{order pattern freeness}, which generalizes and subsumes some of the aforementioned properties as special cases. We refer the reader to~\cite{NewmanRabinovich2017} for a discussion of several motivations for testing order pattern freeness, stemming e.g. from combinatorics and time series analysis.

In this paper, we continue the investigation of testing order pattern freeness. The problem is formulated as follows. The \emph{forbidden order pattern} is a permutation $\pi=(\pi_1,\dots,\pi_k)$ of $[k]$, viewed here as a sequence of length $k$. 
Two sequences $x = (x_1, \ldots, x_k)$ and $y = (y_1, \ldots, y_k)$ are \emph{order-isomorphic} if for any $i \neq j$, $x_i < x_j$ holds if and only if $y_i < y_j$. That is, if the relative order of the elements in both sequences is the same.
For the above permutation $\pi$, a sequence $f\colon [n]\to\R$ is said to \emph{contain the pattern $\pi$} if $\pi$ is order-isomorphic to a subsequence of $f$. In other words, $f$ contains $\pi$ if there exist $k$ integers $i_1< \dots < i_k \in [n]$ such that $f(i_a) < f(i_b)$ if and only if $\pi(a)<\pi(b)$. Accordingly, $f$ is \emph{$\pi$-free} if it does not contain the pattern $\pi$, and \emph{$\eps$-far from being $\pi$-free} if one needs to modify at least $\eps n$ of its values to make it $\pi$-free. 

An $\eps$-\emph{test} for order pattern freeness is an algorithm which, given query access to an unknown input sequence $f \colon [n] \to \R$, must distinguish (with good probability) between the case that the input is $\pi$-free and the case that it is $\eps$-far from being $\pi$-free.
As an example for $k=2$, $(2,1)$-freeness is equivalent to being monotone non-decreasing, and testing $(2,1)$-freeness amounts to testing monotonicity.

\subsection{Preliminaries and notation}\label{sec:preliminaries}
We hereafter denote by $[n]$ the set of integers $\{1,\dots,n\}$, and by $\log$ the logarithm in base two. 
 As aforementioned, we consider the set $\mathcal{F}$ of real-valued functions of the form $f\colon[n]\to\R$ (which we will equivalently refer to as \emph{sequences}), equipped with the usual Hamming distance: $\dist{f}{g} = \abs{ \setOfSuchThat{i\in [n]}{f(i)\neq g(i)} }$. Given a property $\property \subseteq \mathcal{F}$, the distance of $f$ to $\property$ is then defined as the minimum distance of $f$ to any function having the property, i.e. $\dist{f}{\property}\eqdef \inf_{g\in\property} \dist{f}{g}$ (equivalently, this is the minimum number of values one needs to change in $f$ so that it satisfies the property). We say that $f$ is \emph{\eps-far} from \property{} if $\dist{f}{\property} > \eps n$; otherwise, it is \emph{\eps-close}.

We work in the standard setting of property testing: Namely, an \emph{$\eps$-test} (for a fixed property $\property$) is a randomized algorithm which, given parameter $\eps\in(0,1]$ and black-box query access to an unknown function $f\colon[n]\to\R$, must, with probability $2/3$, accept if $f\in\property$ and reject if $f$ is \eps-far from \property. Moreover, for most of our results we shall be focusing specifically on \emph{one-sided} tests, i.e. those which are required to accept with probability one when $f\in\property$, and are only allowed to err when $f$ is \eps-far from \property. The reason for this focus on one-sided tests instead of the weaker two-sided requirement is motivated by the connection to \emph{finding} a violation to the property: Indeed, one-sided testing of $\pi$-freeness is essentially the algorithmic problem of finding a $\pi$-copy in a sequence that contains many pairwise-disjoint $\pi$-copies (instead of the simpler decision problem, which corresponds to the usual two-sided requirement).

With this is mind, we follow the standard use and define the \emph{query complexity} of an $\eps$-test $\Tester$ for $\pi$-freeness as the maximum number of black-box queries of the unknown sequence $f \colon [n] \to \R$ that $\Tester$ may make (as a function of the relevant parameters $n$, $\pi$, and $\eps$). The (one-sided) query complexity of $\pi$-freeness testing is then the minimum query complexity among all (one-sided) tests for it; our goal is to design optimal tests for $\pi$-freeness, i.e. ones that achieve optimal query complexity.

Throughout the paper, the parameter $n$ is always used to denote the length of the input sequence $f \colon [n] \to \R$, the parameter $k$ denotes the length of the forbidden order pattern $\pi$, and the proximity parameter of the test is denoted by $\eps$.
We use the notations $\tildeO{h},\tildeOmega{h}$ to hide polylogarithmic dependencies on $n$, i.e. for expressions of the form $\bigO{h \log^c n}$ and $\bigOmega{h \log^c n}$ (for some absolute constant $c$). We also write  $O_h$ (or, analogously, $\tilde{O}_h$) to denote expressions where the hidden multiplicative constant is allowed to depend (usually polynomially) on the additional parameter $h$.

\paragraph{Running time of our algorithms}
  All of our tests (except maybe that of~\autoref{sec:hierarchy}) run in time \emph{linear} in the number of queries they make. This is because they work by checking whether the queried subsequence contains the forbidden pattern. But this in turn can be performed efficiently, building on an algorithm of Guillemot and Marx~\cite{GuillemotMarx2014}  which determines whether a given sequence contains a fixed permutation in time that is linear in the size of the sequence. 

\paragraph{Distance function}
All results are stated here for the Hamming distance function, but they also hold for the stronger \emph{deletion distance}, defined as follows: $\text{dist}_{\text{del}}(f, g)$ is the minimal number of value modifications, deletions, and insertions needed to turn $f$ into $g$.
This follows from the fact that the Hamming distance and the deletion distance of a sequence $f$ to $\pi$-freeness are always equal: Indeed, if $S$ is a set of entries of a function $f \colon [n] \to \mathbb{R}$ whose deletion turns $f$ into a $\pi$-free sequence, then it is possible to turn $f$ into a $\pi$-free sequence using $|S|$ value modifications by initializing $T = S$ and iteratively applying the following until $T$ is empty: Find an $x \in T$ with a neighboring entry $y \notin T$, set $f(x) = f(y)$, and remove $x$ from $T$. This way, if $f$ restricted to $[n] \setminus S$ is $\pi$-free, then so is $f$ after these value modifications.

In particular, this implies that the distance of a sequence $f$ to $\pi$-freeness is closely related to the maximum size of a set $\mathcal{C}$ of pairwise-disjoint $\pi$-copies in $f$: On one hand, if $f$ is $\epsilon$-far from $\pi$-freeness then we cannot delete all $\pi$-copies in $\mathcal{C}$ with less than $\epsilon n$ entry deletions, so $|\mathcal{C}| \geq \epsilon n / |\pi|$. On the other hand, if $|\mathcal{C}| \geq \epsilon n$ then trivially $f$ is $\epsilon$-far from $\pi$-freeness.

\paragraph{Real-valued versus integer range}
As mentioned before, we are concerned here with real-valued functions, i.e. of the form $f\colon [n]\to\R$. However, this range is chosen merely for convenience and generality; all of our results should still hold when considering integer-valued functions, that is $f\colon [n]\to\N$.

\paragraph{On the respective parameters} As mentioned earlier, $\pi$-freeness is a property massively parameterized by a given permutation $\pi$. We regard the length of $\pi$ -- denoted $k$ -- as a constant, and generally focus on obtaining query complexities that are optimal up to a multiplicative factor that depends on $k$. Similarly, the proximity parameter $\eps$ is to be thought of as either a small constant or a function of the main parameter $n$ (the size of the domain), that slowly tends to zero.

\subsection{Organization of the results}\label{subsec:organization}
In~\autoref{sec:previous} we present the results of Newman et al.\@ \cite{NewmanRabinovich2017} on the problem of testing $\pi$-freeness. We then provide our results in~\autoref{sec:contributions}. Most of our results are in the non-adaptive case, and seem to yield a relatively good general understanding of this case.
All results in~\cite{NewmanRabinovich2017} only consider one-sided testing, and we also mainly follow this paradigm.

In~\autoref{subsec:hierarchy_adaptivity_132}, we turn to the following closely related question: How many queries are needed for a (one-sided or two-sided) test for $\pi$-freeness when we have multiple \emph{rounds} of adaptivity (as defined in~\cite{CG:17}; that is, each round is non-adaptive in itself, but after each round the test can make adaptive decisions)? We show that $\pi$-freeness for $\pi = (1, 3, 2)$ has an adaptivity hierarchy. That is, adding more rounds improves (asymptotically) the query complexity. This is the first known example of a natural property exhibiting an adaptivity hierarchy.

\section{Previous work}\label{sec:previous}
We describe here the previous state of knowledge on testing pattern-freeness; all results here are established in~\cite{NewmanRabinovich2017}, and focus on one-sided tests. 
\subsection{Sample-based non-adaptive upper bound}
\label{subsec:Newman_sample}
Any permutation of length $k$ has a non-adaptive one-sided test making $O(\eps^{-1/k} n^{1-1/k})$ queries. This is the sample-based test, that samples a uniformly random set of elements in the input sequence of the required size and accepts the input (i.e., indicates that it is $\pi$-free) if the queried subsequence is $\pi$-free. In what follows, this test is called the \emph{sampler}.

\subsection{Efficient non-adaptive testing of monotone permutations}
\label{subsec:Newman_monotone}
$\pi$-freeness is efficiently testable non-adaptively when $\pi$ is monotone: 
For any $k > 1$, if $\pi = (1,2,\ldots,k)$ or $\pi = (k,k-1,\ldots,1)$ then $\pi$-freeness has a one-sided non-adaptive $\eps$-test making $(\eps^{-1} \log{n})^{O(k^2)}$ queries.
The queries are chosen (non-adaptively) using a dedicated algorithm called the \textsf{dyadic sampler}, that iteratively tries to ``guess'' the typical structure of a $\pi$-copy and query according to this guess. 
This settles the special case of monotone permutations, up to a factor polynomial in $\eps^{-1} \log{n}$.

\subsection{Permutations of length $3$, and an exponential gap}
\label{subsec:Newman_3}
Due to symmetry considerations, to understand the behavior of non-monotone permutations of length $3$ it is enough to consider the permutation $\pi = (1, 3, 2)$. 
For this choice of $\pi$, it is shown that:
\begin{itemize}
	\item There is an \emph{adaptive} one-sided $\eps$-test for $\pi$-freeness making $(\eps^{-1} \log{n})^{O(1)}$ queries.
	\item Any \emph{non-adaptive} $1/9$-test for $\pi$-freeness has query complexity $\Omega(\sqrt{n})$ -- an exponential separation from the adaptive case! It is interesting to note that while the lower bound in~\cite{NewmanRabinovich2017} was only obtained for one-sided tests, a similar lower bound for two-sided tests may be derived using similar (yet more technical) ideas.
	\item There is a \emph{non-adaptive} one-sided $\eps$-test for $\pi$-freeness making $\sqrt{n} (\eps^{-1}\log{n})^{O(1)}$ queries. Thus, the non-adaptive bounds for $\pi = (1,3,2)$ are tight up to an $(\eps^{-1} \log{n})^{O(1)}$ factor.
\end{itemize}

\subsection{Non-adaptive lower bounds, and separations between permutations}
\label{subsec:Newman_lower}
The $\Omega(\sqrt{n})$ non-adaptive lower bound from~\autoref{subsec:Newman_3} actually applies to \emph{any} non-monotone permutation.
Moreover, this bound can be strengthened for certain permutations: For any odd $k$, any one-sided non-adaptive test for the permutation $\pi = (1, k, k-1, 2, 3, k-2, \ldots, (k+1)/2)$ requires $\Omega(n^{1 - 2/(k+1)})$ queries.

\subsection{Discussion on previous results}
\label{subsec:discussion_previous}
The results in~\cite{NewmanRabinovich2017} essentially settle two special cases: The monotone permutations of any length, and the permutations of length $3$.
However, the general task of understanding the query complexity of optimal tests for $\pi$-freeness -- for any $\pi$ -- both in the adaptive and the non-adaptive case, has remained wide open.
The major open problems that Newman et al.\@ proposed are the following.
\begin{description}
	\item[Adaptive case] Is it true that $\pi$-freeness is testable adaptively with query complexity polylogarithmic in $n$ for \emph{any} permutation $\pi$?
	\item[Non-adaptive case] How does the structure of a permutation $\pi$ correlate with the query complexity of an optimal (one-sided) non-adaptive test for $\pi$-freeness? 
	In particular, do there exist infinitely many permutations $\pi$ for which $\pi$-freeness is testable with query complexity that is $O(n^{0.99})$?
\end{description}

\section{Our contributions}\label{sec:contributions}
In this paper, we mainly address the non-adaptive case, achieving good (though not yet complete) understanding of this case. Along the way, we discover many interesting and surprising phenomena. The details are presented in Subsections~\ref{subsec:tight_upper_bound} to~\ref{subsec:hierarchy_non_adaptive}.

Additionally, we explore how partial adaptivity helps in the problem of testing $\pi$-freeness, in particular for the permutation $\pi = (1, 3, 2)$. We observe that a hierarchy of adaptivity exists for this problem, making it the first known natural example of a single property with such an hierarchy. The main results concerning partial adaptivity are presented in~\autoref{subsec:hierarchy_adaptivity_132}.

\subsection{Tight non-adaptive upper bound}
\label{subsec:tight_upper_bound}
Our first main result is an improved upper bound for the non-adaptive one-sided case.
\begin{theorem}
	\label{thm:main_thm_upper1}
	For any permutation $\pi$ of length $k \geq 3$, $\pi$-freeness has a one-sided non-adaptive $\eps$-test whose query complexity is $O(\eps^{-\frac{1}{k-1}}n^{1 - \frac{1}{k-1}})$.
\end{theorem}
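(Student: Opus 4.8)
The plan is as follows. Since the test is one-sided it accepts every $\pi$-free sequence automatically, so it suffices to exhibit a distribution over non-adaptive query sets $Q\subseteq[n]$ with $|Q|=O_k(\eps^{-1/(k-1)}n^{1-1/(k-1)})$ such that, whenever $f$ is $\eps$-far from $\pi$-free, the restriction $f|_Q$ contains a $\pi$-copy with probability at least $2/3$. By the remarks on deletion distance in \autoref{sec:preliminaries}, such an $f$ admits a family $\mathcal{C}$ of at least $\eps n/k$ pairwise-disjoint $\pi$-copies, and the entire task reduces to making $Q$ contain all $k$ positions of some copy in $\mathcal{C}$.

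The conceptual core is a ``one coordinate for free'' phenomenon. In any copy $i_1<\dots<i_k\in\mathcal{C}$ the $k-1$ consecutive gaps sum to the span $i_k-i_1\le n$, so the smallest gap $g$ satisfies $g\le n/(k-1)$; moreover, if $[n]$ is partitioned into consecutive blocks of length $\Theta(g)$, then for one of two fixed shifts of the partition the two positions realizing $g$ lie in a common block, and hence the $k$ positions of the copy meet at most $k-1$ distinct blocks. This suggests the following query set, run in parallel over all $O(\log n)$ dyadic scales $w\in\{1,2,4,\dots,n\}$ and both shifts: partition $[n]$ into blocks of length $\Theta(w)$, place each block into $Q$ independently with probability $\beta$, and query \emph{all} positions of each chosen block. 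A copy whose smallest gap is $\Theta(w)$ then meets at most $k-1$ blocks of the corresponding partition and thus lies entirely in $Q$ with probability at least $\beta^{k-1}$; since the copies of $\mathcal{C}$ are disjoint, a second-moment computation should show that $\beta=\Theta_k((\eps n)^{-1/(k-1)})$ makes the expected number of fully-queried copies $\Omega(1)$ and some copy fully queried with probability $2/3$. The expected number of queries contributed per scale is $\beta\cdot n$, for a total of $\tilde{O}_k(\beta n)=\tilde{O}_k(\eps^{-1/(k-1)}n^{1-1/(k-1)})$; the ``$\Theta(w)$ versus $w$'' slack and the extra polylogarithmic factor would have to be removed by allocating $\beta$ per scale and implementing the partitions in a nested, dyadic fashion, but the exponent obtained is already the right one. (For the running-time claim it then suffices, as noted in the introduction, to decide $\pi$-containment on each queried subsequence via the algorithm of Guillemot and Marx~\cite{GuillemotMarx2014}.)

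I expect the real work — and the main obstacle — to lie in the regime of copies that are \emph{spread out} across $[n]$, i.e. whose smallest gap $g$ is large (say $g=\Theta(n/k)$). There the relevant blocks are themselves of size $\Theta(n/k)$: querying them in full is too expensive for a worst-case query bound (and, since only $\Theta(k)$ such blocks exist, raising $\beta$ enough to hit them reliably would blow the budget), and it also breaks the second-moment argument, because many disjoint copies may occur inside the same handful of large blocks. The plan I would pursue for this case is to recurse: conditioned on a large block being included, one needs to locate only the copy's restriction to that block, which is order-isomorphic to a strict sub-pattern of $\pi$ of length $<k$, so the same construction can be applied inside the block on a shorter pattern and a smaller domain, with the query budgets telescoping over $O(\log n)$ levels; making this recursion precise — together with the bookkeeping of how the copies of $\mathcal{C}$ distribute across scales, how the query budget is split among scales, and the second-moment estimates at each level — is where essentially all of the effort will go, and verifying that it costs only a $k$-dependent constant (rather than a polynomial) factor is the delicate point.
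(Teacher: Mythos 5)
Your first case essentially rediscovers the paper's treatment of ``clustered'' copies: the paper fixes a \emph{single} interval scale $m=(\eps n)^{1-1/(k-1)}$, includes each interval with probability $p=\Theta_k(m/\eps n)$ and additionally samples single positions at rate $p$, and a Chebyshev argument (using that disjoint copies contribute at most $m$ to any interval) shows some copy with two entries in a chosen interval has its remaining $\leq k-2$ entries sampled. Your multi-scale variant would do a similar job for small gaps, at the cost of an extra $\log n$ factor the theorem does not allow, but that is the minor issue.

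The genuine gap is exactly where you place ``essentially all of the effort'': the spread-out case, and the recursion you sketch does not close it. If a copy has at most one entry per block, finding a sub-pattern occurrence inside an included large block and other entries elsewhere does not compose into a $\pi$-copy unless all pieces are value-compatible, i.e.\ come from the \emph{same} copy of $\mathcal{C}$; the joint probability of catching all pieces of one fixed copy across the handful of large blocks is again of order $\beta^{k-1}$, which is precisely the correlation obstruction you yourself identified, and recursing on shorter patterns inside a block does not remove it (sub-copies found in different blocks generally belong to different copies of $\mathcal{C}$). The paper's resolution of this case is a different idea that crucially uses that $\pi$ is a permutation: keep the single scale $m$, discard the ``extremal'' copies whose highest (resp.\ lowest) entry is among the top (resp.\ bottom) $\eps m/6k$ values of its interval -- at least $\eps n/6k$ non-extremal copies survive -- and then run \emph{plain uniform sampling} at rate $p$. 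For a non-extremal copy one only needs to hit the $k-2$ middle entries exactly (probability $p^{k-2}$ per copy, over many disjoint copies), because the entries playing the roles of the values $1$ and $k$ may be replaced by \emph{any} lower, respectively higher, element of their own intervals, of which there are at least $\eps m/6k$ each; a union bound for $k>3$ and a second-moment argument for $k=3$ finish. This replacement is available only for the global minimum and maximum of the pattern -- which is where the permutation structure enters -- and it is the missing ingredient that your outline would need to supply (or replace) to make the spread-out case, and hence the theorem, go through.
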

This bound improves upon all previously known upper bounds for non-monotone permutations, as the query complexity it suggests is better than both the sample-based upper bound from~\autoref{subsec:Newman_sample} and the upper bound for permutations of length $3$, from~\autoref{subsec:Newman_3}.

At first glance, an upper bound of $O(\eps^{-\frac{1}{k-1}}n^{1 - \frac{1}{k-1}})$ seems 
to only be a slight improvement over the $O(\eps^{-\frac{1}{k}}n^{1 - \frac{1}{k}})$ sample-based upper bound.
However, quite surprisingly, this upper bound is tight in both $n$ and $\eps$ for any $k \geq 3$. 
In other words, the optimal non-adaptive one-sided test for some permutations is only slightly more query-efficient than the sampler!
\begin{theorem}
	\label{thm:upper_bound_is_tight}
	Let $\pi$ be a permutation of length $k \geq 3$, and suppose that $|\pi^{-1}(1) - \pi^{-1}(k)| = 1$. Then the query complexity of any non-adaptive one-sided $\eps$-test for $\pi$-freeness is $\Omega\left(\eps^{-\frac{1}{{k-1}}}n^{1 - \frac{1}{k-1}}\right)$.
\end{theorem}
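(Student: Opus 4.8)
The plan is to prove the lower bound through Yao's minimax principle. First I would record the standard reduction: a one-sided non-adaptive $\eps$-test may be assumed, without loss of generality, to be a distribution over query sets $Q\subseteq[n]$ together with the rule ``accept iff $\restr{f}{Q}$ is $\pi$-free''. Indeed, a one-sided test must accept every $\pi$-free input, and if $\restr{f}{Q}$ is $\pi$-free then it extends to a $\pi$-free sequence on $[n]$ (e.g.\ fill each non-queried entry with the value of the nearest queried entry to its left, so that any $\pi$-copy in the completion, having distinct entries, projects to a $\pi$-copy among the queried values); hence on such a view the test cannot reject, so it rejects only when it has actually caught a $\pi$-copy, and querying $Q$ and checking for a $\pi$-copy dominates it. Consequently it suffices to exhibit a distribution $\mathcal D$ supported on sequences that are $\eps$-far from $\pi$-free such that, for every fixed $Q$ with $|Q|=q=o\bigl(\eps^{-1/(k-1)}n^{1-1/(k-1)}\bigr)$,
\[
  \Pr_{f\sim\mathcal D}\bigl[\,\restr{f}{Q} \text{ contains a }\pi\text{-copy}\,\bigr] \;<\; \tfrac13 ,
\]
since then no distribution over such small $Q$'s can reject $\eps$-far inputs with probability $2/3$.

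The distribution $\mathcal D$ is obtained by planting $\Theta(\eps n)$ pairwise-disjoint $\pi$-copies at \emph{correlated random} locations, organized on a block structure with independent random shifts, numerous enough to force $\eps$-farness yet ``invisible'' to any fixed sparse $Q$. (By the reversal and complementation symmetries of $\pi$-freeness testing we may assume $\pi^{-1}(1)<\pi^{-1}(k)$, i.e.\ that the entries of $\pi$ carrying the extreme values $1$ and $k$ occupy some consecutive pair of positions $p,p+1$.) The hypothesis $|\pi^{-1}(1)-\pi^{-1}(k)|=1$ is exactly what makes the construction possible: in a $\pi$-copy the two points playing the $1$- and $k$-roles are the most constraining (one must lie below, the other above, all remaining $k-2$ points), and since nothing in $\pi$ separates positions $p$ and $p+1$, in a copy these two points may be placed inside a short window of $[n]$ with no copy-point forced between them. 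One can therefore bundle them into a single ``min--max gadget'' whose position inside its block is fixed by a fresh random shift, and lay out the other $k-2$ entries (carrying values $2,\dots,k-1$ in the order prescribed by $\pi$) across the remaining scales, so that locating a $\pi$-copy amounts to hitting, with the correct mutual alignment, one point on each of only $k-1$ (rather than $k$) ``coordinates'' -- which is why the sampler's exponent $1-1/k$ can improve to $1-1/(k-1)$ and no further for such $\pi$. A nontrivial part of the work is to choose the numerical values so that the bands are well separated and so that \emph{no} $\pi$-copy can be assembled from filler entries or by splicing together pieces of two different gadgets; this soundness check, which uses the precise shape of $\pi$ on the prefix and suffix around the min--max pair, is what forces every $\pi$-copy present in $f$ to be one of the planted ones.

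The analysis is a first-moment argument. For a fixed $Q$ one bounds
\[
  \Pr_{f\sim\mathcal D}\bigl[\,\restr{f}{Q} \text{ contains a }\pi\text{-copy}\,\bigr]
  \;\le\; \mathbb{E}_{f\sim\mathcal D}\bigl[\#\{\pi\text{-copies }S\subseteq Q\}\bigr]
  \;=\; \sum_{\substack{S\subseteq Q\\ |S|=k}} \Pr_{f\sim\mathcal D}\bigl[\restr{f}{S}\cong\pi\bigr].
\]
Since a planted copy is rigid up to the random shifts, $\Pr[\restr{f}{S}\cong\pi]$ vanishes unless $S$ is ``scale-aligned'' (its points fall in the prescribed pattern of blocks, with the min--max pair suitably positioned), and otherwise equals the probability that the relevant shift(s) fall into place. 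A short combinatorial count shows that $Q$ contains at most about $|Q|^{k-1}$ scale-aligned $k$-subsets -- the $\le|Q|^{k-1}$ coming from $k-1$ (not $k$) independent choices inside $Q$, the min--max pair being a single ``unit'' -- and, once the block sizes are fixed so that there are $\Theta(\eps n)$ disjoint planted copies in all, each such subset is realized with probability $O\!\bigl(\eps/n^{k-2}\bigr)$. Hence the expectation is $O\!\bigl(\eps\,|Q|^{k-1}/n^{k-2}\bigr)$, which is below $1/3$ precisely when $|Q|=o\bigl(\eps^{-1/(k-1)}n^{1-1/(k-1)}\bigr)$, completing the argument.

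The main obstacle is the built-in tension between the two demands on $\mathcal D$: being $\eps$-far requires $\Omega(\eps n)$ pairwise-disjoint copies, hence a constant (in $\eps$) fraction of the coordinates playing ``critical'' roles, so that each individual critical coordinate is easy to hit and the hardness must come \emph{solely} from the improbability of hitting a correctly aligned $(k-1)$-tuple of otherwise-common points. Calibrating the block sizes and shift ranges so that this alignment is as unlikely as $\Theta(\eps/n^{k-2})$ while still leaving $\Theta(\eps n)$ disjoint copies, and \emph{simultaneously} arguing soundness (no stray $\pi$-copies sneak in through fillers or cross-gadget splicing), is the delicate heart of the proof; getting the exponents to land exactly at $k-1$ and at $\eps^{-1/(k-1)}$ -- matching the upper bound of \autoref{thm:main_thm_upper1} -- is where the construction must be designed with care, and I expect that, rather than the Yao reduction or the first-moment bookkeeping, to be the real work.
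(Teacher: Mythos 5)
Your route is the same as the paper's: a Yao argument against a family of sequences obtained by planting $\Theta(\eps n)$ level-aligned, pairwise-disjoint $\pi$-copies across $k-1$ blocks, each block carrying an independent random shift, with the two adjacent positions holding the values $1$ and $k$ fused into a single block; and a first-moment count over at most $q^{k-1}$ aligned tuples, each realized with probability $O(\eps/n^{k-2})$, which gives exactly the threshold $q=\Theta\bigl(\eps^{-1/(k-1)}n^{1-1/(k-1)}\bigr)$. The paper packages this as \autoref{thm:main_thm_lower2} applied to a unique signed partition of size $k-1$ (\autoref{def:USPN}), but specialized to this $\pi$ it is the same construction and the same bookkeeping, and your reduction of one-sided non-adaptive tests to ``query $Q$, reject iff a copy is found'' is also the paper's first step.

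The genuine gap is the step you yourself call the ``delicate heart'': you never specify the design choice that guarantees the construction has \emph{no} $\pi$-copies besides the planted ones, and the per-tuple probability bound $O(\eps/n^{k-2})$ presupposes exactly that (if stray copies existed, a fixed tuple could form a copy in far more members of the family, since its shifts would no longer be pinned down). In the paper this is the uniqueness of the signed partition: inside each block the $\eps n$ repetitions are laid out \emph{monotonically}, with the direction chosen against the local trend of $\pi$ (for a singleton block holding $\pi_i$, ascending iff $\pi_i>\pi_{i+1}$ on the left side of the fused pair, symmetrically on the right; the $(1,k)$ block is arranged so its only ascending pairs are the aligned ones). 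These orientations are what drive \autoref{claim:strcutre_copy_simple_case}: for any copy $q$ one gets $\ind(i)\geq i$ for $i\leq\ell$ and $\ind(i)\leq i-1$ for $i\geq\ell+1$, forcing $q_\ell,q_{\ell+1}$ into the fused middle block, which fixes the level $r$ and then every remaining entry is the unique aligned one. Without stating the orientations and proving such a structural claim, ``no splicing across gadgets or levels'' simply does not follow -- orient one singleton block the wrong way and two consecutive copy entries can be taken from that single block at different levels, breaking the argument. A smaller inaccuracy: the value bands of different blocks cannot be ``well separated,'' since a planted copy needs the blocks' values interleaved within each level; what isolates levels is the monotone orientation together with the extreme filler values, not band separation. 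The Yao reduction and the exponent arithmetic in your sketch are correct; the missing soundness mechanism is the part that actually uses the hypothesis $|\pi^{-1}(1)-\pi^{-1}(k)|=1$ in a checkable way.
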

This improves the non-adaptive lower bounds for any permutation of this type, whose length is at least four. For the non-monotone permutations of length $3$, this results determines the correct dependence in $\eps$.

The combination of~\autoref{thm:upper_bound_is_tight} with the results in~\autoref{subsec:Newman_monotone} demonstrates a surprising phenomenon:
While the deletion distance between the permutations $\pi_1 = (1, 2, \ldots, k)$ and $\pi_2 = (k, 1, 2, \ldots, k-1)$ is only $2$, the query complexity of non-adaptive one-sided testing for $\pi_1$-freeness differs significantly from that of $\pi_2$-freeness. For $\pi_1$-freeness this query complexity is polylogarithmic in $n$, and so $\pi_1$ is the easiest to test among permutations of length $k$, while $\pi_2$-freeness has a query complexity of $\Theta\left(\eps^{-\frac{1}{k-1}}n^{1-{\frac{1}{k-1}}}\right)$, making $\pi_2$ one of those permutations that are hardest to test with non-adaptive one-sided tests.

\noindent The proofs of Theorems~\ref{thm:main_thm_upper1} and~\ref{thm:upper_bound_is_tight} appear in Sections~\ref{sec:upper} and~\ref{sec:lower}, respectively.

\subsection{Almost tight non-adaptive bounds for random permutations}
The next lower bound is perhaps even more surprising. It provides (along with~\autoref{thm:main_thm_upper1}) an almost tight bound on the query complexity of an optimal non-adaptive one-sided test for almost all permutations, implying that this query complexity is usually only marginally better than that of the sampler.
\begin{theorem}
	\label{thm:lower_bound_random_permutation}
	Let $\pi$ be picked uniformly at random from all permutations of length $k$. The following holds with probability $1-o(1)$ (where the $o(1)$ term tends to zero as $k \to \infty$):
	The query complexity of any non-adaptive one-sided $\eps$-test for $\pi$-freeness is $\Omega\left(\eps^{-\frac{1}{k-3}} n^{1 - \frac{1}{k-3}}\right)$.
\end{theorem}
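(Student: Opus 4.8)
The plan is to decouple the theorem into a deterministic reduction and a probabilistic estimate. From the proof of \autoref{thm:upper_bound_is_tight} in \autoref{sec:lower} I would first extract a parameterized statement: there is a combinatorial quantity $t(\pi)\in\{1,\dots,k-1\}$, computable from $\pi$ alone, such that any non-adaptive one-sided $\eps$-test for $\pi$-freeness must make $\Omega(\eps^{-1/t(\pi)}n^{1-1/t(\pi)})$ queries, and $t(\pi)=k-1$ precisely when the smallest and largest entries of $\pi$ occupy adjacent positions. Granting this, the theorem reduces to the purely combinatorial claim that a uniformly random permutation of length $k$ satisfies $t(\pi)\ge k-3$ with probability $1-o_k(1)$, where $o_k(1)\to 0$ as $k\to\infty$.

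For the reduction I would revisit the hard instance of \autoref{sec:lower}. For an ``adjacent-extremes'' permutation it builds, for every $n$ and $\eps$, a sequence $f$ that is $\eps$-far from $\pi$-free --- equivalently, by the deletion-distance discussion in \autoref{sec:preliminaries}, one containing $\Omega(\eps n/k)$ pairwise-disjoint $\pi$-copies --- in which every $\pi$-copy is spread across $k-1$ nested scales and, at the finest scale it meets, sits at a uniformly random location inside a block; since a one-sided tester must exhibit an actual copy, localizing one costs $\Omega(\eps^{-1/(k-1)}n^{1-1/(k-1)})$ non-adaptive queries. The work is to identify exactly which data of $\pi$ this construction consumes --- essentially an ordering of the entries of $\pi$ into a chain of $t$ ``nested groups'' compatible with a depth-$t$ recursive block decomposition --- and to check that an \emph{arbitrary} $\pi$ still supports such a chain of length $k-3$. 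I expect this to be the main obstacle: one must keep only an additive constant number of scales reserved (morally, for the two global extremes and one pivot entry) while the remaining $k-3$ scales absorb an essentially arbitrary permutation, and do so without destroying either the disjointness of the copies or their concealment from a non-adaptive query set. Note that the naive device of exploiting a single large gap $\max_i|\pi(i+1)-\pi(i)|$ does not suffice: in a random permutation the largest such gap is only $k-\Theta(\sqrt{k})$, which already yields a bound weaker than the claimed one.

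For the probabilistic claim I would exhibit, for a typical $\pi$, an explicit length-$(k-3)$ chain of the required type, determined by a bounded amount of local data of $\pi$ --- e.g.\ the positions and relative order of its few smallest and largest values together with one auxiliary pivot --- and show that this chain is valid unless $\pi$ lies in one of a short list of ``bad'' events, each of probability $o_k(1)$; the permutations that fail are close to monotone, consistent with the fact that monotone patterns are polylogarithmically testable and cannot obey such a lower bound. A first-moment (union) bound should handle the simplest bad events; if the construction requires only that \emph{some} admissible configuration exists somewhere in $\pi$, then, because distinct candidate configurations are positively correlated, I would instead bound the second moment of the number of good configurations and invoke Paley--Zygmund to push the success probability to $1-o_k(1)$. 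Combining the two parts and tracking constants gives the stated $\Omega(\eps^{-1/(k-3)}n^{1-1/(k-3)})$ bound, which matches \autoref{thm:main_thm_upper1} to within an $n^{\Theta(1/k^{2})}$ factor.
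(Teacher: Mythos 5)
Your overall strategy (a permutation-dependent lower bound parameter, plus a probabilistic argument showing a random $\pi$ admits a size-$(k-3)$ structure built from the neighbors of the extremes and one ``pivot'' pair) is the same shape as the paper's, and your probabilistic half essentially matches it: the paper takes the entangling $\left((\pi_x,\pi_{x+1}),\,(1,\pi_{i+1}),\,(\pi_{j-1},k)\right)$, where $1$ sits at position $i$, $k$ at position $j$, $\pi_{i+1}>k^{3/4}$, $\pi_{j-1}<k-k^{3/4}$, and $x$ lies strictly between $i$ and $j$ with $\pi_x<k^{3/4}$ and $\pi_{x+1}>k-k^{3/4}$; a union bound over these $O(k^{-1/4})$-probability bad events suffices (no second moment or Paley--Zygmund is needed). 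Your observation that the single-gap parameter $m(\pi)=\max_i|\pi_{i+1}-\pi_i|$ is too weak for a random permutation is also correct and consistent with the paper, which does not use it here.

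The genuine gap is in the deterministic half, and it is exactly the part you defer as ``the main obstacle.'' You never formulate, let alone prove, the combinatorial condition under which a size-$t$ decomposition of an arbitrary $\pi$ yields a valid hard instance, i.e.\ one in which the blown-up, randomly-offset sequence contains \emph{only} the planted $\pi$-copies. Without that guarantee the Yao-style counting collapses: if the construction admits stray (non-trivial) copies, a non-adaptive one-sided tester may exhibit one with far fewer queries, so no lower bound follows. This is precisely where the paper's work lies: the unique signed partition machinery of \autoref{def:USPN} feeding \autoref{thm:main_thm_lower2} (the $\Omega\left(\eps^{-1/u(\pi)}n^{1-1/u(\pi)}\right)$ bound), and the sufficient condition of \autoref{thm:entangling_to_unique} (entangling $\Rightarrow$ unique signed partition, hence \autoref{cor:entangling_lower_bound}), whose proof requires the shadowing and interleaving conditions and a careful induction on the parts. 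Relatedly, your description of the hard instance as copies ``spread across $k-1$ nested scales'' misreads the construction -- it consists of $u(\pi)$ side-by-side intervals, one per part, each carrying $\eps n$ aligned part-copies at an independent uniformly random offset -- and this matters because the shape of the construction is what dictates which decompositions of $\pi$ are admissible. So the proposal is a plausible plan whose central lemma (that a typical $\pi$ admits a size-$(k-3)$ decomposition with the no-unintended-copies property) is asserted rather than proved.
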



\subsection{General permutation-dependent non-adaptive lower bound}
\label{subsec:general_lower_bound}
Both Theorems~\ref{thm:upper_bound_is_tight} and~\ref{thm:lower_bound_random_permutation}
are actually special cases of a general permutation-dependent lower bound that we establish. This lower bound applies to any permutation, 
and depends heavily on the structure of the permutation. We believe that this lower bound is tight (up to polylogarithmic factors) for any permutation.

Interestingly, it is not clear how to describe the lower bound in a compact closed form, but given a permutation $\pi$ of length $k$, the corresponding bound can be computed in constant time (that depends only on $k$).
Later, as an important special case of this strong yet hard-to-digest bound, we provide a slightly weaker permutation-dependent lower bound that has a more natural combinatorial characterization, and is therefore easier to analyze. See~\autoref{thm:entangling_to_unique} and the resulting Corollaries~\ref{cor:entangling_lower_bound} and~\ref{cor:max_diff_lower_bound} for more details. 

\noindent In order to describe our general lower bound, we shall first provide some definitions.
\begin{definition}
	\label{def:USPN}
	Let $\pi = (\pi_1, \ldots, \pi_k)$ be a permutation of length $k$. 
	A subsequence $\sigma$ of $\pi$ is \emph{consecutive} if $\sigma = (\pi_i, \ldots, \pi_j)$ for some $1 \leq i \leq j \leq k$; in this case we write $\sigma = \pi[i,j]$.
	
	A \emph{partition} $\Lambda = (\sigma_1, \ldots, \sigma_\ell)$ of the permutation $\pi$ consists 
	of consecutive subsequences $\sigma_1 = \pi[1,r_1], \sigma_2 = \pi[r_1+1, r_2], \ldots, \sigma_\ell = \pi[r_{\ell-1}+1, k]$, and its \emph{size} is $|\Lambda| = \ell$.
	
	A \emph{signed partition} $P = (\Lambda, S)$ of the permutation $\pi$ consists of a partition $\Lambda$ as above, and a sign vector $S = (s_1, \ldots, s_\ell) \in \{+,-\}^{\ell}$. 
	For any $\sigma_i$ of length bigger than one, the corresponding sign $s_i$ must satisfy the following. If $\min \sigma_i$ appears before $\max \sigma_i$ in $\pi$, then the direction sign of $\sigma_i$ is $-$, and otherwise, the direction sign is $+$. The \emph{size} of $P$ is $|\Lambda| = |S| = \ell$.
	
	Let $P$ be a signed partition as above.
	Define $r_0 = 0$, and for any $1 \leq i \leq \ell$, denote the length of $\sigma_i$ by $k_i$ (so $\sum_{i=1}^{\ell} k_i = k$).
	Consider the sequence $f_P\colon [k^2] \to \mathbb{R}$ defined as follows. 
	For any $1 \leq j \leq k_i$ and $0 \leq m \leq k-1$, we take $f_P(r_i k + m k_i + j) = m + \pi_{r_i + j} / 2k$ for any $0 \leq i \leq \ell-1$ where $s_i$ is $+$, and $f_P(r_i k + m k_i + j) = (k - 1 - m) + \pi_{r_i + j} / 2k$ for any $\ \leq i \leq \ell-1$ where $s_i$ is a $-$. 
	
	Note that for any $0 \leq m \leq k-1$, the set of all entries $x \in [k^2]$ satisfying $m < f_P(x) < m+1$ is a $\pi$-copy. We say that such a $\pi$-copy is \emph{trivial}.
	We say that $P$ is \emph{unique} if $f_P$ does not contain non-trivial $\pi$-copies, and denote by $\mathcal{U}(\pi)$ the set of all unique signed partitions of $\pi$.
	Finally, the \emph{unique signed partition number} (USPN) of $\pi$ is $u(\pi) = \max_{P \in \mathcal{U}(\pi)} |P|$.
\end{definition}
\noindent Our lower bound for testing $\pi$-freeness is closely related to the USPN of $\pi$.
\begin{theorem}
	\label{thm:main_thm_lower2}
	Let $\pi$ be any permutation. Any non-adaptive one-sided $\eps$-test for $\pi$-freeness has query complexity $\Omega\left(\eps^{-1/u(\pi)} n^{1-1/u(\pi)}\right)$.
\end{theorem}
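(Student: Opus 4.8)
The plan is to prove the bound by a hard-distribution argument, exploiting that a one-sided tester can only reject on a genuine witness. Write $\ell\eqdef u(\pi)$. I would reduce the theorem to the following: for every $q=o_k\!\bigl(\eps^{-1/\ell}n^{1-1/\ell}\bigr)$, construct a distribution $\mathcal D$ over sequences $f\colon[n]\to\R$ that are \emph{all} \eps-far from $\pi$-free and for which
\[
  \max_{Q\subseteq[n],\ |Q|\le q}\;\probaDistrOf{f\sim\mathcal D}{f|_Q\text{ contains }\pi}\;<\;\tfrac13.
\]
This suffices: by the deletion-distance observation in \autoref{sec:preliminaries}, any partial assignment on $Q$ containing no $\pi$-copy extends to a $\pi$-free sequence (fill the remaining positions by iteratively copying a neighbour's value), so a one-sided tester must accept whenever the queried subsequence is $\pi$-free. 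Hence a non-adaptive one-sided tester making at most $q$ queries rejects $f\sim\mathcal D$ with probability at most the displayed maximum, i.e.\ less than $1/3$, contradicting soundness on the \eps-far support of $\mathcal D$; letting $q\to\eps^{-1/\ell}n^{1-1/\ell}$ proves the theorem.

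The hard instances will be randomized ``blow-ups'' of the canonical gadget $f_P$ of \autoref{def:USPN}, for a fixed unique signed partition $P=(\Lambda,S)$ of $\pi$ attaining $|P|=\ell=u(\pi)$, with consecutive blocks $\sigma_1,\dots,\sigma_\ell$ (of lengths $k_1,\dots,k_\ell$) and signs $s_1,\dots,s_\ell$. I would split $[n]$ into $\ell$ consecutive slabs $A_1<\dots<A_\ell$, one per block, and equip the construction with a family of ``levels'' (disjoint value-bands). Into slab $A_i$ we plant $\Theta_k(\eps n)$ scaled, \emph{atomic} copies of the permutation $\sigma_i$ --- each occupying $O(k)$ consecutive positions, placed at random inside $A_i$ --- with their within-band values $\pi_j/2k$ and their left-to-right ordering governed by the level structure and the sign $s_i$ exactly as the small copies in $f_P$; all positions outside the planted copies get a fixed, pattern-neutral filler. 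The parameters should be tuned so that \textbf{(i)} for each level there is a ``trivial'' $\pi$-copy, taking one planted $\sigma_i$-copy at that level from each slab, and across levels these furnish $\Theta_k(\eps n)$ \emph{pairwise-disjoint} $\pi$-copies --- making every $f\in\supp{\mathcal D}$ genuinely \eps-far from $\pi$-free --- and \textbf{(ii)} the random placement is spread out enough that, for every fixed $Q$ with $|Q|\le q$, each individual planted $\sigma_i$-copy is contained in $Q$ with probability $O_k(q/n)$ and, since distinct slabs use independent randomness, a full trivial $\pi$-copy is contained in $Q$ with probability $O_k\!\bigl(\eps n\cdot(q/n)^\ell\bigr)$.

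The heart of the argument is the \emph{structural claim}: in every $f\in\supp{\mathcal D}$ the only $\pi$-copies are the trivial ones --- equivalently, every $\pi$-copy uses exactly one full planted $\sigma_i$-copy in each slab, all at a single common level. Granting it, a fixed $Q$ of size $q$ can see a $\pi$-copy only by containing some trivial $\pi$-copy in its entirety, so by \textbf{(ii)} and a union bound over the $O_k(\eps n)$ trivial copies,
\[
  \probaDistrOf{f\sim\mathcal D}{f|_Q\text{ contains }\pi}\;\le\;\sum_{\text{trivial }\pi\text{-copies }C}\ \prod_{i=1}^{\ell}\probaDistrOf{}{Q\supseteq C^{(i)}}\;=\;O_k\!\bigl(\eps n\cdot(q/n)^{\ell}\bigr),
\]
with $C^{(i)}$ the planted $\sigma_i$-copy used by $C$ in slab $i$. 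This is below $1/3$ once $q=o_k\!\bigl(n\cdot(\eps n)^{-1/\ell}\bigr)=o_k\!\bigl(\eps^{-1/\ell}n^{1-1/\ell}\bigr)$, which is exactly what the reduction needs.

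Establishing the structural claim is where the uniqueness of $P$ --- and the main obstacle --- comes in. Given a $\pi$-copy $p_1<\dots<p_k$ in $f$, each $p_a$ lies in a planted $\sigma_i$-copy carrying a level and an internal coordinate; the at most $k$ distinct levels that occur can be ranked and matched order-preservingly with the $k$ levels of $f_P$. The key step I would carry out is to show that the map sending each $p_a$ to the corresponding entry of $f_P$ --- same slab/block, the matched level, the same internal coordinate --- is an order-isomorphism onto a $k$-subset of the domain of $f_P$. This is precisely the reason the construction must couple, inside each slab, the left-to-right order of the planted copies to their levels according to $s_i$, and must reuse the within-band values $\pi_j/2k$: both properties are needed so that the position order \emph{and} the value order of the $k$ selected entries transfer to $f_P$. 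Then the image is a $\pi$-copy of $f_P$; since $P$ is unique, $f_P$ has no non-trivial $\pi$-copy, so the image --- hence the original copy --- must be trivial. The delicate points are exactly those needed to make this order-isomorphism go through while retaining \textbf{(ii)}: how planted copies at distinct levels interleave in value within a slab; why the \emph{atomicity} of each $\sigma_i$-piece prevents a $\pi$-copy from straddling several planted copies of one slab; and the trade-off controlling how many levels to use (too few, and ``no straddling'' can fail through the internal pattern structure of $\sigma_i$; too many, and the placement becomes too concentrated for \textbf{(ii)}). With the structural claim in hand, the remaining work --- anti-concentration for the random placement and bookkeeping for extreme regimes of $\eps$ and of $k$ relative to $n$ --- is routine.
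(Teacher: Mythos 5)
Your proposal matches the paper's proof in all essentials: the same reduction of one-sided testing to finding a copy, a Yao-style hard distribution built as a randomized blow-up of the gadget $f_P$ for a maximum-size unique signed partition (one slab per block, $\Theta(\eps n)$ levels giving disjoint trivial copies, signs dictating the within-slab ordering), the same structural claim that uniqueness of $P$ forces every $\pi$-copy to be trivial via an order-isomorphism onto $f_P$ (which, note, already subsumes your separate "no straddling" worry), and the same counting $O_k(\eps n (q/n)^{u(\pi)})$ yielding the bound. The paper's only simplification over your sketch is that it places the $\eps n$ copies in each slab consecutively at a single random offset (with $-1$/$n$ fillers handled by a short lifting claim), which makes your "spread-out placement" condition and the independence across slabs immediate.
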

The USPN of a permutation obviously depends only on the permutation (and not on the input sequence size), so it can be computed in constant time, that depends only on $k$. Thus, given a permutation $\pi$ and parameters $n, \eps$, one can compute the lower bound obtained from~\autoref{thm:main_thm_lower2} in constant time.

The proof of~\autoref{thm:upper_bound_is_tight} follows from~\autoref{thm:main_thm_lower2}
by showing that for any permutation $\pi$ of length $k$ which satisfies $|\pi^{-1}(1) - \pi^{-1}(k)| = 1$, it holds that $u(\pi) = k-1$; actually these are the only permutations of length $k$ whose USPN is $k-1$, and no permutation of length $k > 1$ has USPN that equals $k$, as can be derived from results that are discussed later.

We conjecture that the lower bound of~\autoref{thm:main_thm_lower2} is tight up to a multiplicative term that is polynomial in $\eps$ and $\log{n}$.
That is, we conjecture that the USPN, $u(\pi)$, is the correct parameter of $\pi$ that determines how hard it is to non-adaptively test $\pi$-freeness using one-sided tests. 

\begin{conjecture}
	\label{open:non_adaptive}
	For any permutation $\pi$ of any length, $\pi$-freeness has a non-adaptive one-sided $\eps$-test making $\tilde{\Theta}_{\eps} \left( n^{1-1/u(\pi)} \right)$ queries.
\end{conjecture}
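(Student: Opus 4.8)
The lower‑bound side of the claimed $\tilde{\Theta}_{\eps}$ is already \autoref{thm:main_thm_lower2}, which gives $\tilde{\Omega}_{\eps}\bigl(n^{1-1/u(\pi)}\bigr)$ (its $\eps^{-1/u(\pi)}$ factor is a $\poly(1/\eps)$ term absorbed by the notation), so the whole content of the conjecture is a matching non‑adaptive one‑sided upper bound of $\tilde{O}_{\eps}\bigl(n^{1-1/u(\pi)}\bigr)$ queries. The plan is to fold the strategy behind \autoref{thm:main_thm_upper1} and the dyadic sampler of~\cite{NewmanRabinovich2017} into a single family of structured samplers indexed by the signed partitions of $\pi$ together with a vector of dyadic scales, and then to argue that an arbitrary far‑from‑$\pi$‑free input is always caught by a sampler attached to a \emph{unique} signed partition — the largest of which has only $u(\pi)$ parts.

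\textbf{Templates and their samplers.} For a signed partition $P=(\Lambda,S)$ of $\pi$ into blocks $\sigma_1,\dots,\sigma_\ell$ and a nondecreasing vector $\vec a=(a_1\le\dots\le a_\ell)$ of dyadic scales $a_i\in\{0,\dots,\lceil\log n\rceil\}$, call $(P,\vec a)$ a \emph{template}, and say that a $\pi$‑copy $j_1<\dots<j_k$ in $f$ \emph{realizes} it if the $k_i$ points forming block $\sigma_i$ lie in a dyadic window of width $\Theta(2^{a_i})$ and consecutive blocks are separated at the larger of their two scales. These constraints should be chosen precisely so that the copies arising from the blow‑up of $f_P$ (the grid of \autoref{def:USPN}) realize exactly the corresponding templates, and conversely. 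For a fixed template there is an obvious non‑adaptive sampler: at level $i$ it reads all points inside a batch of uniformly random width‑$2^{a_i}$ windows; a multi‑scale birthday‑type estimate shows that if $f$ contains $\Omega\!\bigl(\eps n/\poly\log n\bigr)$ pairwise‑disjoint $\pi$‑copies realizing $(P,\vec a)$, then $\tilde{O}_{\eps}\bigl(n^{1-1/\ell}\bigr)$ queries suffice to observe one of them with constant probability. Since $k=O(1)$, there are only $2^{O(k)}$ signed partitions and $(\log n)^{O(k)}$ templates; running all their samplers in parallel and rejecting iff any $\pi$‑copy is observed is a valid non‑adaptive one‑sided test.

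\textbf{Reducing to unique templates.} The per‑template cost $\tilde{O}_{\eps}\bigl(n^{1-1/\ell}\bigr)$ is, for $\ell=k-1$, no improvement over \autoref{thm:main_thm_upper1}; the gain must come from never having to run a sampler for a template whose partition has $\ell>u(\pi)$ parts. The structural heart of the argument is the claim: if $f$ is $\eps$‑far from $\pi$‑free then, after discarding at most a $(1-\poly(\eps,1/\log n))$‑fraction of the $\Omega(\eps n/k)$ guaranteed pairwise‑disjoint $\pi$‑copies, every surviving copy realizes a template whose signed partition is unique — hence of size $\le u(\pi)$. To prove it I would bin each copy by a dyadic scale profile (merging maximal runs of consecutive points at a common scale into blocks), which attaches to it a signed partition $P'$; a pigeonhole over the $(\log n)^{O(k)}$ profiles retains a $(\log n)^{-O(k)}$‑fraction sharing one $(P',\vec a')$. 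If $P'$ is unique, stop. Otherwise $P'$ is not unique, so $f_{P'}$ contains a \emph{non‑trivial} $\pi$‑copy; transplanting this non‑trivial copy into the common blow‑up structure of the surviving copies shows that each of them contains a $\pi$‑sub‑copy realizing a \emph{strictly coarser} template $(P'',\vec a'')$ with $|P''|<|P'|$, losing only another polylogarithmic (and $\poly(1/\eps)$) factor. Since disjoint copies have disjoint sub‑copies, iterating this descent at most $k$ times ends at unique templates of size $\le u(\pi)$ still witnessed by $\tilde{\Omega}_{\eps}(n)$ disjoint copies, so the corresponding samplers succeed and the total cost is $\tilde{O}_{\eps}\bigl(n^{1-1/u(\pi)}\bigr)$.

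\textbf{Main obstacle.} The delicate part is this descent: one must (i) fix the notion of ``realizing a template'' so that the blow‑up structure of $f_{P'}$ is genuinely forced on a dense family of copies of an \emph{arbitrary} far instance, not merely present in the idealized grid $f_{P'}$; and (ii) upgrade the qualitative ``$f_{P'}$ has a non‑trivial $\pi$‑copy'' to a quantitative ``a $\poly(\eps,1/\log n)$‑fraction of the copies with profile $P'$ also realize a strictly coarser profile,'' keeping the cumulative loss over all $O(k)$ rounds negligible against $\eps n/\poly\log n$. This is exactly where the USPN must enter as a maximum over \emph{all} unique signed partitions: the descent need not reach the coarsest partition and may legitimately halt at an intermediate unique one. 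I would finally check the two extremes as a consistency test: monotone $\pi$ has $u(\pi)=1$, the template family degenerates to the dyadic sampler of~\cite{NewmanRabinovich2017}, and the bound becomes $\poly(\eps^{-1}\log n)=\tilde{O}_{\eps}(1)$; while $\pi$ with $|\pi^{-1}(1)-\pi^{-1}(k)|=1$ has $u(\pi)=k-1$ and the construction must collapse to \autoref{thm:main_thm_upper1}.
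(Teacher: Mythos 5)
This statement is \autoref{open:non_adaptive}, which the paper explicitly leaves as a \emph{conjecture}: the authors prove only the lower-bound side (via \autoref{thm:main_thm_lower2}) and state that whether a matching non-adaptive one-sided upper bound of $\tilde{O}_{\eps}\bigl(n^{1-1/u(\pi)}\bigr)$ exists is open (the best known upper bound is the $O\bigl(\eps^{-1/(k-1)}n^{1-1/(k-1)}\bigr)$ of \autoref{thm:main_thm_upper1}, independent of $u(\pi)$). So there is no proof in the paper to compare against, and your proposal does not supply one: its entire load-bearing step is exactly the open content.

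Concretely, the gap is in your ``reducing to unique templates'' descent. The fact that a non-unique signed partition $P'$ forces a non-trivial $\pi$-copy in the idealized grid $f_{P'}$ is a statement about that specific grid; it does not transfer to an arbitrary $\eps$-far input whose disjoint copies merely share the dyadic scale profile attached to $P'$. An adversarial input can place $\Omega(\eps n)$ disjoint copies with a scale profile of size $\ell>u(\pi)$ while filling the surrounding entries with values that create no additional (coarser-scale) copies at all, because nothing in your notion of ``realizing a template'' forces the repeated blow-up structure of $f_{P'}$ onto the actual input --- this is precisely your obstacle (i), and it is not a technicality but the crux. Without it, the transplanting argument has nothing to transplant, the claimed strictly coarser template $(P'',\vec a'')$ need not be realized by any copy, and your test is left running the sampler for $P'$ at cost $\tilde{O}_{\eps}\bigl(n^{1-1/\ell}\bigr)$ with $\ell$ possibly as large as $k-1$, i.e., no better than \autoref{thm:main_thm_upper1}. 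Ruling out such adversarial arrangements (or showing they always leak cheaper-to-find copies) is exactly what would be needed to prove the conjecture, and neither the paper nor your sketch does this; obstacle (ii), making the descent quantitative, is secondary to this structural issue. The lower-bound half and the consistency checks at the extremes ($u(\pi)=1$ and $u(\pi)=k-1$) are fine, but they were already known.
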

A multiplicative term of $\log n$ is necessary to make~\autoref{open:non_adaptive} hold for monotone permutations $\pi$ (for which $u(\pi) = 1$), since there is a lower bound of $\Omega(\log n)$ for testing monotonicity~\cite{Fischer2004}, that can be generalized to testing $\pi$-freeness for any permutation $\pi$ of length at least $2$. 

For non-monotone permutations, 
an even stronger conjecture can be given, namely that the number of queries required by a non-adaptive one-sided $\eps$-test is $\Theta_{\eps}(n^{1 - 1/u(\pi)})$ (without the polylogarithmic term in $n$).

\subsection{Combinatorial characterizations related to the general lower bound}
\label{subsec:character_lower_bound}
Motivated by~\autoref{thm:main_thm_lower2}, it is desirable to find natural necessary and sufficient combinatorial conditions for uniqueness of a signed partition of a given permutation $\pi$.
Our next main result provides a useful sufficient condition. For the result, we need some more
definitions. 
\begin{definition}
 Let $\sigma = \pi[x,y]$ and $\sigma' = \pi[x',y']$ of $\pi$ be disjoint consecutive subsequences of length at least two, and let $x' \leq m, M \leq y'$ be the indices satisfying $\pi_m = \min \pi[x', y']$ and $\pi_M = \max \pi[x', y']$. 
We say that $\sigma'$ is \emph{shadowed} with respect to $\sigma$ if one of the following holds.
	\begin{itemize}
		\item $x' > y$, $m < M$, and $\pi_{x'-1} > \pi_M$.
		\item $x' > y$, $m > M$, and $\pi_{x'-1} < \pi_m$.
		\item $y' < x$, $m < M$, and $\pi_{y'+1} < \pi_m$.
		\item $y' < x$, $m > M$, and $\pi_{y'+1} > \pi_M$.
	\end{itemize} 
	
	An \emph{entangling} of $\pi$ is a collection $E = (\sigma_1, \ldots, \sigma_t)$ of pairwise disjoint consecutive subsequences of $\pi$, where $\sigma_i = \pi[a_i,b_i]$ for any $1 \leq i \leq t$, satisfying the following.
	\begin{itemize}
		\item For any $2 \leq j \leq t$, the following holds. Either $a_j > b_1$ and $\min_{i < j} \min{\sigma_i} < \pi_{a_j} < \max_{i < j} \max {\sigma_i}$, or $b_j < a_1$ and $\min_{i < j} \min{\sigma_i} < \pi_{b_j} < \max_{i < j} \max {\sigma_i}$.
		\item For any $2 \leq j \leq t$, $\sigma_j$ is not shadowed with respect to $\sigma_1$.
		\item For any $1 \leq \ell \leq k$, there exists $\sigma \in E$ such that $\min{\sigma} \leq \ell \leq \max{\sigma}$.
	\end{itemize}
	For the above entangling $E$ of $\pi$, define $\Lambda(E)$ as the partition of $\pi$ in which $\sigma_1, \ldots, \sigma_t$ serve as parts, and any element of $\pi$ not in $\bigcup_{i=1}^{t} \sigma_i$ has its own part.
	Denote $d(E) = |\Lambda(E)| = k - \sum_{\sigma \in E} (|\sigma|-1)$. Finally, the \emph{entangling number} of $\pi$ is $d(\pi) = \max_{E} \{ d(E) \}$ where $E$ ranges over all
	valid entanglings of $\pi$.
\end{definition}

\begin{theorem}
	\label{thm:entangling_to_unique}
	For any permutation $\pi$ and entangling $E$ of $\pi$, there exists $S \in \{+,-\}^{|E|}$ for which the signed partition $P = (\Lambda(E), S)$ is unique.
	In particular, $d(\pi) \leq u(\pi)$ for any permutation $\pi$.
\end{theorem}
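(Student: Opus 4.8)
The plan is, given an entangling $E=(\sigma_1,\dots,\sigma_t)$, to exhibit an explicit sign vector and verify that the resulting sequence $f_P$ (with $P=(\Lambda(E),S)$) has no non-trivial $\pi$-copy; granting this, the ``in particular'' clause is immediate, since a unique signed partition $P=(\Lambda(E),S)$ has $|P|=|\Lambda(E)|=d(E)$, so $u(\pi)\ge d(E)$ for every valid entangling, and hence $u(\pi)\ge d(\pi)$. The signs are chosen as follows: each entangling part $\sigma_i$ must receive its \emph{direction sign} (forced by the definition of a signed partition), and each singleton part receives a sign according to a fixed rule depending on which side of $\sigma_1$ it lies on and on the local shape of $\pi$ around it, the precise rule being dictated by the needs of the final step below.

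It helps to picture $f_P$ concretely: it is a concatenation of $\ell:=|\Lambda(E)|=d(E)$ \emph{blocks}, where block $c$ is a vertical stack of $k$ order-isomorphic copies of the $c$-th part of $\Lambda(E)$, sitting at effective levels $0,1,\dots,k-1$, read in increasing order of level if the sign of that part is $+$ and in decreasing order if it is $-$; the trivial copies are obtained by taking, at a fixed level, the corresponding copy in each block. Now suppose $f_P$ has a $\pi$-copy $x_1<\dots<x_k$. Each $x_a$ lies in a unique block $\beta(a)$ at a unique effective level $\lambda(a)=\lfloor f_P(x_a)\rfloor$, with a fine rank inside its segment, and order-isomorphism to $\pi$ is exactly the statement that $a\mapsto(\lambda(a),\text{fine rank of }x_a)$, compared lexicographically, is order-isomorphic to $a\mapsto\pi_a$. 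I would then prove two structural claims. \textbf{(i)} For each entangling part $\sigma_i$, all of its coordinates land in one segment, in fact inside the block corresponding to $\sigma_i$ itself. Confinement to a single block uses that the parts of $\Lambda(E)$ are themselves a partition of $\pi$ into consecutive pieces, so spreading $\sigma_i$'s coordinates across several consecutive blocks would force $\sigma_i$'s pattern to decompose in a way incompatible with the entangling; and once confined to $\sigma_i$'s own block -- whose sign equals $\sigma_i$'s direction sign -- the minimum and maximum coordinates of $\sigma_i$ get sandwiched onto a common level (the block's sign makes the effective level weakly monotone along the reading order, order-isomorphism makes it weakly monotone in $\pi$-value, and the definition of the direction sign puts these two monotonicities in conflict unless the level is constant on $\sigma_i$). \textbf{(ii)} All these per-part segments, and all singleton coordinates too, lie at one common effective level $\lambda^\ast$ -- whence the copy is the trivial copy at level $\lambda^\ast$, a contradiction.

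For claim (ii) I would assume without loss of generality that $\sigma_1$'s segment is at level $\lambda^\ast$ and process $\sigma_2,\dots,\sigma_t$ in the entangling order. Condition~(1) forces the boundary coordinate of $\sigma_j$ -- the one on the side facing $\sigma_1$ -- to have value strictly between two coordinates already locked at level $\lambda^\ast$ (the running minimum and maximum), which locks $\sigma_j$ at level $\lambda^\ast$ as well; the only way this can fail is that $\sigma_j$ ``folds over'' the element of $\pi$ immediately outside it on the side toward $\sigma_1$, and condition~(2), that $\sigma_j$ is not shadowed with respect to $\sigma_1$, is precisely the hypothesis ruling this out. Finally condition~(3), that the value intervals $[\min\sigma_i,\max\sigma_i]$ cover $[k]$, together with the now-locked levels and the singleton-sign convention, pins every singleton coordinate to $\lambda^\ast$. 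I expect claim (ii) to be the main obstacle: each clause of the entangling definition must be matched to a specific bad configuration that it, and only it, excludes, and this must be done in tandem with fixing the singleton signs so that condition~(3) really does lock the singletons -- already the permutation $\pi=(1,3,2)$ with $\sigma_1=(\pi_1,\pi_2)$ shows that the ``wrong'' singleton sign produces a genuine non-trivial copy, so this choice is substantive, not cosmetic. Claim (i) and the reduction to effective levels and fine ranks should by contrast be fairly mechanical.
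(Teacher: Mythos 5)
Your overall strategy is recognizably the paper's: put the forced direction signs on the entangling parts, choose side-dependent signs for the singletons, view $f_P$ as blocks of levelled copies, and then lock any alleged $\pi$-copy to a single level by processing the parts in the entangling order, with the first entangling condition doing the value-sandwiching, non-shadowing excluding the ``fold over,'' and the covering condition finishing; your $(1,3,2)$ example correctly shows the singleton signs are substantive. The problem is the ordering of your two structural claims. Claim (i) --- that every entangling part's coordinates lie in that part's own block (and at one level there) --- is asserted as a standalone first step, justified only by the remark that spreading across blocks would make the part ``decompose in a way incompatible with the entangling.'' That is not an argument, and in fact this confinement cannot be established before the level-locking of your claim (ii). In the paper it is obtained in two interleaved pieces: a one-sided ``drift'' statement (Claim~\ref{claim:strcutre_copy_general_case}), saying copy coordinates can only move toward the first part's block, whose proof is precisely where the explicit singleton-sign rule and the non-shadowing hypothesis are used; and, for every part other than the first, the opposite inequality comes only from the induction of Claim~\ref{claim:characterize_copies_general}, where the entangling condition on the boundary value plus a counting argument (there are exactly $\pi_{j''}-\pi_{j'}-1$ entries of $f_P$ with value strictly between two already-locked values) pins the boundary coordinate to its exact value at the common level, hence to its own block, and only then does the drift bound confine the rest of the part, with the direction sign collapsing it to one level. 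As written, your (i)-then-(ii) decomposition is circular: (i) presupposes exactly the level information that (ii) is supposed to produce.

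The second gap is the singleton signs. You never state the rule (``dictated by the needs of the final step''), and you assign it the wrong job: you say the covering condition ``together with the singleton-sign convention'' pins the singleton coordinates at the end. In the paper the final step needs no signs at all: once the entangling parts are locked at level $r$, the covering condition guarantees that the values $1$ and $k$ of $\pi$ belong to entangling parts, so $f_P(q_{\pi^{-1}(1)})=r+\frac{1}{2k}$ and $f_P(q_{\pi^{-1}(k)})=r+\frac{1}{2}$, forcing every coordinate of the copy into $(r,r+1)$; since only $k$ entries of $f_P$ lie there, the copy is trivial, singletons included. The singleton signs do their work earlier, inside the drift claim, to prevent a singleton's position and its neighbour on the side toward the first part from both landing in the singleton's block (the paper's rule: for a singleton left of the first part, $+$ iff $\pi_{j_i}>\pi_{j_i+1}$, and symmetrically on the right). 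Without stating this rule and placing it in the confinement argument, the sketch cannot be completed. The ``in particular'' deduction $d(\pi)\leq u(\pi)$ and the level/fine-rank reformulation of order-isomorphism are fine.
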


The following is an immediate yet important corollary of Theorems~\ref{thm:main_thm_lower2} and~\ref{thm:entangling_to_unique}.
\begin{corollary}
	\label{cor:entangling_lower_bound}
	For any permutation $\pi$, any non-adaptive one-sided $\eps$-test for $\pi$-freeness must make $\Omega\left(\eps^{-1/d(\pi)} n^{1-1/d(\pi)}\right)$ queries.
\end{corollary}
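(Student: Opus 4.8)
The plan is to simply chain the two results that precede the corollary. By \autoref{thm:entangling_to_unique}, every permutation $\pi$ satisfies $d(\pi) \le u(\pi)$, and by \autoref{thm:main_thm_lower2} any non-adaptive one-sided $\eps$-test for $\pi$-freeness makes $\Omega\!\left(\eps^{-1/u(\pi)} n^{1-1/u(\pi)}\right)$ queries. Hence the only thing to verify is that replacing the exponent $u(\pi)$ by the smaller quantity $d(\pi)$ cannot strengthen the bound, i.e. that $\eps^{-1/u(\pi)} n^{1-1/u(\pi)} \ge \eps^{-1/d(\pi)} n^{1-1/d(\pi)}$.

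To do this I would rewrite the lower-bound expression as $\eps^{-1/t} n^{1-1/t} = n \cdot (\eps n)^{-1/t}$ and use the standing assumption $\eps n \ge 1$ (if $\eps n < 1$ then no sequence is $\eps$-far from $\pi$-freeness and there is nothing to prove). Under this assumption the map $t \mapsto (\eps n)^{-1/t}$ is non-decreasing for $t \ge 1$: increasing $t$ decreases $1/t$, hence increases the exponent $-1/t$ of the quantity $\eps n \ge 1$. Since $1 \le d(\pi) \le u(\pi)$, this yields $\eps^{-1/d(\pi)} n^{1-1/d(\pi)} \le \eps^{-1/u(\pi)} n^{1-1/u(\pi)}$, so the $\Omega\!\left(\eps^{-1/u(\pi)} n^{1-1/u(\pi)}\right)$ lower bound of \autoref{thm:main_thm_lower2} immediately implies the claimed $\Omega\!\left(\eps^{-1/d(\pi)} n^{1-1/d(\pi)}\right)$ bound.

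There is essentially no obstacle here: the corollary is pure bookkeeping, recording that one may substitute the more combinatorially tractable estimate $d(\pi)$ for the less transparent parameter $u(\pi)$, at the price of a possibly weaker bound. All of the actual content lies in \autoref{thm:main_thm_lower2} (building the hard instances from a unique signed partition) and in \autoref{thm:entangling_to_unique} (converting an entangling into a unique signed partition).
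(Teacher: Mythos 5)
Your proof is correct and takes essentially the same route as the paper, which gives no separate argument and simply records the corollary as an immediate consequence of chaining \autoref{thm:entangling_to_unique} ($d(\pi)\leq u(\pi)$) with \autoref{thm:main_thm_lower2}, the monotonicity of $t\mapsto n(\eps n)^{-1/t}$ being the only (easy) point to check. One small quibble: your parenthetical about $\eps n<1$ is not quite right (a sequence containing a $\pi$-copy is still $\eps$-far in that regime, since its distance to $\pi$-freeness is at least $1$), but this degenerate case is excluded by the paper's standing assumptions on $\eps$ and $n$ in the proof of \autoref{thm:main_thm_lower2}, so it does not affect your argument.
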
 
A useful simple special case of~\autoref{cor:entangling_lower_bound} is the following.
\begin{corollary}
	\label{cor:max_diff_lower_bound}
For a permutation $\pi = (\pi_1, \ldots, \pi_k)$, let $m(\pi) = \max_{1\leq i \leq k-1} |\pi_{i+1} - \pi_{i}|$. Then $m(\pi) \leq d(\pi)$, and in particular, any non-adaptive one-sided $\eps$-test for $\pi$-freeness must make $\Omega\left(\eps^{-1/m(\pi)} n^{1-1/m(\pi)}\right)$ queries.
\end{corollary}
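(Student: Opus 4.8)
The second claim follows from the first: substituting $d(\pi)\ge m(\pi)$ into \autoref{cor:entangling_lower_bound} gives an $\Omega(\eps^{-1/d(\pi)}n^{1-1/d(\pi)})$ lower bound, and since $\eps n\ge 1$ (otherwise no sequence is $\eps$-far from $\pi$-freeness) we have $(\eps n)^{1/m(\pi)-1/d(\pi)}\ge 1$, i.e.\ $\eps^{-1/d(\pi)}n^{1-1/d(\pi)}\ge\eps^{-1/m(\pi)}n^{1-1/m(\pi)}$. So the plan is to prove $m(\pi)\le d(\pi)$; writing $m\eqdef m(\pi)$ and recalling that $d(E)=k-\sum_{\sigma\in E}(\dabs\sigma-1)$, it suffices to construct one valid entangling $E$ of $\pi$ with $\sum_{\sigma\in E}(\dabs\sigma-1)\le k-m$.

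Fix $i$ with $\dabs{\pi_{i+1}-\pi_i}=m$. Since $m(\pi)$, the notion of entangling, and the ``shadowed'' relation are all invariant under replacing $\pi$ by $(k+1-\pi_1,\dots,k+1-\pi_k)$ or by $(\pi_k,\dots,\pi_1)$, we may assume $\pi_i<\pi_{i+1}$, and we will use these symmetries to cut down the case analysis. Take $\sigma_1=\pi[i,i+1]$ as the root of the entangling: its value interval is $[\pi_i,\pi_{i+1}]$, containing $m+1$ values, and $\dabs{\sigma_1}-1=1$. Now grow $E$ greedily, maintaining the invariant that the union of the value intervals of the parts chosen so far is a single interval $[\ell,h]\supseteq[\pi_i,\pi_{i+1}]$ -- so $\ell,h$ are exactly the running minimum and maximum values referenced by the anchoring condition, and note $h-\ell\ge m$ throughout. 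As long as $[\ell,h]\ne[1,k]$ we push a boundary outward, say we cover some value $v>h$ (the case $\ell>1$ being symmetric): locate the position $q$ of such a $v$ together with a neighbouring position $p$, both unused and lying on the same side of $\sigma_1$, with $\pi_p\in(\ell,h)$, and add the pair $\pi[\min(p,q),\max(p,q)]$, whose value interval $[\pi_p,v]$ glues $v$ onto $[\ell,h]$; this covers at least one new value at cost $\dabs\sigma-1=1$. When no such pair is available, the obstruction is that $v$ sits at an end of $\pi$, or immediately next to an endpoint of $\sigma_1$ or of an already-placed part; in the latter case we instead lengthen that part by one position to absorb $q$, covering $v$ at unit extra cost.

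When the process halts, the value intervals cover $[1,k]$, so $E$ is a valid entangling, and the bookkeeping is immediate: $\sigma_1$ covers $m+1$ values for cost $1$, every later step covers at least one fresh value for cost $1$, and there are at most $k-(m+1)$ later steps, so $\sum_{\sigma\in E}(\dabs\sigma-1)\le 1+(k-m-1)=k-m$, yielding $d(E)\ge m=m(\pi)$. The crux -- and the step I expect to be the main obstacle -- is proving that the greedy move always succeeds and stays legal. Concretely one must check: (i) whenever $h<k$, looking at the boundary in position-space between values $\le h$ and values $>h$ (which is nonempty), the Lipschitz bound $\dabs{\pi_{j+1}-\pi_j}\le m$ together with $h-\ell\ge m$ forces every position $p$ on the ``$\le h$'' side of a crossing to satisfy $\pi_p\ge\ell+1$; hence either $\pi_p<h$, giving a valid anchoring pair, or $\pi_p=h$, in which case $p$ -- the unique position realizing the value $h$ -- is necessarily the \emph{outer} endpoint of the part that attains $h$, which licenses the lengthening escape (the analogous statement holds for $\ell$ and downward pushes); (ii) the pair can be chosen so as not to be shadowed with respect to $\sigma_1$, exploiting the freedom among candidate pairs (on either side of $\sigma_1$, and among different uncovered values $v$) and the position-reversal symmetry so that the value immediately inside the pair is not its own minimum or maximum; and (iii) a lengthening step is harmless: it is always carried out on a part's outer endpoint, so it leaves that part's inner endpoint untouched (anchoring still holds) and only widens the open value intervals appearing in every anchoring condition, while a short monotonicity check -- using that the absorbed value is the new overall extremum -- shows the non-shadowing conditions are preserved too. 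Once this case analysis is pushed through, the counting above delivers $m(\pi)\le d(\pi)$, and with it the claimed lower bound.
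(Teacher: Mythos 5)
Your reduction to \autoref{cor:entangling_lower_bound} and your bookkeeping are fine ($\sigma_1$ costs $1$ and covers $m+1$ values, every later cost-$1$ move covers at least one new value, so the total cost is at most $k-m$ and $d(E)\geq m$); the paper itself states this corollary without proof, so the entire content is exactly the step you defer: showing that a legal cost-$1$ move always exists. As written, that step is not established ("once this case analysis is pushed through"), and two of the auxiliary claims you offer toward it are false as stated. For (iii): extending a part by a new overall maximum is \emph{not} a monotone operation on the shadowing condition. If a right-side part currently has its maximum before its minimum, non-shadowing reads $\pi_{x'-1}>\min\sigma$; after absorbing a new maximum $v$ at the outer end the relevant clause flips to requiring $\pi_{x'-1}<v$, and if the element just left of the part is a still-uncovered value larger than $v$, the extension produces a shadowed part. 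For (i): if $\pi_p=h$ then $p$ indeed lies in the part attaining the current maximum, but nothing forces $p$ to be that part's \emph{outer} endpoint; with your free choice of $v$ and $q$, the uncovered position $q$ may sit on the inner side of $p$, and absorbing it would move the part's anchor onto a value exceeding $h$, destroying the anchoring condition. Similarly, your (ii) appeal to "freedom among candidate pairs" does not by itself rule out that every crossing pair with correctly oriented anchor is shadowed.

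The missing idea that makes the greedy go through is to always attack the \emph{innermost} uncovered extreme position on one side of $\sigma_1$: e.g.\ when pushing upward on the right, take $q=\min\{j>i+1:\pi_j>h\}$ and $p=q-1$. Minimality does all the work at once. First, $\pi_p\in(\ell,h]$ by the Lipschitz bound and $h-\ell\geq m$; if $\pi_p=h$ then $p$ is automatically used (the position of value $h$ lies in the part attaining the maximum), and since $q$ is unused and lies outward of $p$, $p$ is the outer endpoint of its part, so the unit extension is legal and leaves the anchor untouched; if $p$ is unused then $\pi_p\in(\ell,h)$ and the pair $\pi[p,q]$ has its anchor on the inner side as required. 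Second, minimality guarantees that every position strictly between $\sigma_1$ and $q$ has value at most $h$ (at least $\ell$ for downward moves), and this is precisely what the non-shadowing check needs after either move: for the new pair, the element just left of it has value at most $h<\pi_q$; for the extension, the element just outside the extended part's inner endpoint also lies weakly inside the current range, so it cannot exceed the absorbed extremum (nor undercut an absorbed minimum). Re-verifying the conditions after each modification (anchors never move, earlier ranges only widen, shadowing depends only on the part and its immediate outside neighbour) then closes the induction and yields $m(\pi)\leq d(\pi)$; without pinning down this choice, your greedy step can genuinely fail, so the argument has a real gap rather than a routine omission.
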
 

Note that a permutation $\pi$ of length $k$ with $|\pi^{-1}(1) - \pi^{-1}(k)| = 1$  satisfies $m(\pi) = k-1$, so~\autoref{thm:upper_bound_is_tight} is actually a special case of~\autoref{cor:max_diff_lower_bound}.

~\autoref{thm:lower_bound_random_permutation} follows from~\autoref{cor:entangling_lower_bound} by observing that $d(\pi) \geq k-3$ holds w.h.p.\@ for a random permutation $\pi$ of length $k$; actually, both $d(\pi)$ and $u(\pi)$ are concentrated in the values $k-2$ and $k-3$, as $u(\pi) = k-1$ holds with probability $O(1/k)$.

There exist permutations $\pi$ for which $d(\pi) < u(\pi)$. In particular, partitions with a unique signed form are not necessarily entanglings, so the sufficient condition for uniqueness from~\autoref{thm:entangling_to_unique} is not a necessary one. For example, one can verify that $\pi = (4, 1, 2, 5, 6, 3)$ satisfies $d(\pi) = 3$ but $u(\pi) = 4$; a unique signed partition of size $4$ for $\pi$ is $(\Lambda, S)$ where $\Lambda = ((4, 1), 2, 5, (6, 3))$ and $S = (+, -, -, +)$). 

The following necessary condition for being a unique signed partition is easy to prove.
\begin{observation}
	\label{obs:neccesary_condition_unique}
	Let $\pi$ be a permutation of length $k$ and let $P = (\Lambda, S)$ be a unique signed partition for $\pi$. Then $\Lambda$ satisfies the following conditions.
	\begin{itemize}
		\item For any $1 \leq \ell \leq k$ there exists $\sigma \in \Lambda$ of length bigger than one, such that $\min{\sigma} \leq \ell \leq \max{\sigma}$.
		\item Let $\sigma \in \Lambda$ with $|\sigma| > 1$. If $\max \sigma < k$ then there exists $\sigma' \in \Lambda$ satisfying $\min \sigma' < \max \sigma < \max \sigma'$, and similarly,
		if $\min \sigma > 1$ then there exists $\sigma' \in \Lambda$ satisfying $\min \sigma' < \min \sigma < \max \sigma'$.
	\end{itemize}
\end{observation}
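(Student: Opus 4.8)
The plan is to prove the contrapositive for each of the two conditions: if $\Lambda$ violates a listed condition, I will exhibit a non-trivial $\pi$-copy inside $f_P$, contradicting uniqueness of $P$ (throughout one may assume $k\ge 2$, the case $k=1$ being degenerate). Both constructions will be powered by a single combinatorial statement, which I will call the \emph{clean cut lemma}: if there is a threshold $t\in\{1,\dots,k-1\}$ such that no part of $\Lambda$ is \emph{straddled} by $t$ — meaning no $\sigma\in\Lambda$ contains both a value $\le t$ and a value $>t$ — then $f_P$ contains a non-trivial $\pi$-copy. With this lemma in hand, the observation reduces to producing such a threshold $t$ from each failure mode.

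To prove the clean cut lemma I will use the structure of $f_P$: each part $\sigma_i$ occurs once at every ``value-level'' $0,1,\dots,k-1$, and the sign $s_i$ only governs the left-to-right order of these $k$ occurrences within the $i$-th block, which is immaterial once we select exactly one level per part. Given the clean cut $t$, assign to each part $\sigma_i$ the level $v_i=1$ if every value of $\sigma_i$ exceeds $t$, and $v_i=0$ otherwise — well defined precisely because $t$ straddles no part. Selecting from every part its $|\sigma_i|$ entries at level $v_i$ yields, read left to right, the value sequence $\bigl(v_{\beta(1)}+\pi_1/2k,\dots,v_{\beta(k)}+\pi_k/2k\bigr)$, where $\beta(a)$ is the index of the part containing the $a$-th entry of $\pi$. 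The routine verification is that this subsequence is order-isomorphic to $\pi$: two selected entries from the same part, or from distinct parts of equal level, are compared exactly as the corresponding entries of $\pi$; and for two entries from parts of different level, the one in the level-$1$ part is larger, which agrees with $\pi$ since that part's values all exceed $t$ while the other part's values are all $\le t$. The copy is non-trivial because $1\le t$ forces the value $1$ into a level-$0$ part and $k>t$ forces the value $k$ into a level-$1$ part, so the copy uses entries of $f_P$ from both $(0,1)$ and $(1,2)$, whereas every trivial copy lies inside a single unit interval.

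It then remains to read off a clean cut from each violation. If the first condition fails at some $\ell\le k-1$, then no long part $\sigma$ has $\min\sigma\le\ell\le\max\sigma$; but any part straddled by $\ell$ contains distinct values $a\le\ell<\ell+1\le b$, hence is such a long part — so $t=\ell$ is a clean cut. If the first condition fails at $\ell=k$, then $k$ lies in a singleton part, and $t=k-1$ works, since straddling $k-1$ would force a part to contain $k$ together with a smaller value. For the second condition: if a long $\sigma$ has $M:=\max\sigma<k$ and no $\sigma'$ satisfies $\min\sigma'<M<\max\sigma'$, then $t=M$ is a clean cut — a part $\sigma''$ straddled by $M$ is long, differs from $\sigma$ (whose values never exceed $M$), has $\max\sigma''>M$, and cannot have $\min\sigma''=M$ because $M\in\sigma$, hence $\min\sigma''<M<\max\sigma''$, a contradiction; the case $\min\sigma>1$ with no witnessing $\sigma'$ is symmetric via $t=\min\sigma-1$. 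I expect the main obstacle to be the clean cut lemma — specifically the bookkeeping showing that the re-leveled subsequence is genuinely order-isomorphic to $\pi$ and is distinct from all $k$ trivial copies — after which the extraction of clean cuts from the two conditions is a short case analysis.
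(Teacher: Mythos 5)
Your proposal is correct. Note that the paper states this observation without a proof (it is introduced only as ``easy to prove''), so there is no official argument to compare against. Your clean cut lemma is exactly the argument that the construction of $f_P$ invites: pick a threshold $t\in\{1,\dots,k-1\}$ straddled by no part, select the level-$0$ copy of each part whose values are all at most $t$ and the level-$1$ copy of each part whose values all exceed $t$, and observe that the resulting subsequence is order-isomorphic to $\pi$ (equal-level comparisons are handled by the offsets $\pi_a/2k$, cross-level comparisons by the no-straddling hypothesis, and the signs are irrelevant once a single level per block is fixed) and non-trivial, since the value $1$ lands at level $0$ while $k$ lands at level $1$. The extraction of a clean cut from each violated condition is also sound, including the boundary case $\ell=k$ via $t=k-1$ and the exclusion of $\min\sigma''=M$ because $M$ already lies in $\sigma$. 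The only caveat, which you correctly flag, is the degenerate case $k=1$, where the statement itself has no content.
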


The size $|\Lambda|$ of the largest partition $\Lambda$ of $\pi$ satisfying the conditions in~\autoref{obs:neccesary_condition_unique} might be bigger than the USPN of $\pi$. For example, the partition $\Lambda = ((5, 1), 3, 2, 7, 6, (8, 4))$ of the permutation $\pi = (5, 1, 3, 2, 7, 6, 8, 4)$ satisfies these conditions, but one can verify that it is not a unique signed partition. By~\autoref{obs:neccesary_condition_unique}, none of the other partitions of $\pi$ of size $6$ have a unique signed form, so $u(\pi) < 6 = |\Lambda|$. In fact, $u(\pi) = 5$ in this case, as $((1,3), (2,7), (6,8))$ is an entangling.

\subsection{Permutation-dependent hierarchy in the non-adaptive case}
\label{subsec:hierarchy_non_adaptive}
The statement of~\autoref{open:non_adaptive} suggests that there is a permutation-dependent hierarchical behavior of the query complexity for one-sided non-adaptive testing of $\pi$-freeness as a function of $\pi$. 
The following result verifies that such an hierarchical structure indeed exists.
\begin{theorem}
	\label{thm:hierarchy_nonadaptive}
	For any two positive integers $k \geq 2$ and $1 \leq \ell \leq k-1$, there is a permutation $\pi$ of length $k$ with $m(\pi) = \ell$, for which the optimal non-adaptive $\eps$-test makes $\tilde{\Theta}_{\eps} \left(n^{1-1/\ell} \right)$ queries, where the $\tilde{\Theta}_\eps$ notation hides a term polynomial in $\log{n}$ and $\eps$.
\end{theorem}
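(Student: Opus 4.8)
The plan is to exhibit, for each pair $(k,\ell)$ with $1 \le \ell \le k-1$, a concrete permutation $\pi = \pi_{k,\ell}$ of length $k$ satisfying $m(\pi) = \ell$, and then prove matching upper and lower bounds $\tilde\Theta_\eps(n^{1-1/\ell})$ for one-sided non-adaptive testing of $\pi$-freeness. The lower bound will be essentially free: once we have $m(\pi) = \ell$, Corollary~\ref{cor:max_diff_lower_bound} immediately gives the $\Omega(\eps^{-1/\ell} n^{1-1/\ell})$ lower bound. So the real content is (i) choosing $\pi$ and (ii) proving the upper bound $\tilde O_\eps(n^{1-1/\ell})$, which must beat the generic sampler bound $O(\eps^{-1/k}n^{1-1/k})$ whenever $\ell < k$.

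For the construction, the natural idea is to ``pad'' a hard short pattern with a long monotone run: take a permutation on the first $\ell+1$ symbols whose consecutive differences reach $\ell$ — e.g. one with $|\pi^{-1}(1) - \pi^{-1}(\ell+1)| = 1$ among those $\ell+1$ symbols, so the maximal consecutive gap is exactly $\ell$ — and then append $(\ell+2, \ell+3, \dots, k)$ as an increasing tail placed at the end. One must check that appending a monotone tail does not create a larger consecutive gap (it does not: consecutive differences in the tail are $1$, and the junction gap can be controlled by the choice of the head), so $m(\pi) = \ell$. Intuitively, the monotone tail should be ``cheap'' to handle — detecting long increasing subsequences is the regime where the dyadic sampler of Newman et al.\ (\autoref{subsec:Newman_monotone}) is efficient — so the bottleneck for finding a $\pi$-copy should be the $(\ell+1)$-symbol head, which by the sampler costs $\tilde O_\eps(n^{1-1/\ell})$.

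The upper bound is the heart of the argument and where I expect the main obstacle. The strategy is a hybrid test: combine a sampler that catches the ``head'' part of a $\pi$-copy with a dyadic-type sampler (à la the monotone case) that catches the long increasing ``tail'', and argue that if $f$ is $\eps$-far from $\pi$-free then with good probability the queried set contains a full $\pi$-copy. Concretely, using the deletion-distance characterization from the preliminaries, $f$ contains $\Omega(\eps n)$ pairwise-disjoint $\pi$-copies; one wants to argue that a typical such copy has its ``tail'' spread out over a structured (dyadic) set of positions, so that querying $\tilde O_\eps(n^{1-1/\ell})$ points hits the head of some copy while the dyadic queries simultaneously complete its tail. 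Making this rigorous requires a decomposition/bucketing argument showing the copies can be assumed ``well-spread'' after discarding a small fraction, and then a union bound over the (polylogarithmically many) dyadic scales.

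The main obstacle is precisely this coupling: ensuring that the \emph{same} $\pi$-copy is simultaneously caught by the head-sampler and the tail-sampler, rather than catching a head of one copy and a tail of another. This is the standard difficulty in one-sided pattern testing, and the cleanest route is probably to mimic the structure of the $O(\sqrt n \cdot (\eps^{-1}\log n)^{O(1)})$ test for $(1,3,2)$ from \autoref{subsec:Newman_3}: first use the dyadic sampler to reduce to finding, within a random ``window'' structure, the short head pattern among $\Omega(\eps n / \poly\log n)$ disjoint candidate copies, then apply the sampler analysis to that subproblem. If the head genuinely behaves like a length-$(\ell+1)$ instance, its sampler cost is $\tilde O_\eps(n^{1-1/\ell})$, and combining with $\poly(\eps^{-1}\log n)$ overhead from the dyadic layer yields the claimed $\tilde\Theta_\eps(n^{1-1/\ell})$.
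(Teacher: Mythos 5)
Your construction and lower bound are essentially the paper's (it takes, for $\ell\geq 2$, any $\pi$ with $\pi_\ell=1$ and $\pi_{\ell+i}=\ell+i$ for $1\leq i\leq k-\ell$, and gets the $\Omega\left(\eps^{-1/\ell}n^{1-1/\ell}\right)$ bound from~\autoref{cor:max_diff_lower_bound}), but the upper bound --- which you rightly call the heart of the theorem --- is left as a plan whose decisive step is exactly the part you flag and do not resolve, and whose accounting does not add up as stated. A plain sampler applied to a length-$(\ell+1)$ head costs $\Theta\left(\eps^{-1/(\ell+1)}n^{1-1/(\ell+1)}\right)$, which is \emph{larger} than the target $n^{1-1/\ell}$; to reach $n^{1-1/\ell}$ for a length-$(\ell+1)$ pattern you would need the interval-based algorithm of~\autoref{thm:main_thm_upper1}, and it is not at all clear how a head-copy found by that algorithm could be completed by a long increasing tail found by a separate dyadic procedure so that both pieces belong to the same disjoint copy. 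Moreover, not every head with $1$ and $\ell+1$ adjacent admits the needed combine step: if the values $2,\dots,\ell$ occur after $\ell+1$ in the pattern, they must be sandwiched between specific values of the tail of the same copy, and ``domination'' by a dyadically sampled tail no longer produces a valid $\pi$-copy, so the precise placement of the minimum matters.

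The paper resolves the coupling by a different mechanism than a two-pattern hybrid. It chooses $\pi$ so that the minimum is the \emph{last} head element, the suffix starting there is increasing, and all other head values ($2,\dots,\ell$) lie below the entire tail. Each run of the \textsf{DyadicSampler} of Newman et al.\ then returns, with probability $1/\poly(\eps^{-1},\log n)$, an increasing subsequence whose \emph{first entry is exactly} the position $t_\ell$ of the minimum of some copy $t\in\mathcal{A}$ and whose second entry dominates $f(t_{\ell+1})$ (an ``$\ell$-dominating'' subsequence); such a subsequence is completed into a $\pi$-copy by catching only the $\ell-1$ remaining head entries of that \emph{same} copy, which independent uniform sampling at rate $10n^{-1/\ell}$ does with probability about $n^{-(\ell-1)/\ell}$. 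Running the dyadic sampler $\tilde{\Theta}_{\eps}\left(n^{1-1/\ell}\right)$ times yields $n^{1-1/\ell}=n^{(\ell-1)/\ell}$ such subsequences with \emph{distinct} anchors --- this requires a new ingredient, \autoref{lem:dyadic_distribution}, showing the first output entry of the dyadic sampler is spread nearly uniformly --- so the completion events are independent across anchors and one occurs with constant probability, giving total cost $\tilde{\Theta}_{\eps}\left(n^{1-1/\ell}\right)$. None of this machinery (the $\ell$-dominating anchoring at the minimum, the distinct-anchor lemma, or the $(\ell-1)/\ell$ bookkeeping that makes the rates match) appears in your sketch, so the upper bound remains unproven as proposed.
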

In particular, we conclude that for any positive integer $\ell$, there exist infinitely many permutations $\pi$ for which the query complexity of one-sided non-adaptive testing of $\pi$-freeness is $\tilde{\Theta}_{\eps}(n^{1 - 1/\ell})$.
This answers and generalizes the open question of Newman et al.\@~\cite{NewmanRabinovich2017}, who asked whether there exist infinitely many permutations $\pi$ 
that have a non-adaptive one-sided test for $\pi$-freeness making at most $O(n^{0.99})$ queries (for a fixed \eps).  

\subsection{Hierarchy of Adaptivity}
\label{subsec:hierarchy_adaptivity_132} 
Recent work of Canonne and Gur~\cite{CG:17} introduced the notion of \emph{amount of adaptivity} in property testing, which they define as follows.\footnote{Actually, for clarity of exposition we slightly depart from the notation of~\cite{CG:17}, and use \emph{$r$-round test} for what they refer to as \emph{$(r-1)$-round-adaptive test}. The reason is that, with our convention, an $r$-round test is an algorithm which proceeds in $r$ rounds; while in theirs, an {$r$-round test} is an algorithm which proceeds in $r+1$ rounds, with $r$ adaptive ones and one (the very first) being non-adaptive.}{} An \emph{$r$-round test} for some property $\property$ is an algorithm which proceeds in $r$ stages. At each stage, it produces and makes a batch of queries to the function, which cannot depend adaptively on each other (i.e., these queries are among themselves non-adaptive), and receives all answers to these queries. It then produces the queries for the next stage, which can depend adaptively on the answers just received. At the end of the $r$ stages (the first one being fully non-adaptive, and each query in the subsequent rounds depending adaptively on those made in the previous rounds only), the algorithm must accept or reject the function; the query complexity is then the total number of queries made overall. 

Note that in this formalization, non-adaptive tests correspond to $1$-round tests, while adaptive ones are those with unbounded number of rounds. In~\cite{CG:17}, the authors establish a strong hierarchy theorem, showing separations between $r$-round tests and $(r+1)$-round tests, for any integer $r$ -- albeit for a somewhat contrived family of properties. They also establish another such theorem, slightly weaker, but this time for a \emph{natural} property (of graphs). However, both results have the issue of producing a different property for every $r$: that is, ``for every integer $r$ there exists a property $\property_r$ hard to test in $r$ rounds, but easy in $r+1$.'' 
Determining whether there exists a \emph{single} natural property which would witness such a hierarchy -- ``there exists a property $\property$ which is, for infinitely many integers $r<r'$,  harder to test in $r$ rounds than in $r'$'' -- is posed as an open question in~\cite{CG:17}.

In this work, we give a positive answer for this question by analyzing tests for the specific pattern $\pi=(1,3,2)$. Newman et al. established in~\cite{NewmanRabinovich2017} an exponential gap between the query complexity of \emph{adaptive} and \emph{non-adaptive} tests; focusing on $r$-round tests, and building on their algorithm, we establish a finer separation for testing this particular pattern, leading to our adaptivity hierarchy result.
\begin{restatable}{theorem}{theohierarchub}\label{theo:hierarchy:ub}
    For every $1\leq r\leq (\log n)^{O(1)}$, there exists an $r$-round (one-sided) testing algorithm for $(1,3,2)$-freeness with query complexity $\tildeO{\eps^{-1} n^{\frac{1}{r+1}}}$. 
\end{restatable}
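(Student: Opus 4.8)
The plan is to design a recursive $r$-round tester that mimics the structure of the adaptive $(1,3,2)$-tester of Newman et al., but "unrolls" only $r$ levels of recursion, paying a larger query cost at the base level. Recall that a $(1,3,2)$-copy consists of indices $i_1 < i_2 < i_3$ with $f(i_1) < f(i_3) < f(i_2)$; when $f$ is $\eps$-far from $(1,3,2)$-free it contains $\geq \eps n / 3$ pairwise-disjoint such copies. The natural idea is a divide-and-conquer on the index interval: split $[n]$ into $\sqrt[r+1]{n}$-ish many blocks (or, following the dyadic approach, work with a dyadic decomposition), and observe that any $(1,3,2)$-copy either lies within a single block or is "spread" across blocks in one of a bounded number of configurations (which of $i_1,i_2,i_3$ fall in which blocks, together with the relative-value pattern between blocks). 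First I would query a uniform sample of size $\tildeO{\eps^{-1} n^{1/(r+1)}}$ in the first round to "learn" enough about the coarse block-level structure — roughly, to identify blocks and value-thresholds that participate in many of the disjoint copies — and then recurse, within $r-1$ further rounds, into a single suspicious block (or a suspicious sub-configuration), now testing a related pattern-finding subproblem on a subsequence of length $\approx n^{r/(r+1)}$ with an appropriately adjusted proximity parameter.

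Concretely, the key steps in order: (1) formalize the recursive subproblem — given a subinterval $I$ and a pair of value thresholds (an "interval of values"), find a $(1,3,2)$-copy inside $f$ restricted to $I$ and that value-band, or a "partial copy" that can be completed using already-queried points outside $I$; this is exactly the structure the Newman et al. adaptive algorithm uses, so its correctness lemma can be invoked. (2) Show that if $f$ is $\eps$-far from $(1,3,2)$-free, then after the first (non-adaptive) round of $\tildeO{\eps^{-1}n^{1/(r+1)}}$ uniform samples, with probability $\geq 2/3$ we either directly see a copy, or we identify a subinterval $I$ of length $\approx n^{r/(r+1)}$ together with a value-band such that $f|_I$ restricted to that band is still $\eps'$-far from $(1,3,2)$-free, with $\eps' = \tildeOmega{\eps}$ (up to $\poly\log n$ losses). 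This is a pigeonhole/averaging argument over the $\approx \eps n$ disjoint copies: since there are only $O(n^{1/(r+1)})$ blocks and $O(1)$ cross-block configurations, some block-and-configuration carries $\tildeOmega{\eps n \cdot n^{-1/(r+1)}} = \tildeOmega{\eps n^{r/(r+1)}}$ disjoint copies, which is the required density inside a length-$n^{r/(r+1)}$ interval. (3) Recurse: the subproblem on an interval of length $m = n^{r/(r+1)}$ with $r-1$ rounds remaining is solved, by the inductive hypothesis, with $\tildeO{\eps^{-1} m^{1/r}} = \tildeO{\eps^{-1} n^{1/(r+1)}}$ queries, so the total over all $r$ rounds telescopes to $\tildeO{\eps^{-1} n^{1/(r+1)}}$. (4) Base case $r=1$: this is just the sampler / the $\sqrt{m}$-query non-adaptive test for $(1,3,2)$ from \autoref{subsec:Newman_3}, giving $\tildeO{\eps^{-1} m^{1/2}}$, which matches the recursion. (5) One-sidedness is automatic: the test only rejects when it has explicitly queried a $(1,3,2)$-copy.

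I would carry out step (2) by adapting the "guessing the typical structure" idea of the dyadic sampler, being careful that the uniform first-round sample lands in the heavy block and near the relevant value thresholds with the right probability; the union bound over the $O(r)$ recursion levels and over the $\poly(1/\eps,\log n)$ configurations must be absorbed into the $\tildeO{\cdot}$ and constant-probability budget, which is why we amplify each level's success probability to $1 - 1/\poly(\log n)$ (affordable since $r \le (\log n)^{O(1)}$, costing only a $\poly\log n$ factor, hidden by $\tildeO{\cdot}$).

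The main obstacle I expect is step (2): making precise the claim that concentrating the disjoint copies into one block preserves the "band-restricted farness" of the subsequence, and simultaneously ensuring the first-round uniform sample suffices to pin down both the correct block and the correct value-band (and the correct cross-block configuration, i.e. which coordinates of the copy are "inside" versus supplied by the sample). In the fully adaptive algorithm of Newman et al. this information is extracted over many rounds, so the delicate point is verifying that $r$ rounds already suffice when each round shrinks the interval by a factor $n^{1/(r+1)}$, and that the proximity parameter degrades only polylogarithmically (not polynomially) at each level — otherwise the $\eps^{-1}$ dependence in the final bound would blow up to $\eps^{-r}$. Handling the value-band bookkeeping cleanly (so that a "partial copy" found deep in the recursion genuinely extends, using earlier queries, to a full $(1,3,2)$-copy in $f$) is the part that will require the most care.
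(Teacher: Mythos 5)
There is a genuine gap, and it sits exactly where you flagged it: step (2). Your plan needs the first non-adaptive round of $\tildeO{\eps^{-1}n^{1/(r+1)}}$ uniform samples to \emph{identify} a block of length $\approx n^{r/(r+1)}$ (plus a value band and configuration) into which to recurse. The pigeonhole argument only gives \emph{existence} of a heavy block; with $\approx n^{1/(r+1)}$ blocks, your sample places only $\tildeO{\eps^{-1}}$ points in each block, which is far too few to certify (one-sidedly or otherwise) that a particular block carries many disjoint copies -- recall that even finding a single $(1,3,2)$-copy inside a block of length $m$ needs $\Omega(\sqrt{m})$ non-adaptive queries, and recursing into all blocks destroys the telescoping count. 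A second problem is the inductive step itself: the subproblem you recurse on is not ``test $(1,3,2)$-freeness of $f|_I$'' but ``find inside $I$, within a value band, an element (or partial copy) completing previously queried points,'' and in the cross-block case $f|_I$ may well be $(1,3,2)$-free; so the theorem's statement cannot be invoked as the induction hypothesis without formulating and proving a stronger statement about that completion/search problem -- which is where all the work is.

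The missing ingredient is the structural fact that makes a round-bounded search possible at all: after the \emph{non-adaptive} preprocessing of Newman et al.'s \textsc{Test2} (their steps 1--6), one has a pair $(i,j)$, a threshold $\alpha$, and an interval $I_R$ such that the unknown subset $I'_R=\{x\in I_R: f(x)>\alpha\}$ is dense and \emph{nearly monotone}, and contains many indices $k$ with $f(k)\in(f(i),f(j))$. The paper keeps that preprocessing intact (it is already non-adaptive) and replaces only the adaptive binary search of their step 7: in each round it queries $\approx n^{1/(r+1)}$ random points of the current interval, and near-monotonicity pins down the \emph{single} sub-interval that can contain the witnesses (any out-of-order or in-band value already yields a rejection), so the search space provably shrinks by a factor $n^{1/(r+1)}$ per round at a cost of $\approx n^{1/(r+1)}$ queries; the last round samples uniformly and succeeds because the number of witnesses $\ell$ has been boosted to $\tildeO{\eps n^{1/(r+1)}}$ by running the \textsf{DyadicSampler} many times in parallel with the first round. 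Your outline never uses this near-sortedness, and without it the claim that one round of $n^{1/(r+1)}$ queries localizes the problem to a sub-instance of length $n^{r/(r+1)}$ with only polylogarithmic loss in $\eps$ is unsupported; as written, the recursion does not go through.
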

\noindent We then complement the upper bound part of our adaptivity separation by the following lower bound counterpart:
\begin{restatable}{theorem}{theohierarchlb}\label{theo:hierarchy:lb}
    For every $0\leq r\leq (\log \log n)^{O(1)}$, any $r$-round (\emph{two}-sided) testing algorithm for $(1,3,2)$-freeness must make $\bigOmega{n^{1/2^{r+3}}}$ queries.
\end{restatable}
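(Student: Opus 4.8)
I would prove this distributional lower bound by combining Yao's principle with a recursively nested family of hard instances and a round-by-round elimination argument; everything below is a sketch.

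\textbf{Hard distributions.}
It suffices to construct distributions $\mathcal D^{\mathrm{yes}},\mathcal D^{\mathrm{no}}$ over sequences $f\colon[n]\to\R$, supported respectively on $(1,3,2)$-free sequences and on sequences that are $\bigOmega 1$-far from $(1,3,2)$-freeness, such that no deterministic $r$-round algorithm making $q=\littleO{n^{1/2^{r+3}}}$ queries distinguishes a draw from $\mathcal D^{\mathrm{yes}}$ from a draw from $\mathcal D^{\mathrm{no}}$ with success probability $2/3$. I would build these recursively: for $t=0,1,\dots,L$ with $L=\bigTheta{\log\log n}$, a pair $(\mathcal I_t^{\mathrm{yes}},\mathcal I_t^{\mathrm{no}})$ of distributions on a domain of size $N_t$, where $N_0=\bigO 1$, $N_t$ is (roughly) $N_{t-1}^2$, $N_L=n$, and $(\mathcal D^{\mathrm{yes}},\mathcal D^{\mathrm{no}})=(\mathcal I_L^{\mathrm{yes}},\mathcal I_L^{\mathrm{no}})$. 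The base $\mathcal I_0$ is a constant-size gadget, order-isomorphic to $(1,3,2)$ in the ``no'' case and monotone in the ``yes'' case. For $t\ge 1$, $\mathcal I_t$ is a concatenation of consecutive blocks whose value windows by default form a decreasing staircase of increasing runs --- a configuration one checks directly to be $(1,3,2)$-free; one then superimposes a hidden structure on the blocks (a uniformly random matching, or a random choice of which sub-blocks are ``active'') and, inside each matched/active part, embeds a rescaled copy of $\mathcal I_{t-1}$ and adjusts the relevant value windows. The placement is chosen so that in the ``no'' variant many fresh $(1,3,2)$ copies are planted at every level, forcing $\bigOmega 1$-farness by a counting argument, while in the ``yes'' variant the embedded copies are the free ones and the windows are placed so that no $(1,3,2)$ copy is created across blocks or across levels.

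\textbf{Hiding lemma and round elimination.}
The combinatorial heart is a birthday bound: while a single non-adaptive round queries too few points (fewer than the relevant square-root threshold of the blocks at a level), with probability $1-\littleO 1$ it fails to catch the hidden structure at that level, so the restriction of $\mathcal D^{\mathrm{no}}$ to the queried coordinates is $\littleO 1$-close to that of $\mathcal D^{\mathrm{yes}}$ and the posterior on ``which part carries each embedded sub-instance'' stays essentially uniform. From this I would argue that, given an $r$-round $q$-query distinguisher, either $q$ already exceeds the threshold we are after, or the first round conveys no useful information about the top level --- so that conditioning on its transcript leaves a fresh $\mathcal I_{L-1}$ instance that the remaining $r-1$ rounds must distinguish. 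Iterating this reduction $r$ times reduces to an instance on a domain of size about $n^{1/2^r}$, on which, with no rounds left, the algorithm must act non-adaptively; there I would invoke a non-adaptive birthday lower bound in the spirit of the $\bigOmega{\sqrt n}$ bound of~\cite{NewmanRabinovich2017} (whose two-sided variant is noted there). A careful accounting of the per-round peeling cost and of this final non-adaptive stage then yields $q=\bigOmega{n^{1/2^{r+3}}}$, the additive $3$ absorbing the cost of the last non-adaptive stage together with a level of slack in the first-round elimination. The hypothesis $r\le(\log\log n)^{O(1)}$ is the range in which the resulting bound $n^{1/2^{r+3}}$ is the quantity of interest (beyond it, the bound degrades to a near-trivial one anyway).

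\textbf{Main obstacle.}
I expect the real work to lie in two places. The first is the combinatorial design: choosing the block value windows and the embedded structure so that $\mathcal D^{\mathrm{yes}}$ is supported on \emph{exactly} $(1,3,2)$-free sequences through arbitrarily many levels of nesting --- the dangerous configurations are spurious copies whose ``$1$'', ``$3$'', and ``$2$'' are contributed by three different blocks or three different levels --- while $\mathcal D^{\mathrm{no}}$ stays $\bigOmega 1$-far, also through all levels. The second is upgrading the one-round hiding lemma to a rigorous multi-round statement: since rounds $2,\dots,r$ are adaptive, this requires a hybrid/coupling argument across the rounds, maintaining the ``queries are spread out'' invariant from each round to the next (or else concluding that $q$ has already crossed the birthday threshold) and showing that the algorithm's transcript has nearly the same law under $\mathcal D^{\mathrm{yes}}$ and $\mathcal D^{\mathrm{no}}$ until its queries concentrate on a single hidden sub-instance.
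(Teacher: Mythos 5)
Your outline has the right quantitative shape (a square-root loss per round, hence an exponent of order $1/2^{r+O(1)}$), but the mechanism you propose for the per-round loss is not the right one, and this is where the argument as sketched would fail. A birthday-style ``hiding lemma'' --- ``with $q$ below the square-root threshold the round fails to catch the hidden structure, so the posterior on its location stays essentially uniform and the transcripts under the two distributions are $o(1)$-close'' --- is not what happens here: an adaptive algorithm does not need to collide with planted positions. By comparing the values it sees against known landmarks it performs a binary-search-like localization, and after one round the posterior on where the hidden structure sits is very far from uniform; it is concentrated on a window of length roughly the \emph{square root} of the previous uncertainty. The statement one actually has to prove --- and the crux of the whole theorem --- is an anti-concentration bound: conditioned on all queries so far, the position of the hidden structure inside the current window of length $d$ fluctuates like a Binomial with standard deviation $\Theta(\sqrt{d})$, so one more round of $q$ queries cannot pin it down to better than about $\sqrt{d}/\mathrm{poly}(q,r)$. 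This is exactly how the paper argues: it first reduces two-sided $(1,3,2)$-testing (rounds preserved, using Fischer's reduction to order-based testers) to a clean offset-finding problem, \textsc{Template-Search}, where $T$ consists of $m$ sorted uniform samples planted at a random offset inside $S$; it then proves the $r$-round lower bound for that problem by induction on rounds, maintaining the invariant that no $S$-query lies within $d_\ell$ of the image of any $T$-query and showing $d_{\ell+1}\approx \sqrt{d_\ell}/(rq^2)$ via DKW plus Binomial anti-concentration. Your round-elimination step (``conditioning on the first-round transcript leaves a fresh instance one level down'') is precisely the assertion that needs proof against an algorithm that has already localized the structure, and nothing in the sketch supplies it.

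The second missing pillar is the construction itself, which you defer but which is not a detail. Keeping the yes-distribution \emph{exactly} $(1,3,2)$-free through arbitrarily many nesting levels while keeping the no-distribution $\Omega(1)$-far is delicate (spurious copies whose three entries come from different blocks or levels are the generic failure mode), and no candidate window arithmetic is given; moreover you would need the values revealed in early rounds to carry essentially no information about the hidden location, which conflicts with the way such far-from-free constructions usually encode position in value. Note also that the $\Omega(\sqrt{n})$ non-adaptive bound of Newman et al.\ that you invoke at the base is a one-sided bound; the paper avoids this issue by its reduction, which builds a yes/no pair $f_{\mathrm{yes}},f_{\mathrm{no}}$ differing in order only on query pairs that immediately reveal the offset, so two-sidedness is handled for free. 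In short, the two steps you flag as ``the real work'' constitute essentially the entire proof, and the hiding/birthday mechanism you propose would have to be replaced by an anti-concentration argument of the kind the paper develops.
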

Note that the type of tests involved in Theorems \ref{theo:hierarchy:ub} and \ref{theo:hierarchy:lb} is best possible.
Indeed, it implies that there is a test $\Tester$ using $2^{r+3}$ rounds whose number of queries is polynomially better than that of any one-sided or two-sided $r$-round test, and in addition that this $\Tester$ belongs to the more restricted class of one-sided tests.
We remark that one could hope for an even stronger theorem, which would separate $r$ rounds from $r+1$ (as opposed to $r$ vs. $\exp(r)$ as above). We conjecture that the best $r$-round-adaptive test has query complexity $\tilde{\Theta}_{\epsilon}\left(n^{1 / (2r + O(1))}\right)$.

\section{Discussion and open problems}\label{sec:discussion}
The problem of (one-sided) testing of $\pi$-freeness demonstrates a wide array of interesting phenomena: An exponential separation between the adaptive and the non-adaptive case, surprising hardness results and permutation-dependent hierarchical behaviors in the non-adaptive case, 
and a hierarchy of adaptivity that is the first of its kind. We believe that these results serve as a strong motivation to try to achieve a complete understanding of the problem.
Below we suggest several possible directions for future research.

\paragraph{The adaptive case}
Testing $\pi$-freeness in the adaptive case is still far from being understood. In particular, the question whether all permutations are testable adaptively with number of queries that is polylogarithmic in $n$ is still wide open, even if we allow for two-sided tests. At this point, this seems to be the most intriguing open question regarding testing $\pi$-freeness. 

\paragraph{Improving bounds in the non-adaptive case}
While our understanding of the non-adaptive case is far better than that of the adaptive case, 
there are still gaps in it. The main goal here is to obtain good permutation-dependent upper bounds:~\autoref{open:non_adaptive} states that our lower bound is actually tight, and it will be obviously interesting to understand if it holds. 

\paragraph{Understanding the USPN}
Another interesting direction would be to obtain a simple complete combinatorial characterization of the USPN of any given permutation. Currently we have lower and upper bounds for the USPN of a permutation (\autoref{thm:entangling_to_unique} and~\autoref{obs:neccesary_condition_unique}, respectively), that are usually tight for small permutations,
and we know that the USPN of a permutation is computable in constant time.

\paragraph{Two-sided testing}
All known results so far are for one-sided testing, aside from our two-sided lower bound in the partially adaptive setting. It is worth to note that the $\Omega_{\eps}(n^{1/2})$ lower bound 
for one-sided testing of all non-monotone permutations can be (carefully) translated into the same bound for two-sided tests. However, the proofs of other one-sided non-adaptive lower bounds do not seem to translate well to the two-sided setting. 

Therefore, it will be interesting to understand what is the query complexity of optimal \emph{two-sided} tests, both in the adaptive and the non-adaptive case. Specific questions of interest include (but are not limited to) the following: When do the non-adaptive two-sided lower bounds match the one-sided ones? Can one obtain a general two-sided upper bound that beats the tight one-sided upper bound of~\autoref{thm:main_thm_upper1} for permutations of size bigger than three? Does two-sidedness help testing in the adaptive case? 

\paragraph{Families of forbidden order patterns}
It will be interesting to investigate the case where more than one order pattern is forbidden (note that there are families for which the question does not make sense; for example, the famous theorem of Erd\H{o}s-Szekeres~\cite{ErdosSzekeres1935} implies that any sequence of length at least $k^2 - 2k + 2$ must contain one of the monotone permutations of length $k$).
As mentioned in~\cite{NewmanRabinovich2017}, all one-sided upper bounds from the single-pattern case carry over to the multiple-pattern case, but the lower bounds do not; for example, there exists a family of two non-monotone permutations of size $3$ that has a one-sided non-adaptive test whose query complexity is polylogarithmic in $n$. Some specific open questions of interest: Is the upper bound from~\autoref{thm:main_thm_upper1} tight in this case? How does the non-adaptive family-dependent hierarchy look like?

\paragraph{Forbidden order patterns in multi-dimensional structures}
How does $\pi$-freeness behave in structures of higher dimensions, such as the hypergrid or the Boolean hypercube?
The sample-based upper bound from~\autoref{subsec:Newman_sample} still holds in these cases, provided that the input contains many pairwise-disjoint copies of the forbidden structure $\pi$. However, in contrast to the one-dimensional case, it is then no longer clear whether being far from $\pi$-freeness implies that the input indeed has many pairwise-disjoint $\pi$-copies. Interestingly, recent work of Grigorescu, Kumar, and Wimmer~\cite{GKW:17} gives strong evidence that testing order pattern freeness on the hypercube is hard.

\section{Upper bound}\label{sec:upper}
In this section we provide the proof of~\autoref{thm:main_thm_upper1}.
The test that is used to prove the upper bound is one-sided, and indicates that the input sequence $f \colon [n] \to \R$ has a $\pi$-copy only if it finds one. Thus, the testing problem reduces to the following search problem: Given query access to un unknown sequence $f$ that is $\eps$-far from $\pi$-freeness, the goal is to find a $\pi$-copy in $f$. Here and henceforth, we omit floor and ceiling signs, as they do not make an essential different in the arguments.

The proof of~\autoref{thm:main_thm_upper1} follows immediately from the next lemma, which provides a sublinear algorithm to find a $\pi$-copy in a sequence $f$, assuming that $f$ is far enough from $\pi$-freeness. 
\begin{lemma}
	\label{lem:algorithm_upper_bound}
Let $\pi$ be a permutation of length $k \geq 3$, and suppose that $f \colon [n] \to \R$ is $\eps$-far from $\pi$-freeness for some $\eps \geq c_k n^{-1/9}$, where $c_k$ depends only on $k$, and $n$ is large enough (as a function of $k$). Then there is an algorithm that finds, with probability $2/3$, a copy of $\pi$ in $f$ by querying $O(\eps^{-\frac{1}{k-1}} n^{1 - \frac{1}{k-1}})$ entries in $f$.
\end{lemma}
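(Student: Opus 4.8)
The plan is to use the standard reduction from one-sided testing to search: the test will draw a (randomized) set $Q$ of positions, query $f$ there, and accept iff $f|_Q$ is $\pi$-free; such a test is one-sided by construction and makes $|Q|$ queries, so it suffices to exhibit, for every $f$ that is $\eps$-far from $\pi$-freeness, a non-adaptive set $Q$ of size $O_k(\eps^{-1/(k-1)}n^{1-1/(k-1)})$ that with probability at least an absolute constant $c>1/2$ contains a genuine $\pi$-copy of $f$; then $O(1)$ independent repetitions boost this to $2/3$. The only feature of $\eps$-farness I will use is the one recorded in the preliminaries: via the deletion-distance reformulation, it yields a family $\mathcal{C}$ of $m := |\mathcal{C}| \geq \eps n/k$ pairwise-disjoint $\pi$-copies. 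I would also normalize $\pi$ up to reversal and complementation (neither changes the problem nor $k$), but since no single normalization collapses all length-$k$ patterns to one shape, a modest case analysis on the positions $\pi^{-1}(1),\pi^{-1}(k)$ of the extreme values of $\pi$ seems unavoidable.

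The set $Q$ cannot be one uniform sample: a uniform sample of size $o(\eps^{-1/k}n^{1-1/k})$ already fails on instances all of whose $\pi$-copies are confined to tiny windows (e.g. $\eps n$ isolated minimal gadgets), while purely local probing fails on instances whose copies span all of $[n]$. So I would take $Q$ to be a union of non-adaptive batches, each of size $O_k(\eps^{-1/(k-1)}n^{1-1/(k-1)})$: (i) a uniformly random ``global'' sample $S_0$ of that size; and (ii) for each length scale $w$ in a geometric progression $1,2,4,\dots,n$, a ``clustered'' batch -- a few random anchors, each surrounded by a queried window of length $\Theta_k(w)$ (or a dense random subset of it), with the option to recurse inside a window -- aimed at copies of span $O(w)$. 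The budgets would be tuned so that, in every regime, the effective number of independent random ``hits'' needed to catch a copy is $k-1$ rather than the $k$ of the naive sampler; this is exactly what upgrades $n^{1-1/k}$ to $n^{1-1/(k-1)}$.

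The correctness argument would run through a structural dichotomy for $\eps$-far $f$: either \emph{(a)} $f$ has at least $c_k(\eps n)^{k/(k-1)}$ copies of $\pi$ -- far more than the $m$ disjoint ones, which can happen only because disjoint copies passing through common ``busy'' regions force many recombined copies, and it is here that $k\geq 3$ provides the room -- in which case $S_0$ contains a copy with constant probability; or \emph{(b)} there is a scale $w$ for which $\Omega_k\big((\eps n)^{1/(k-1)}\big)$ of the disjoint copies have span $O(w)$ and are spread over $\Theta(n/w)$ disjoint windows, in which case the scale-$w$ clustered batch catches one. The $S_0$ half of (a) is a second-moment computation: with $X$ the number of $\pi$-copies of $f$ entirely inside $S_0$, one has $\expect{X}=\Theta_k(1)$ from the counting bound, and the task is to show $\expect{X^2}=O(\expect{X}^2)$ by grouping ordered pairs of copies according to the number $j\in\{0,1,\dots,k\}$ of positions they share; the pairs with $j\geq 1$ are the dangerous terms, and bounding them is where the hypothesis $\eps \geq c_k n^{-1/9}$ is spent -- the polynomial-in-$n$ slack is what makes those correlated contributions negligible beside $\expect{X}^2$. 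Case (b) is comparatively routine (a union bound / second moment over the $\Theta(n/w)$ disjoint windows, again catching a single copy).

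The hard part will be the structural dichotomy -- equivalently, the clean assertion that for \emph{every} permutation $\pi$ of length $k\geq 3$ and \emph{every} $\eps$-far $f$, ``$k-1$ random hits suffice'' -- since this is where the combinatorics of order patterns genuinely enters and it must hold uniformly over all $\pi$; consistently with \autoref{thm:upper_bound_is_tight}, the worst instances are those with $|\pi^{-1}(1)-\pi^{-1}(k)|=1$, where no recombination can gain a coordinate and the bound $n^{1-1/(k-1)}$ is tight. Secondary difficulties are carrying out the whole scheme non-adaptively while keeping the count free of any $\log n$ factor (which constrains how the scales in part (ii) may be combined) and the variance estimate above. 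The range $\eps < c_k n^{-1/9}$ excluded here would be handled separately for \autoref{thm:main_thm_upper1} by a coarser version of the same argument with a correspondingly larger sample.
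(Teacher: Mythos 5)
Your plan has a genuine gap: the structural dichotomy (a)/(b) on which everything rests is not exhaustive, and the case it misses is exactly the extremal one. Take the instances used to prove \autoref{thm:upper_bound_is_tight} (the family $\mathcal{F}$ in the proof of \autoref{thm:main_thm_lower2}, e.g.\ for $\pi=(k,1,2,\dots,k-1)$): each such sequence contains only the $\eps n$ trivial $\pi$-copies, which is far below your threshold $c_k(\eps n)^{k/(k-1)}$, so case (a) fails; and every copy has span $\Theta(n)$, so no scale $w\ll n$ carries small-span copies, while at $w=\Theta(n)$ your clustered batch either costs $\Theta(n)$ queries or degenerates into the global sample $S_0$ --- which provably misses, since the expected number of copies fully contained in a density-$p$ sample with $p=\Theta\bigl((\eps n)^{-1/(k-1)}\bigr)$ is $\eps n\,p^{k}=(\eps n)^{-1/(k-1)}=o(1)$. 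What these instances do have is two entries of each copy lying in a common $O(1)$-length window while the other $k-2$ entries are spread over the whole line; catching such a copy requires a \emph{cross} analysis --- fully-queried random intervals to capture the close pair, combined with an independent uniform sample to capture the remaining $k-2$ entries --- and neither half of your plan performs it (your (b) insists the whole copy sits inside the window, your (a) asks $S_0$ alone to catch it). Since these are precisely the instances matching the upper bound, the plan as stated cannot reach $O(\eps^{-1/(k-1)}n^{1-1/(k-1)})$.

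For comparison, the paper works at a single scale $m=(\eps n)^{1-1/(k-1)}$ (which also disposes of the $\log n$ worry you raise) and splits the $\eps n/k$ disjoint copies according to whether some copy has \emph{two entries in one interval}, not according to its span. In that case it queries whole intervals (each kept with probability $cm/\eps n$) together with an independent uniform sample of singletons, and a Chebyshev argument over the doubly-hit intervals finishes. In the complementary case (at most one entry per interval) the winning mechanism is not a global copy count but the ``non-extremal copies'' trick: because each entry of the copy sits alone in its interval, its minimum (resp.\ maximum) entry may be replaced by \emph{any} smaller (resp.\ larger) value of that interval, so the plain sampler needs only $k-2$ exact hits plus two events of probability $1-(1-p)^{\Omega(\eps m)}$; for $k=3$ these events are only $\Theta(\eps^2)$-likely and correlated across copies sharing an interval, and the second-moment estimate there is where $\eps\ge c_k n^{-1/9}$ is actually spent --- so your instinct about where the $\eps$ condition enters (a variance bound for a sampling step) is right in spirit but attached to the wrong event. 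Repairing your proposal essentially amounts to replacing ``span of a copy'' by ``two entries in a common interval at scale $(\eps n)^{1-1/(k-1)}$'', adding the interval-plus-singletons cross analysis, and proving the replacement trick in the spread case --- at which point you have reproduced the paper's proof rather than an alternative to it.
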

\begin{remark}
	\autoref{lem:algorithm_upper_bound} is stronger than what is needed to obtain a one-sided test, in the sense that $\eps$ is allowed to scale with $n$; for the proof of~\autoref{thm:main_thm_upper1} a lemma that applies to a constant $\eps$ would have been sufficient. However, the added flexibility of the lemma reflects that the statement of~\autoref{thm:main_thm_upper1} would still be true should we take $\eps^{-1}$ as a slowly-growing function of $n$.
\end{remark}
\begin{proof}
	The proof idea is as follows. Given an input sequence $f \colon [n] \to \R$, we partition $[n]$ into a collection $\mathcal{I}$ of $n/m$ intervals of size $m$ each, for a suitable choice of $m$; we may assume, for convenience, that $m$ divides $n$. 
	Suppose that $f$ is $\eps$-far from $\pi$-freeness. Then $f$ contains a set $\mathcal{A}$ of $\eps n / k$ pairwise disjoint $\pi$-copies. We consider two cases, where for each of the cases the queries made are different. Our actual algorithm is a combination of the algorithms for each of the cases.
	
	In the first case, 
	most $\pi$-copies in $\mathcal{A}$ have at least two entries in the same interval; the algorithm for this case queries a set of whole intervals, chosen uniformly at random, and a set of single elements, also chosen uniformly at random. The analysis of this case does not use the fact that $\pi$ is a permutation. 
	In the second case, most $\pi$-copies in $\mathcal{A}$ do not have two entries in the same interval, and it can be shown that the sampler (which samples entries of $f$ uniformly at random) suffices for this case. Here we do use the fact that $\pi$ is a permutation, and the analysis actually shows that the required number of queries is much smaller (for constant $\eps$) than in the first case. 
	
	We now give the full details.
	Pick the interval size to be $m = (\eps n)^{1-1/(k-1)}$, and write $\pi = (\pi_1, \ldots, \pi_k)$.
	The $\pi$-copies are represented in $\mathcal{A}$ as $k$-tuples $t = (t_1, \ldots, t_k)$
	where $t_i$ is the location of the element corresponding to $\pi_i$
	in the copy.
	Write $\mathcal{A} = \mathcal{B} \cup \mathcal{C}$, where $\mathcal{B}$ contains all $\pi$-copies from $\mathcal{A}$ that have at least two entries in the same interval, and $\mathcal{C}$ contains all $\pi$-copies that have at most one entry in each interval. Then either $|\mathcal{B}| \geq \eps n / 2k$ or $|\mathcal{C}| \geq \eps n / 2k$.
	
	\paragraph{Case 1: $|\mathcal{B}| \geq \eps n / 2k$}
	Our algorithm for this case is described as follows. We first pick a set $Q_1 \subseteq \mathcal{I}$ of intervals, where every $I \in \mathcal{I}$ is included in $Q_1$ with probability $p = c m / \eps n= c(\eps n)^{-1/(k-1)}$, independently of other intervals. Here $c = 100k^2$ is a constant that depends (polynomially) on $k$.  
	Next, we pick a set $Q_2$ of elements from $[n]$, where each element is added to $Q_2$ with probability $p$, uniformly and independently of other elements. Up to this point, the algorithm does not make any queries. 
	
	\subparagraph{An independent sampling trick}
	Variants of the following simple idea are used several times along the paper.
	Let $E_{\text{found}}$ be the event that
	the subsequence of $f$ induced by $Q_1$ and $Q_2$ contains a $\pi$-copy.
	Let $E_{\text{big}}$ be the event that $|Q_1| > 100 c / \eps$ or $|Q_2| > 100 c m / \eps$. 
	By Markov's inequality, $\Pr(E_{\text{big}}) \leq 1/50$. If $E_{\text{big}}$ occurs, then the algorithm stops without making any queries (and hence it does not find a $\pi$-copy in $f$).
	If $E_{\text{big}}$ has not occurred, then the algorithm now queries all elements induced by $Q_1$ and $Q_2$.
	Thus, the algorithm finds a $\pi$-copy if and only if $E_{\text{found}}$ occurs and $E_{\text{big}}$ does not occur.
	The number of queries made is at most $200cm/\eps = O(\eps^{-\frac{1}{k-1}} n^{1 - \frac{1}{k-1}})$, as desired. The probability that the algorithm finds a $\pi$-copy
	is at least $\Pr(E_{\text{found}}) - \Pr(E_{\text{big}}) \geq \Pr(E_{\text{found}}) - 1/50$. Thus, it remains to show that $\Pr(E_{\text{found}}) \geq 2/3 + 1/50$ (note that we consider here the unconditional probability of $E_{\text{found}}$, and in particular, we do not condition on $E_{\text{big}}$ not happening). 
	
	\subparagraph{Analyzing the event $E_{\text{found}}$}
	For each $I \in \mathcal{I}$, let $t_I$ denote the number of $\pi$-copies from $\mathcal{B}$ that have at least two entries in $I$, and let $t = \sum_{I \in \mathcal{I}} t_I$, so $ \eps n / 2k \leq t \leq \eps n$.
	let $X$ be the random variable that counts the number of $\pi$-copies from $\mathcal{B}$ that have at least two entries in some $I \in Q_1$. The expectation of $X$ is $\mathbb{E}[X] = t p \geq cm/2k = 50km$, and the variance of $X$ is bounded by
	$$ \mathbb{E}[X^2] \leq p \sum_{I \in \mathcal{I}} t_I^2 \leq p m^2 \eps n / m = cm^2
	$$
	where the second inequality follows from convexity arguments, using the facts that $0 \leq t_I \leq m$ for every $I$ and $\sum t_I = t \leq \eps n$. Thus, the standard deviation of $X$ is bounded by $m\sqrt{c} = 10km$. By Chebyshev's inequality, we get that $X \geq m$ with probability at least $9/10$. 
	
	Assume now that $X \geq m$, that is, there exists a set $\mathcal{B}' \subseteq \mathcal{B}$ of $m$ $\pi$-copies that have at least two of their entries in intervals from $Q_1$. For each such copy, the event that all other $k-2$ (or less) entries of it are in $Q_2$ has probability at least $p^{k-2} = c^{k-2} / m$, and is independent of the corresponding events of the other copies in $\mathcal{B}'$. 
	Thus, the probability that none of these events occurs is bounded by $(1-c^{k-2}/m)^m \leq e^{-c^{k-2}} < 1/100$.
	This finishes the proof.
	
	\paragraph{Case 2: $|\mathcal{C}| \geq \eps n / 2k$}
	We start with some notation. For a copy $t = (t_1, \ldots, t_k) \in \mathcal{C}$, we define $I_i(t)$ as the interval in $\mathcal{I}$ containing $t_i$. 

	\subparagraph{Non-extremal $\pi$-copies}
	For any interval $I \in \mathcal{I}$, let $y_1 \leq \ldots \leq y_m$ be the elements of
	$f(I) = \{f(x) : x \in I \}$, and let $y_I^- = y_{\lceil \eps m / 6k \rceil}$ and $y_I^+ = y_{\lfloor m - \eps m/6k\rfloor}$. 
	We say that a $\pi$-copy $t = (t_1, \ldots, t_k)$ is \emph{top-high} if $f(t_{\pi^{-1}(k)}) > y_{I_{\pi^{-1}(k)}(t)}^+$, and \emph{bottom-low} if $f(t_{\pi^{-1}(1)}) < y_{I_{\pi^{-1}(1)}(t)}^-$. 
	A copy that is neither top-high nor bottom-low is said to be \emph{non-extremal}. In other words, a $\pi$-copy is non-extremal if its highest point is not too high with respect to the interval it lies in, and similarly, its lowest point is not too low with respect to its interval.
	Note that the number of top-high $\pi$-copies in $\mathcal{C}$ is bounded by $\eps n / 6k$ (as each interval contributes no more than $\eps m / 6k$ such copies), and similarly for the number of bottom-low $\pi$-copies. Thus, there exists a set $\mathcal{C}' \subseteq \mathcal{C}$ of $\eps n / 6k$ non-extremal $\pi$-copies.
	
	The main idea is that with sufficiently many queries, the sampler -- a sample-based algorithm to find a $\pi$-copy -- will be able to capture all entries of a non-extremal $\pi$-copy $t = (t_1, \ldots, t_k) \in \mathcal{C}'$ besides the lowest entry $t_{\pi^{-1}(1)}$ and the highest entry $t_{\pi^{-1}(k)}$, which will be replaced by a small enough entry from $I_{\pi^{-1}(1)}(t)$ and a large enough entry from $I_{\pi^{-1}(k)}(t)$, respectively. Note that this is a valid $\pi$-copy.
	
	\subparagraph{Analyzing the sampler}
	Let $p = cm / \eps n = c(\eps n)^{-1/(k-1)}$ as above. Let $E_{\text{found}}$ be the event that a subsequence $g$ of $f$, constructed by putting each entry of $f$ in it with probability $p$, contains a $\pi$-copy. Using the sampling trick from the first case, it is enough to show that $\Pr(E_{\text{found}}) \geq 2/3 + 1/50$. 
	Before we continue, we define the events $A_t$, $B_t$, $E_t$ for any $\pi$-copy $t = (t_1, \ldots, t_k) \in \mathcal{C}'$ as follows. $A_t$ is the event that all entries $\{t_{\pi^{-1}(j)}\}_{j=2}^{k-1}$ of $t$ are included in $g$,
	so $\Pr(A_t) = p^{k-2} = c^{k-2} / m$.
	$B_t$ is the event that $g$ contains $x \in I_{\pi^{-1}(1)}(t)$ and $x' \in I_{\pi^{-1}(k)}(t)$ such that $f(x) \leq f(t_{\pi^{-1}}(1))$ and $f(x') \geq f(t_{\pi^{-1}}(k))$, so $\Pr(B_t) \geq 1 - 2(1 - p)^{\eps m / 6k}$.
	Finally, $E_t = A_t \cap B_t \subseteq E_{\text{found}}$ indicates that $g$ contains a $\pi$-copy.
	Note that any event $A_t$ is independent of all other events, and $B_t$ is only dependent on events $B_{t'}$ where $I_{\pi^{-1}(1)}(t) = I_{\pi^{-1}(1)}(t')$ or $I_{\pi^{-1}(k)}(t) = I_{\pi^{-1}(k)}(t')$; there are at most $2m$ such events for each $B_t$.
	
	The analysis of permutations of length $3$ differs from that of longer ones.
	
	\subparagraph{$\pi$ of length {\bf $k>3$}.}
	The probability that none of the events $A_t$ for $t \in \mathcal{C}'$ holds is at most $(1 - p^{k-2})^{|\mathcal{C}'|} \leq \exp \left( - p^{k-2} \eps n / 6k \right) = \exp(-c^{k-2} (\eps n)^{1/(k-1)} / 6k) < 1/10$. Suppose then that $A_t$ holds for some $t \in \mathcal{C'}$. 
	The probability that $B_t$ does not occur is bounded by $2(1 - p)^{\eps m / 6k} \leq 2\exp \left( -p\eps m / 6k \right) = 2\exp \left( - c m^2 / 6k n\right) = 2\exp(-\eps^{2 - 2/(k-1)} n^{1 - 2/(k-1)} c / 6k ) < 1/10$ for large enough $n$. Thus, $\Pr(E_{\text{found}}) = \Pr(\exists t \colon A_t \wedge B_t) > 8/10 > 2/3 + 1/50$ in this case, as desired.
	
	\subparagraph{$\pi$ of length $k = 3$.}
	Let $X_t$ denote the indicator random variable of $E_t$ and let $X = \sum_{t \in \mathcal{C}'} X_t$.
	For every $t \in \mathcal{C}'$, we have $\Pr(A_t) = p = c / m$ and 
	$\Pr(B_t) \geq (1 - (1-p)^{\eps m /18})^2 \geq (1 - \exp(-p \eps m / 18))^2 = (1 - e^{-c \eps / 18})^2 \geq c^2 \eps^2 / 400$, where the last inequality holds when $\eps \leq \alpha k^{-2}$ for a small enough $\alpha$, as $e^{-x} \leq 1 - 9x/10$ for small enough $x$.
	Thus, $\mathbb{E}[X] = \sum_{t \in \mathcal{C}'} \Pr(A_t) \Pr(B_t) \geq \frac{\eps n}{18} \frac{c}{m} \frac{c^2 \eps^2}{400} = \frac{c^3} {7200} \eps^{5/2} n^{1/2}$.
	On the other hand, 
	\begin{align}\label{eqn:variance_X}
	\Var(X) = \sum_{t \in \mathcal{C}'} \Var(X_t) + \sum_{t \neq t' \in \mathcal{C}} \Cov(X_t, X_{t'}) \leq \mathbb{E}[X] + 2m|\mathcal{C}'|\max_{t, t'} \Pr(A_t) \Pr(A_{t'}) \Pr(B_t),
	\end{align}
	where the inequality builds on the following two facts. The first is that $\Cov(X_t, X_{t'}) \leq \mathbb{E}(X_t X_{t'}) \leq \Pr(A_t) \Pr(A_{t'}) \Pr(B_t)$, as the events $A_t, A_{t'}, B_t$ are mutually independent. The second fact is that, for any $t \in \mathcal{C}'$, $\Cov(X_t, X_{t'}) = 0$ for all but $2m$ of the tuples $t'$, as discussed above.
	 
	The second term in \eqref{eqn:variance_X} is bounded by $2 c^2 \sqrt{\eps n}$. Therefore, the standard deviation of $X$ is bounded by $ \sqrt{\mathbb{E}[X]} + \sqrt{2} c \eps^{1/4} n^{1/4} \leq \mathbb{E}[X] / 10$, where the bound on $\eps$ in the statement of the lemma is chosen so that the last inequality holds (note that $\eps^{1/4} n^{1/4} = \eps^{5/2} n^{1/2}$ for $\eps = n^{-1/9}$, and thus the smallest possible value of $\eps$ for which this proof works is $\Theta_k(n^{-1/9})$). Using Chebyshev's inequality, $\Pr(E_{\text{found}}) = \Pr(X > 0) \geq 9/10 > 2/3 + 1/50$, concluding the proof.
\end{proof}

\section{Lower bounds}\label{sec:lower}
In this section we provide proofs for our lower bounds in the non adaptive case. These are Theorems~\ref{thm:upper_bound_is_tight},~\ref{thm:main_thm_lower2} and~\ref{thm:entangling_to_unique}.
We start with the proof of~\autoref{thm:main_thm_lower2}. Then, we use it to finish the proof of~\autoref{thm:upper_bound_is_tight}, which requires us to prove a relatively simple special case of~\autoref{thm:entangling_to_unique}. Finally, we prove~\autoref{thm:entangling_to_unique} in its full generality. We chose to present the proofs in this order for the sake of readability, as 
the proof of~\autoref{thm:entangling_to_unique} will be easier to understand after tackling the special case considered in~\autoref{thm:upper_bound_is_tight}.

\begin{proof}[Proof of~\autoref{thm:main_thm_lower2}]
Fix a permutation $\pi$ of length $k$, and let $P = (\Lambda, S)$ be a unique signed partition of $\pi$ of size $u$, where $\Lambda = (\sigma_1, \ldots, \sigma_u)$ and $S = (s_1, \ldots, s_u)$.
Let $0 < \eps < \eps_0(k)$ and let $n > n_0(k)$ be an integer,
where $\eps_0(k) \leq 1/2k$ is small enough as a function of $k$ and $n_0(k)$ is large enough as a function of $k$. We may assume, for convenience, that $m = n/k$ is an integer and that $\eps m$ is an integer bigger than $k$ (translating the result to any $n$ and $\eps$ comes at a ``price'' of a multiplicative constant that depends only on $k$).

Recall that a one sided $\eps$-test for $\pi$-freeness must always accept
$\pi$-free sequences, and reject sequences that are $\eps$-far from $\pi$-freeness with probability at least $2/3$.
Thus, any one sided test $T$ for $\pi$-freeness must always accept its input if the subsequence it queries is $\pi$-free. This follows from the fact that for any $\pi$-free sequence $g\colon [q] \to \R$, any $n > q$ and any $1 \leq t_1 < \ldots < t_q \leq n$, there exists a $\pi$-free sequence $f\colon[n] \to \R$ such that $f(t_j) = g(j)$ for any $j = 1, \ldots, q$. That is, any $\pi$-free queried subsequence might possibly be contained in a $\pi$-free sequence, and hence must be accepted by any one-sided test.
Therefore, any one-sided test for $\pi$-freeness can be seen as a non-adaptive \emph{search algorithm for $\pi$ in $f$}, whose goal is to find a $\pi$-copy in an unknown input sequence $f$ that is guaranteed to be $\eps$-far from $\pi$-freeness, with success probability at least $2/3$. 

We use Yao's principle, constructing a family $\mathcal{F}$ of sequences $f\colon[n] \to \R$ that are $\eps$-far from $\pi$-freeness, which satisfies the following property for some constant $c_k > 0$.
For any $1 \leq t_1 < \ldots < t_q \leq n$ where $q < c_k \eps^{-1/u} n^{1 - 1/u}$, it holds that 
\begin{align}
\label{eqn:lower_bound_query}
\Pr_{f \in \mathcal{F}} \left( \text{subsequence of $f$ in indices $t_1, \ldots, t_q$ contains a $\pi$-copy} \right) < 2/3.
\end{align}
Combining~\eqref{eqn:lower_bound_query} with a standard Yao-type argument completes the proof,
as it implies that any (possibly probabilistic) search algorithm for $\pi$ in $f$, where $f$ is chosen uniformly at random from $\mathcal{F}$, must make $c_k \eps^{-1/u} n^{1 - 1/u}$ queries to have success probability $2/3$.

\paragraph{Constructing $\mathcal{F}$}
In the rest of the proof, we present a family $\mathcal{F} = \mathcal{F}(P)$ for which~\eqref{eqn:lower_bound_query} holds.
Let us describe the structure of the sequences $f \in \mathcal{F}$ before diving into the technical details. A sequence $f \in \mathcal{F}$ looks like a blowup of $f_{P}$, where each blown up part is planted, starting at a random location, inside a longer interval (making it hard for a non-adaptive test to ``guess'' where each part is located inside its interval). More specifically, 
each part $\sigma_i$ of the partition $\Lambda$ corresponds to an interval $I_i$ in $f$ whose length is $|\sigma_i| m$. In this interval, there are $\eps n$ copies of $\sigma_i$ ordered in an increasing manner if $s_i$ is a $+$, and a decreasing manner if $\sigma_i$ is a $-$, where each $\sigma_i$-copy is a consecutive subsequence of $f$. The value of $f$ on these $\sigma_i$-copies (for each $i$) is ``aligned'' with other intervals, so that $f$ contains a set of $\eps n$ pairwise disjoint $\pi$-copies, without containing any other $\pi$-copy (here we use the fact that $P = (\Lambda, S)$ is unique). The rest of the elements in each interval are chosen in a manner that does not create any other $\pi$-copy.
To make $\mathcal{F}$ ``random enough,'' the points where the consecutive copies begin in each interval are chosen uniformly at random. This assures that the probability for each $k$-tuple of entries in $f$ to induce a $\pi$-copy is sufficiently small, proving~\eqref{eqn:lower_bound_query}.

We now provide the technical details. 
For every $1 \leq i \leq u$, write $\sigma_i = \pi[j_{i-1}+1,j_i]$, where $j_0 = 0$ and let $\delta_i = j_i - j_{i-1}$ denote the length of $\sigma_i$.
For any $1 \leq i \leq u$, let $I_i = \{m j_{i-1}  + 1, \ldots, m j_i \}$.
A sequence $f \colon [n] \to \R$ is in $\mathcal{F}$ if for any $1 \leq i \leq u$ there exists $0 \leq n_i \leq (1-k \eps) \delta_i m$, such that the following conditions hold.
\begin{itemize}
	\item For every $1 \leq i \leq u$ where $s_i$ is a $+$, and every $1 \leq l \leq \delta_i$ and $0 \leq r \leq \eps n - 1$, we take $f(m j_{i-1} + n_i + r \delta_i + l) = r + \pi_{j_{i-1} + l} / 2k$. We also take $f(x) = -1$ for any $m j_{i-1} + 1 \leq x \leq m j_{i-1} + n_i$,
	and $f(x) = n$ for any $m j_{i-1} + n_i + \eps n \delta_i + 1 \leq x \leq m j_i$.
	
	\item For every $1 \leq i \leq u$ where $s_i$ is a $-$, and every $1 \leq l \leq \delta_i$ and $0 \leq r \leq \eps n - 1$, we take $f(m j_{i-1} + n_i + r \delta_i + l) = (\eps n - 1 - r) + \pi_{j_{i-1} + l} / 2k$. We also take $f(x) = n$ for any $m j_{i-1} + 1 \leq x \leq m j_{i-1} + n_i$,
	and $f(x) = -1$ for any $m j_{i-1} + n_i + \eps n \delta_i + 1 \leq x \leq m j_i$.
\end{itemize}
\paragraph{Any $f \in \mathcal{F}$ is $\eps$-far from $\pi$-freeness}
Any such $f$ is $\eps$-far from $\pi$-freeness. Indeed, for any $0 \leq r \leq \eps n - 1$, the subsequence of $f$ consisting of all $k$ entries $x \in [n]$ for which $r < f(x) < r+1$  is a $\pi$-copy, so there is a set $\mathcal{D}_f$ of $\eps n$ pairwise-disjoint $\pi$-copies in $f$. 
\paragraph{Any $f \in \mathcal{F}$ does not contain non-trivial $\pi$-copies}
On the other hand, $f$ does not contain any other (i.e., non-trivial) $\pi$-copy. To show this we use the fact that $P = (\Lambda, S)$ is a unique signed partition. 

\begin{claim}
Let $f \in \mathcal{F}$.
If $f$ contains a non-trivial $\pi$-copy, then it contains a non-trivial copy without the values $-1$ and $m$.
\end{claim}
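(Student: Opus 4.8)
The plan is to reduce the number of ``padding'' entries (entries whose value is $-1$ or $n$) used by a non-trivial copy, eliminating them at most two at a time, and to keep track of non-triviality along the way.

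\textbf{Step 1 (value structure).} First I would record the elementary facts about the values of $f$. Every ``core'' value produced by the two bullets in the construction has the form $r+\pi_\star/2k$ with $0\le r\le \eps n-1$ and $1\le \pi_\star\le k$, so it lies strictly between $0$ and $\eps n\le n/2$; the only non-core values are $-1$ and $n$. Hence $-1$ is the unique minimum and $n$ the unique maximum among the values attained by $f$ (each attained possibly at many entries). Since a $\pi$-copy needs all $k$ of its values pairwise distinct, any $\pi$-copy contains at most one entry of value $-1$ and at most one of value $n$; moreover an entry of value $-1$ in a copy must sit at the copy-coordinate $\pi^{-1}(1)$ (it realizes the copy's strict minimum), and an entry of value $n$ at coordinate $\pi^{-1}(k)$. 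In particular a non-trivial copy $C=(c_1,\dots,c_k)$ uses at most the two padding entries $c_{\pi^{-1}(1)}$ (valued $-1$) and $c_{\pi^{-1}(k)}$ (valued $n$); if it uses neither, it is already the desired copy.

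\textbf{Step 2 (removing a $-1$-entry, and symmetrically an $n$-entry).} Suppose $C$ uses $c_a$ with $f(c_a)=-1$, where $a=\pi^{-1}(1)$, lying in the $-1$-padding block of some interval $I_i$. I would first locate the neighbours $c_{a-1},c_{a+1}$ of $c_a$ in $C$: since neither has value $-1$, neither lies in that same $-1$-padding block, and because in $I_i$ the $-1$-padding block is one of the two extreme blocks (adjacent to the core), one neighbour falls in an earlier interval while the other falls in the core (or the opposite-padding block) of $I_i$ or in a later interval. Thus the open location-range $(c_{a-1},c_{a+1})$ contains a genuine stretch of $I_i$'s core. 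I would then replace $c_a$ by a core entry $z$ of $I_i$ chosen in that stretch whose value is small enough to lie below every remaining value $f(c_b)$, $b\ne a$ --- possible because $I_i$'s core is $\eps n$ consecutive order-copies of $\sigma_i$, offering entries in every layer $(r,r+1)$ and at every coordinate of $\sigma_i$. The result $C'$ is again a $\pi$-copy (coordinate $a$ still carries the strict minimum and no other comparison changed) with one fewer $-1$-entry; a symmetric argument (the neighbours of the $n$-entry being likewise forced out of the relevant $n$-padding block) removes a possible $n$-entry, yielding a $\pi$-copy with no padding entry.

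\textbf{Step 3 (non-triviality, and the main obstacle).} It remains to keep the final copy non-trivial. If the starting $C$ already avoided padding it is non-trivial by hypothesis; otherwise $C$ is automatically non-trivial (a value $-1$ or $n$ lies in no layer $(r,r+1)$), and I must ensure the replacement does not collapse all $k$ entries into one layer. When the $k-1$ unchanged entries occupy more than one layer, any valid $z$ keeps $C'$ non-trivial; when they all lie in a single layer $(r^\star,r^\star+1)$ with $r^\star\ge1$, I would instead pick $z$ in layer $0$ (value $<1\le r^\star$, hence below all remaining values), available because the lowest layer of $I_i$'s core abuts the $-1$-padding block. The step I expect to be the main obstacle is the positional bookkeeping in the edge case where a neighbour of $c_a$ (say $c_{a+1}$) is exactly the first core entry of $I_i$, so $(c_{a-1},c_{a+1})$ barely reaches the core and there is no room for a purely local replacement: there one must first shift all of $C$'s entries inside $I_i$ to a later order-copy of $\sigma_i$ before inserting $z$, using that $\eps n$ is much larger than $k$, and then check that this shift preserves the $\pi$-pattern and non-triviality in every positional configuration --- this is the technical heart of the claim.
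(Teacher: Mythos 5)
Your Step 1 and the overall plan (swap out the padding entries one at a time) are fine, but Step 2 has a gap that goes beyond the positional edge case you flag, and it is a gap of substance: you keep the other $k-1$ entries of the copy fixed and replace only the $-1$-valued entry $c_a$ by a core entry $z$ of the \emph{same} interval $I_i$ whose value lies below every remaining value. Such a $z$ need not exist, for reasons of value rather than position: the smallest value $I_i$'s core can offer is $\min(\sigma_i)/2k$, while a remaining entry of the copy may sit in the bottom layer (values in $(0,1)$) of a \emph{different} interval with value $\pi_{j'}/2k < \min(\sigma_i)/2k$; in the extreme case $\pi_{j'}=1$, i.e.\ value $1/2k$, no non-padding entry of $f$ anywhere has a smaller value, so no admissible replacement for $c_a$ exists at all as long as the other entries stay put. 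Your justification (``entries in every layer and at every coordinate of $\sigma_i$'') only provides the coordinates of $\sigma_i$, not values below $\min\sigma_i$. The symmetric problem occurs at the top for the $n$-valued entry. Since the claim is itself a step toward proving that no non-trivial copies exist, its proof cannot presuppose any structural restriction ruling such configurations out; the argument must therefore be prepared to move the \emph{other} entries of the copy as well, not just the padding entry.

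That is exactly what the paper's proof does, and it also disposes of what you call the ``technical heart.'' The paper iterates a global, layer-wise lifting: whenever the copy has entries with values in $[r-1,r)$ but none with value in $(r,r+1)$, \emph{all} copy entries with values in $[r-1,r)$ — across all intervals simultaneously — are replaced by the entries with value exactly one higher (and, for $r=0$, the $-1$-entry is replaced by the closest entry with value in $(0,1)$). Because the target layer is empty of copy entries and each lifted entry moves by exactly $\delta_i$ positions within its own interval, both the value order and the positional order of the copy are preserved, and iterating first vacates the layer $(0,1)$ (respectively the top layer) and then absorbs the $-1$ (respectively $n$) entry; non-triviality is preserved since the number of distinct layers/extreme values occupied by the copy never decreases. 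Note also that your proposed fix for the flagged edge case — shifting only the copy's entries inside $I_i$ to a later order-copy of $\sigma_i$ — is unsound as stated: adding $1$ to the values of the $I_i$-entries can make them cross the values of copy entries lying in other intervals one layer up, destroying the pattern; the empty-target-layer condition in the paper's lifting is precisely what prevents this.
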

\begin{proof}[Proof sketch]
Recall that $k < \eps m$. The proof follows by applying iteratively the following fact, and its symmetric counterpart. If $t = (t_1, \ldots, t_k)$ is a $\pi$-copy in $f$, and if there exist $0 \leq r < \eps n - 1$ and $i \in [k]$ such that $r-1 \leq f(t_i) < r$, but there is no $j \in [k]$ for which $r < f(t_j) < r+1$, then $f$ also contains a $\pi$-copy created by the following ``lifting process,'' that replaces all entries with values between $r-1$ (inclusive) and $r$ (exclusive) with entries whose values are bigger than $r$ and smaller than $r+1$. 

If $r > 0$, we replace any $t_i$ satisfying $r-1 < f(t_i) < r$ with the unique entry $t'$ satisfying $f(t') = f(t_i) + 1$. If $r = 0$ we replace $t_i$ with the closest entry $t'$ among those satisfying $0 < f(t') < 1$.
\end{proof}

Suppose now to the contrary that $f$ contains a non-trivial $\pi$-copy in the entries $x_1 < \ldots < x_k \in n$, without the values $-1$ and $m$, and let $R = \setOfSuchThat{ \lfloor f(x_i) \rfloor }{ 1 \leq i \leq k } \subseteq \{0, 1, \ldots, \eps n - 1\}$, so $2 \leq |R| \leq k$.
We now arbitrarily add elements from $\{0, 1, \ldots, \eps m - 1\}$ to $R$ to obtain a set $R'$ of size \emph{exactly} $k$.
 
Let $g$ be the subsequence of $f$ over the set of entries $W(R') = \{w \in [n] : \lfloor f(w) \rfloor \in R' \}$. In particular $x_1, \ldots, x_k \in W(R')$, so $g$ contains a non-trivial $\pi$-copy.
But this is a contradiction -- the nature of our construction (and in particular, the choice of signs) implies that $g$ is order-isomorphic to the sequence $f_P$ given in~\autoref{def:USPN}, which does not contain non-trivial $\pi$-copies (as $P$ is unique). 
Thus, the only $\pi$-copies in $f$ are the trivial copies that come from $\mathcal{D}_f$.

\paragraph{Analysis: $\mathcal{F}$ satisfies desired conditions}
Finally, we show that the probability for a $k$-tuple $1 \leq t_1 < \ldots < t_k \leq n$ 
to induce a $\pi$-copy in a sequence $f \in \mathcal{F}$ chosen uniformly at random is sufficiently small. 
We may restrict ourselves to tuples containing exactly $\delta_i$ entries in $I_i$ for any $1 \leq i \leq u$, as these are the only tuples with positive probability to induce a $\pi$-copy.
Suppose that $f \in \mathcal{F}$ contains a $\pi$-copy in entries $t_1 < \ldots < t_k$. This copy must come from $\mathcal{D}_f$, and so there exists some $ 0\leq r \leq \eps n - 1$ such that $r < f(t_l) < r+1$ for any $1 \leq l \leq k$. The values of $r$ and $t_{j_1}, t_{j_2}, \ldots, t_{j_u}$ determine $n_1, \ldots, n_u$ uniquely. In other words, $f$ is the only sequence, among all $|\mathcal{F}| > (n/2k)^{u} \geq (2k)^{-k}n^u$ sequences from $\mathcal{F}$, that has a $\pi$-copy of height between $r$ and $r+1$ whose $j_i$-th entry lies in $t_{j_i}$, for any $1 \leq i \leq u$.
In total, only at most $\eps n$ such possible choices $f \in \mathcal{F}$ have a $\pi$-copy whose $j_i$-th entry lies in $t_{j_i}$, for any $1 \leq i \leq u$. The following inequality summarizes the discussion.
\begin{equation}
\Pr_{f \in \mathcal{F}} \left( \text{subsequence of $f$ in indices $t_{j_1}, \ldots, t_{j_u}$ is contained in a $\pi$-copy} \right) \leq \frac{\eps n}{|\mathcal{F}|} < \frac{(2k)^{k} \eps}{n^{u-1}} 
\end{equation}
We are now ready to finish the proof of~\eqref{eqn:lower_bound_query}. 
Pick $c_k =(3k)^{-k/u}$, and
let $t = (t_1, \ldots, t_q)$ with $1 \leq t_1 < \ldots < t_q \leq n$ be a $q$-tuple, where $q < c_k \eps^{-1/u} n^{1 - 1/u}$. $t$ contains $\binom{q}{u} \leq q^u$ $u$-tuples, so
by a union bound, the expected number of $u$-tuples contained in a $\pi$-copy (over a uniform choice $f \in \mathcal{F}$) is less than $(2k)^k \eps q^u n^{1-u} < 2/3$. Thus, the probability that the subsequence of $f$ on $t$ contains a $\pi$-copy is less than $2/3$, as desired.
\end{proof}

\begin{proof}[Proof of~\autoref{thm:upper_bound_is_tight}]
Using~\autoref{thm:main_thm_lower2}, it is enough to show that $u(\pi) = k-1$ for any permutation $\pi$ of length $k$ satisfying $|\pi^{-1}(1) - \pi^{-1}(k)| = 1$. Let $\pi = (\pi_1, \ldots, \pi_k)$ be a permutation of length $k$, and assume, without loss of generality, that $\pi_\ell = 1$ and $\pi_{\ell+1} = k$ for some $1 \leq \ell \leq k-1$.
We take the following signed partition $P = (\Lambda, S)$ of size $k-1$. $\Lambda = (\sigma_1, \ldots, \sigma_{k-1})$ where $\sigma_i$ consists of the single element $\pi_i$ for any $i < \ell$, $\sigma_\ell = (1, k)$, and $\sigma_i$ is the single element $\pi_{i+1}$ for any $i > \ell$. The sign vector $S = (s_1, \ldots, s_{k-1})$ is defined as follows.
$s_\ell$ is a $-$, and for any $i \neq \ell$, $s_i$ is a $+$ if and only if $\pi_{i} > \pi_{i+1}$.

We now show that $P$ is unique, implying that $u(\pi) \geq |P| = k-1$, as needed.
Consider the sequence $f = f_P$, as defined in~\autoref{def:USPN}. 
We partition the entries of $f_P$ into intervals $I_1, \ldots, I_{k-1}$, where $I_i$ contains all entries that participate in the $\sigma_i$-part of some $\pi$-copy in $f_P$. In other words, $I_i = \{(i-1)k + 1 , \ldots, ik\}$ for any $1 \leq i < \ell$, $I_\ell =  \{(\ell-1)k +1, \ldots, (\ell+1)k\}$ and $I_i = \{ik+1, \ldots, (i+1)k\}$ for any $\ell < i \leq k-1$.

Let $q = (q_1, \ldots, q_\ell)$ be a $\pi$-copy in $f_P$. The following claim sheds light on the structure of $q$ with respect to the intervals $I_1, \ldots, I_{k-1}$.
\begin{claim}
\label{claim:strcutre_copy_simple_case}
For any $i = 1, \ldots, k$ let $\ind(i)$ denote the index of the interval containing $q_i$, that is, $q_i \in I_{\ind(i)}$. Then
$\ind(i) \geq i$ for any $i \leq \ell$ and $\ind(i) \leq i-1$ for any $i \geq \ell+1$.
\end{claim}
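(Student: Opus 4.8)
The plan is to analyze the structure of a $\pi$-copy $q=(q_1,\dots,q_k)$ in $f_P$ by tracking, for each coordinate $i$, which interval $I_{\ind(i)}$ the entry $q_i$ falls into, and then reading off constraints from the defining property of $f_P$: within each interval $I_j$, the values of $f_P$ are (up to the $\pi_{\cdot}/2k$ perturbation) exactly the integer "heights" $0,\dots,k-1$ (arranged increasingly or decreasingly according to $s_j$), so the integer part $\lfloor f_P(q_i)\rfloor$ tells us which of the $k$ trivial $\sigma_j$-layers $q_i$ sits in. First I would record the easy monotonicity observations: since $\Lambda$ consists of singletons except for $\sigma_\ell=(1,k)$, each interval $I_j$ with $j\ne\ell$ contributes a single value (one "column") to any copy, and $I_\ell$ contributes the pair corresponding to $(1,k)$; moreover $q_1<\dots<q_k$ forces $\ind(1)\le \ind(2)\le\dots\le\ind(k)$ in a weak sense, with the caveat that two consecutive coordinates $q_\ell,q_{\ell+1}$ may share the interval $I_\ell$.

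The core of the argument is a "charging by height" comparison. Because $\pi_\ell=1$ is the global minimum of $\pi$ and $\pi_{\ell+1}=k$ is the global maximum, in any $\pi$-copy $q$ we must have $f_P(q_\ell)$ strictly below all other $f_P(q_i)$ and $f_P(q_{\ell+1})$ strictly above all other $f_P(q_i)$. I would translate this into a statement about the heights $h(i):=\lfloor f_P(q_i)\rfloor\in\{0,\dots,k-1\}$: $h(\ell)$ is the strict minimum and $h(\ell+1)$ the strict maximum among all $h(i)$ — actually one needs to be slightly careful because distinct coordinates can land in the same layer of different intervals, but the point is that the relative order of values is governed by $(h(i),\pi_{\text{(local index)}})$ lexicographically, and since $\pi_\ell=1$, $\pi_{\ell+1}=k$ are the extremes, the height coordinates themselves must already be weakly extremal and in fact strictly so once we account for the positions being in distinct intervals. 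Then I would argue: for $i\le\ell$, the sequence of entries $q_1,\dots,q_i$ occupies $i$ distinct intervals among $I_1,\dots,I_{k-1}$ that are weakly increasing in index, so $\ind(i)\ge i$ provided no index is "skipped down" — and skipping is impossible here precisely because the partition is into singletons with only $I_\ell$ doubled. Symmetrically, for $i\ge\ell+1$, the suffix $q_i,\dots,q_k$ occupies $k-i+1$ distinct intervals weakly increasing, forcing $\ind(i)\le i-1$ (the $-1$ coming from the fact that $I_\ell$ absorbs the two coordinates $\ell,\ell+1$, so the interval count is $k-1$ rather than $k$).

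I expect the main obstacle to be the bookkeeping around the doubled interval $I_\ell$ and the off-by-one shifts $\sigma_i=\pi_i$ for $i<\ell$ versus $\sigma_i=\pi_{i+1}$ for $i>\ell$: one has to verify that the weak inequalities $\ind(i)\le\ind(i+1)$ are genuinely forced (i.e. that a copy cannot "fold back" and reuse an earlier interval at a higher height), and that the strict extremality of $q_\ell$ and $q_{\ell+1}$ pins $\ind(\ell)$ and $\ind(\ell+1)$ to be $\ell$ exactly, which then seeds the two inductions. Concretely I would prove $\ind(\ell)\ge\ell$ and $\ind(\ell+1)\le\ell$ first, using that $f_P(q_\ell)$ being the global minimum of the copy forbids $q_\ell$ from lying in any interval $I_j$ with $j<\ell$ at a non-minimal height while some later coordinate sits lower — and then extend outward by induction on $|i-\ell|$. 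Once \autoref{claim:strcutre_copy_simple_case} is in hand, uniqueness of $P$ follows immediately, since for $i=\ell$ the claim gives $\ind(\ell)\ge\ell$ and for $i=\ell+1$ it gives $\ind(\ell+1)\le\ell$, forcing $\ind(\ell)=\ind(\ell+1)=\ell$ and, propagating, $\ind(i)=i$ for $i<\ell$ and $\ind(i)=i-1$ for $i>\ell$, so every coordinate lies in its "own" interval and the copy must be one of the $k$ trivial ones — hence $f_P$ has no non-trivial $\pi$-copy, $P\in\mathcal U(\pi)$, and $u(\pi)\ge k-1$.
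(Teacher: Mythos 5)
There is a genuine gap. You correctly identify that the crux is to show two consecutive copy entries $q_{i-1},q_i$ cannot land in the same singleton interval, but you never supply the mechanism that actually forbids this, and the substitutes you propose do not work. Your assertion that ``the partition is into singletons with only $I_\ell$ doubled'' prevents two $q$-entries from sharing an interval is false on its face: each singleton interval $I_j$ (for $j\neq\ell$) still contains $k$ distinct positions, one per height $m=0,\dots,k-1$, so two entries of a putative copy can perfectly well both lie inside $I_j$. Likewise, your suggestion that the strict extremality of the heights $h(\ell),h(\ell+1)$ ``pins'' $\ind(\ell)$ and $\ind(\ell+1)$ has no teeth: nothing in a height argument alone prevents $q_\ell$ from sitting at the minimal layer of an earlier singleton interval, and your phrase ``while some later coordinate sits lower'' contradicts the very premise that $q_\ell$ realizes the global minimum.

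What is actually needed, and what the paper's proof supplies, is the choice of signs. Taking the smallest $i\le\ell$ with $\ind(i)<i$ forces $\ind(i-1)=\ind(i)=i-1$, so $q_{i-1},q_i\in I_{i-1}$ with $q_{i-1}<q_i$. The sign $s_{i-1}$ was \emph{defined} to be $+$ iff $\pi_{i-1}>\pi_i$, i.e.\ the values of $f_P$ on $I_{i-1}$ increase in position exactly when $\pi$ requires them to decrease, and vice versa; hence the order of $f_P(q_{i-1}),f_P(q_i)$ contradicts the order mandated by $\pi$, and no such $i$ exists. This sign argument is the entire point of the construction, and your proposal neither invokes nor reconstructs it. Separately, anchoring at $\ell$ and ``extending outward by induction on $|i-\ell|$'' is the wrong induction direction: the inequality $\ind(i)\ge i$ for all $i\le\ell$ is obtained from a minimal-counterexample over $i$ (equivalently, induction from $i=1$ upward), since that is what lets you conclude two \emph{consecutive} entries share an interval; knowing $\ind(\ell)\ge\ell$ does not propagate downward in any obvious way.
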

\begin{proof}
	Suppose to the contrary that $\ind(i) < i$ for some $i \leq \ell$, and consider the smallest $i$ satisfying this. Then $\ind(i-1) = \ind(i) = i-1$, that is, $q_{i-1}, q_i \in I_{i-1}$.
	This is a contradiction: If $\pi_{i} > \pi_{i-1}$ then $s_{i-1}$ is a $-$, so the subsequence of $f$ restricted to $I_{i-1}$ is decreasing, contradicting the fact that $q$ is a $\pi$-copy, that must satisfy $f(q_i) > f(q_{i-1})$ since $\pi_i > \pi_{i-1}$. If $\pi_{i} < \pi_{i-1}$ then $s_{i-1}$ is a $+$ and, symmetrically, we have a contradiction.
	Thus, $\ind(i) \geq i$ for any $i \leq \ell$. The proof that $\ind(i) \leq i-1$ for any $i \geq \ell+1$ is symmetric.
\end{proof}
As a special case of~\autoref{claim:strcutre_copy_simple_case}, we conclude that $q_\ell, q_{\ell+1} \in I_\ell$ for any $\pi$-copy $q = (q_1, \ldots, q_k)$. 
This implies that $f_P(q_\ell) = r + 1/2k$ and $f_P(q_{\ell+1}) = r + 1/2$ for some integer $r$ (since the only length-$2$ subsequences of $f$ inside $I_\ell$ that are increasing are $(r+1/2k, r+1/2)$, for any integer $0 \leq r \leq k-1$). 
Hence, for any $i \neq \ell, \ell+1$, $q_i$ must be the unique entry satisfying $f(q_i) = r + \pi_i / 2k$.
We conclude that $f_P$ does not contain non-trivial $\pi$-copies, so $P$ is unique.
\end{proof}

The proof of~\autoref{thm:entangling_to_unique} is based on ideas that are similar to those of the proof of~\autoref{thm:upper_bound_is_tight}, and in particular, a generalized form of~\autoref{claim:strcutre_copy_simple_case} serves as an important tool in the proof.
\begin{proof}[Proof of~\autoref{thm:entangling_to_unique}]
Let $\pi$ be a permutation of length $k$, and let $E = \{\tau_1, \ldots, \tau_t \}$ be an entangling 
of $\pi$ whose resulting partition $\Lambda = \Lambda(E) = (\sigma_1, \ldots, \sigma_d)$ is of size $d = d(\pi)$. 
For any $1 \leq \ell \leq t$, denote by $\lambda(\ell)$ the unique index satisfying $\tau_\ell = \sigma_{\lambda(\ell)}$.
For any $1 \leq i \leq d$, write $\sigma_i = \pi[j_{i-1}+1, j_i]$ where $j_0 = 0$ and $j_d = k$.
We choose the sign vector $S = (s_1, \ldots, s_d)$ as follows. 
\begin{itemize}
	\item For any $1 \leq i \leq d$ where $\sigma_i$ contains more than one element, $s_i$ is a $+$ if $\min{\sigma_i}$ lies after $\max{\sigma_i}$ in $\pi$, and otherwise $s_i$ is a $-$.
	\item For any $i < \lambda(1)$ where $\sigma_i$ is a single element, $s_i$ is a $+$ if and only if $\pi_{j_i} > \pi_{j_i + 1}$.
	\item For any $i > \lambda(1)$ where $\sigma_i$ is a single element, $s_i$ is a $+$ if and only if $\pi_{j_i} < \pi_{j_i - 1}$.
\end{itemize}
To finish the proof, we shall show that the signed partition $P = (\Lambda, S)$ is unique, implying that $u(\pi) \geq d = d(\pi)$. For this, we need to show that $f = f_P$ does not contain non-trivial $\pi$-copies. 
As in the proof of~\autoref{thm:upper_bound_is_tight}, for any $1 \leq i \leq d$ let $I_i = \{k j_{i-1}+1, \ldots, kj_i\}$. 
Let $q = (q_1, \ldots, q_k)$ be a $\pi$-copy in $f_P$.
The following claim is the equivalent of~\autoref{claim:strcutre_copy_simple_case} for our more general case.
\begin{claim}
	\label{claim:strcutre_copy_general_case}
	For any $i = 1, \ldots, k$ let $\ind(i)$ denote the index for which $q_i \in I_{\ind(i)}$. Then
	$\ind(j_{i-1} + 1) \geq i$ for any $i \leq \lambda(1)$ and $\ind(j_i) \leq i$ for any $i \geq \lambda(1)$.
\end{claim}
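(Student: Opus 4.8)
The plan is to follow the minimal-counterexample strategy of \autoref{claim:strcutre_copy_simple_case}, but with whole parts of $\Lambda$ in place of single elements, the extra work being packaged into a lemma about a single interval $I_m$. First I would record two facts that hold for \emph{any} $\pi$-copy $q$: since $q_1<\dots<q_k$ as positions and $I_1<\dots<I_d$ are ordered intervals partitioning $[k^2]$, the map $\ind$ is non-decreasing; and the restriction of $f_P$ to $I_m$ consists of $k$ consecutive ``blocks'', each of length $|\sigma_m|$ and order-isomorphic to $\sigma_m$, stacked with strictly increasing integer heights when $s_m=+$ and strictly decreasing heights when $s_m=-$. I then isolate the key lemma: for every part $\sigma_m$ with $m<\lambda(1)$, $I_m$ contains no subsequence order-isomorphic to $\sigma_m$ followed by $\pi_{j_m+1}$ (the first entry of $\sigma_{m+1}$); and symmetrically, for every $\sigma_m$ with $m>\lambda(1)$, $I_m$ contains no subsequence order-isomorphic to $\pi_{j_{m-1}}$ (the last entry of $\sigma_{m-1}$) followed by $\sigma_m$.

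Granting the lemma, the first half of the claim follows by induction on $i$ from $1$ to $\lambda(1)$, the base case $i=1$ being trivial. For the step, suppose toward a contradiction that $\ind(j_{i-1}+1)\le i-1$; by the induction hypothesis $\ind(j_{i-2}+1)\ge i-1$, and monotonicity of $\ind$ forces $\ind$ to equal $i-1$ identically on the index range $[j_{i-2}+1,\,j_{i-1}+1]$. Hence $q_{j_{i-2}+1},\dots,q_{j_{i-1}},q_{j_{i-1}+1}$ all lie inside $I_{i-1}$, and, being a sub-pattern of $q$, they are order-isomorphic to $\sigma_{i-1}$ followed by $\pi_{j_{i-1}+1}$ --- contradicting the lemma (note $i-1<\lambda(1)$ since $2\le i\le\lambda(1)$, so $\sigma_{i-1}$ sits to the left of $\tau_1$). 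The second half is the mirror image: take the \emph{largest} offending $i\ge\lambda(1)$, force $\ind$ to be constant $i+1$ on $[j_i,\,j_{i+1}]$, and contradict the second part of the lemma.

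It remains to prove the lemma, which is where the sign conventions and the ``not shadowed'' requirement enter. For $\sigma_m$ with $m<\lambda(1)$ I would argue in two steps. Step (i): any subsequence of $I_m$ order-isomorphic to $\sigma_m$ lies within a single block. This is vacuous when $|\sigma_m|=1$; when $|\sigma_m|\ge 2$, the sign $s_m$ was chosen according to whether $\min\sigma_m$ precedes $\max\sigma_m$ in $\pi$, which is exactly what makes the block heights vary in the direction forcing the minimum-value and maximum-value entries of such a copy into the same block, after which a short position-versus-height comparison puts every remaining entry in that block too; hence the copy occupies one entire block $B$. Step (ii): the entry matching $\pi_{j_m+1}$ lies at a position after all of $B$, so either $I_m$ has no room for it (immediate contradiction) or it lies in a strictly later block, whose height is strictly smaller if $s_m=-$ and strictly larger if $s_m=+$; this forces $\pi_{j_m+1}<\min\sigma_m$ (resp.\ $\pi_{j_m+1}>\max\sigma_m$). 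Finally, if $|\sigma_m|=1$ the singleton sign rule ($s_m=+$ iff $\pi_{j_m}>\pi_{j_m+1}$) immediately contradicts this, exactly as in \autoref{claim:strcutre_copy_simple_case}; and if $\sigma_m=\tau_\ell$ for some $\ell\ge 2$, the relevant case of ``$\tau_\ell$ is not shadowed with respect to $\tau_1$'' --- the case $y'<x$, where $\pi_{y'+1}=\pi_{j_m+1}$ --- says precisely that $\pi_{j_m+1}$ cannot be below $\min\sigma_m$ (resp.\ above $\max\sigma_m$), the desired contradiction. The $m>\lambda(1)$ half is symmetric, using the sign rules for indices $>\lambda(1)$ and the cases $x'>y$ of the shadowed definition (with $\pi_{x'-1}=\pi_{j_{m-1}}$).

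I expect step (i) --- ``a copy of $\sigma_m$ inside its own $k$-fold stack must live in a single block'' --- together with the bookkeeping that matches the four cases of the shadowed definition to the two signs and to the left/right position of the part relative to $\tau_1$, to be the only genuine obstacle; everything else is a faithful translation of the length-three argument.
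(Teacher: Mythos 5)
Your proposal is correct and follows essentially the same route as the paper's proof: an extremal-counterexample/induction argument using monotonicity of $\ind$ to force $q_{j_{i-2}+1},\dots,q_{j_{i-1}+1}$ into a single interval $I_{i-1}$, then ruling out such a configuration via the sign choice (which confines any $\sigma_{i-1}$-copy in $I_{i-1}$ to a single trivial block) together with the not-shadowed condition, or the singleton sign rule, exactly as in the paper's generalization of \autoref{claim:strcutre_copy_simple_case}. The only difference is cosmetic: you spell out (via your step (i)) why only trivial $\sigma_{i-1}$-copies occur inside $I_{i-1}$, a fact the paper asserts directly from the choice of $s_{i-1}$, and you make the case matching between the two signs and the four shadowed cases explicit.
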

\begin{proof}
We shall prove the claim for $i \leq \lambda(1)$, as the proof for $i \geq \lambda(1)$ is symmetric.
Suppose to the contrary that there exists $i \leq \lambda(1)$ for which $\ind(j_{i-1} + 1) < i$, and consider the smallest such $i$. Then $\ind(j_{i-2} + 1) \geq i-1$, and so $\ind(j) = i-1$ for any $j_{i-2} + 1 \leq j \leq j_{i-1} + 1$. 

We show that $I_{i-1}$ does not contain a copy of $\pi[j_{i-2}+1, j_{i-1}+1]$, leading to a contradiction. If $|\sigma_{i-1}| = 1$ then the choice of the sign $s_{i-1}$ is a $+$ if $\pi_{j_{i-1}+1} < \pi_{j_{i-1}}$, and a $-$ otherwise; in the first case, the entries in $I_{i-1}$ are increasing and so it cannot contain $\pi[j_{i-i}, j_{i-1}+1]$, which is a decreasing sequence, a contradiction. In the other case we also get a contradiction, symmetrically.
Thus, from here onwards we may assume that $\sigma_{i-1}$ contains more than one element.

The choice of the sign $s_{i-1}$ implies that the only $\sigma_{i-1}$-copies in the subsequence of $f_P$ on the interval $I_{i-1}$ are the trivial ones, i.e., those that contain all $|\sigma_{i-1}|$ elements between $r$ and $r+1$ for some integer $0 \leq r \leq k-1$.
Thus, we may assume that $r < f_P(q_j) < r+1$ for any $j_{i-2} + 1 \leq j \leq j_{i-1}$.

Without loss of generality, assume that $s_{i-1}$ is a $+$; this corresponds to the case where $\max \sigma_{i-1}$ lies before $\min \sigma_{i-1}$ in $\pi$.
Since $E$ is an entangling, we know that $\sigma_{i-1}$ is not shadowed with respect to $\sigma_{\lambda(1)}$. This means that $\pi_{j_{i-1} + 1} < \max \sigma_{i-1}$, and so $f_P(q_{j_{i-1} + 1}) < r+1$. But this contradicts the fact that $\ind(j_{i-1} + 1) = i-1$: All $|\sigma_{i-1}|$ entries in $I_{i-1}$ whose value is between $r$ and $r+1$ are assigned to $q_{j_{i-2} + 1}, \ldots, q_{j_{i-1}}$,
and all entries $x$ of $I_{i-1}$ that come after these entries satisfy $f_P(x) > r+1$. In particular, $q_{j_{i-1} + 1} \in I_{i-1}$ so $f_P(q_{j_{i-1} + 1}) > r+1$, a contradiction.
\end{proof}

To show that $q$ is a trivial $\pi$-copy, we prove the following claim by induction.
\begin{claim}
	\label{claim:characterize_copies_general}
	There exists an integer $r = r(q)$ where $0 \leq r \leq k-1$, satisfying the following.
	For any $1 \leq l \leq t$, and any $j_{\lambda(\ell) - 1} + 1 \leq j \leq j_{\lambda(\ell)}$, 
	it holds that $q_j \in I_{\lambda(\ell)}$, and more specifically, $f_P(q_j) = r + \pi_j / 2k$.
\end{claim}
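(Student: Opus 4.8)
The plan is to prove \autoref{claim:characterize_copies_general} by induction on $\ell$, fixing the integer $r = r(q)$ once and for all in the base case and then showing that every entangling part, in the order $\tau_1, \ldots, \tau_t$, has all of its entries placed at level $r$ inside the interval it owns.

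For the base case $\ell = 1$, I would apply \autoref{claim:strcutre_copy_general_case} with $i = \lambda(1)$, which gives $\ind(j_{\lambda(1)-1}+1) \geq \lambda(1)$ and $\ind(j_{\lambda(1)}) \leq \lambda(1)$; since $\ind$ is non-decreasing (as $q_1 < \cdots < q_k$ and the intervals $I_1, \ldots, I_d$ are ordered), squeezing between these two bounds yields $q_j \in I_{\lambda(1)}$ for every $j$ in the range of $\tau_1 = \sigma_{\lambda(1)}$. Because $|\tau_1| \geq 2$ and the sign $s_{\lambda(1)}$ is chosen exactly so that the only $\sigma_{\lambda(1)}$-copies in the restriction of $f_P$ to $I_{\lambda(1)}$ are trivial, all of these entries lie at one common level; call it $r = r(q) \in \{0, \ldots, k-1\}$. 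Then $f_P(q_j) = r + \pi_j/2k$ for every such $j$, which defines $r$.

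For the inductive step $\ell \geq 2$, assume the claim holds for $\tau_1, \ldots, \tau_{\ell-1}$ with this $r$, and (by the mirror symmetry of the construction) assume $\tau_\ell$ lies to the right of $\tau_1$, so that $a_\ell > b_1$ and, by the defining property of the entangling, $\mu := \min_{i < \ell}\min\tau_i < \pi_{a_\ell} < \max_{i < \ell}\max\tau_i =: M$. I would proceed in three steps. First, the induction hypothesis puts all entries of $q$ indexed by $\tau_1, \ldots, \tau_{\ell-1}$ at level $r$, so the minimum value among them is $r + \mu/2k$ and the maximum is $r + M/2k$; since $q$ is order-isomorphic to $\pi$ and $\mu < \pi_{a_\ell} < M$, this forces $r < r + \mu/2k < f_P(q_{a_\ell}) < r + M/2k \leq r + 1/2 < r+1$, hence $\lfloor f_P(q_{a_\ell}) \rfloor = r$: the entry $q_{a_\ell}$ sits at level $r$ inside whatever interval contains it. Second, I would show $q_j \in I_{\lambda(\ell)}$ for every $j$ in the range of $\tau_\ell$. \autoref{claim:strcutre_copy_general_case} already gives $\ind(j_{\lambda(\ell)}) \leq \lambda(\ell)$, hence $\ind(a_\ell) \leq \lambda(\ell)$ by monotonicity, so it suffices to rule out $\ind(a_\ell) = c$ with $c < \lambda(\ell)$ (once $q_{a_\ell}$ and $q_{j_{\lambda(\ell)}}$ are both in $I_{\lambda(\ell)}$, monotonicity squeezes the whole block in). If $\sigma_c$ is one of the already-processed parts $\tau_1, \ldots, \tau_{\ell-1}$, this is immediate: its entries of $q$ fill the entire level-$r$ slice of $I_c$ (which has exactly $|\sigma_c|$ positions), leaving no room for the level-$r$ entry $q_{a_\ell}$, whose position lies outside the range of $\sigma_c$. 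If $\sigma_c$ is a singleton or a not-yet-processed entangling part, I would instead run a minimality-and-capacity argument in the spirit of the proof of \autoref{claim:strcutre_copy_general_case}: assuming the $\tau_\ell$-block escapes $I_{\lambda(\ell)}$, a minimal offending index crowds one extra entry of $q$ into some interval's level-$r$ slice past its capacity, where the non-shadowing of $\tau_\ell$ with respect to $\tau_1$ pins $\pi_{a_\ell - 1}$ on the appropriate side of $\min\tau_\ell$ or $\max\tau_\ell$ (according to $s_{\lambda(\ell)}$) to produce exactly this overflow. Third, with $q_j \in I_{\lambda(\ell)}$ for all $j$ in the range of $\tau_\ell$ in hand, the fact that $|\tau_\ell| \geq 2$ and $s_{\lambda(\ell)}$ is chosen so that only trivial $\sigma_{\lambda(\ell)}$-copies occur in $I_{\lambda(\ell)}$ forces these entries to a single common level $r'$; since $q_{a_\ell}$ is among them and is at level $r$ by the first step, $r' = r$, and $f_P(q_j) = r + \pi_j/2k$ for all $j$ in the range of $\tau_\ell$. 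This closes the induction.

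The main obstacle is the second step in the sub-case where $\sigma_c$ is a singleton or a later entangling part: ruling out that the block indexed by $\tau_\ell$ drifts into a wrong interval. This is precisely the point at which both the non-shadowing hypothesis and the anchor inequality $\mu < \pi_{a_\ell} < M$ are essential, and it has to be organized as a careful minimality argument parallel to the proof of \autoref{claim:strcutre_copy_general_case}. Everything else is routine: the base case and the level identification in the first step follow from \autoref{claim:strcutre_copy_general_case} and the monotonicity of $\ind$, while the collapse to a common level in the third step is immediate from how the signs in $S$ were chosen. Finally, once \autoref{claim:characterize_copies_general} is proved, uniqueness of $P$ follows quickly: the covering condition of the entangling forces the remaining entries of $q$ (those indexed by singleton parts) to level $r$ as well, so $q$ is exactly the trivial $\pi$-copy at level $r$; hence $f_P$ contains no non-trivial $\pi$-copy, and $u(\pi) \geq d = d(\pi)$.
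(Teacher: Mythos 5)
Your base case and overall architecture (induction over the entangling order, a squeeze via \autoref{claim:strcutre_copy_general_case} and monotonicity of $\ind$, then the sign choice to collapse the block to one level) match the paper, and your capacity argument for the sub-case where $q_{a_\ell}$ would land in the interval of an already-processed part is fine. But the crux of the inductive step --- ruling out that the endpoint of the $\tau_\ell$-block nearest to $\tau_1$ lands in an interval $I_c$ belonging to a singleton or a not-yet-processed entangling part, with $\lambda(1) < c < \lambda(\ell)$ --- is exactly the part you leave as a one-sentence sketch, and the sketched mechanism does not go through as stated. Your Step 1 only shows that this endpoint lies at level $r$, not that its value is $r + \pi_{a_\ell}/2k$; it is perfectly consistent with everything established so far that $f_P(q_{a_\ell}) = r + v/2k$ for some other value $v$ with $\mu < v < M$ belonging to a singleton or unprocessed part, which would place $q_{a_\ell}$ in that part's interval. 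A ``capacity overflow'' needs an interval whose level-$r$ slice is already known to be full, but for singletons and unprocessed parts nothing is yet known about where their $q$-entries sit; and the non-shadowing of $\tau_\ell$ with respect to $\tau_1$ constrains $\pi_{a_\ell - 1}$, an element of a \emph{different} part whose own $q$-position is equally unlocated at this stage, so it does not produce the overflow you describe. The local mechanism of \autoref{claim:strcutre_copy_general_case} (two consecutive parts crammed into a single interval, one of which owns it) is simply not the configuration you face here, so ``in the spirit of'' that proof is not a mechanical adaptation.

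The paper closes precisely this point with a short counting argument that your proposal is missing: if $j', j''$ are indices already pinned down, i.e.\ $f_P(q_{j'}) = r + \pi_{j'}/2k$ and $f_P(q_{j''}) = r + \pi_{j''}/2k$ with $\pi_{j'} < \pi_{j''}$, then the number of entries of $f_P$ with value strictly between these two values is exactly $\pi_{j''} - \pi_{j'} - 1$, and since $q$ is a $\pi$-copy it also contains exactly $\pi_{j''} - \pi_{j'} - 1$ entries in that value range; hence these entries of $q$ are \emph{all} of the $f_P$-entries in the range, which (by order-isomorphism) forces $f_P(q_j) = r + \pi_j/2k$ for every $j$ with $\pi_{j'} < \pi_j < \pi_{j''}$. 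Applied with $\pi_{j'}, \pi_{j''}$ taken from the previously processed parts sandwiching the endpoint's value (this is exactly what the entangling condition guarantees), this pins the \emph{exact} value, hence the interval, of the block endpoint; then \autoref{claim:strcutre_copy_general_case}, monotonicity of $\ind$, and the sign choice finish as in your Step 3. Without this counting step (or a genuine substitute for it), your Step 2 has a real gap.
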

\begin{proof}
The proof is by induction on $\ell$. By~\autoref{claim:strcutre_copy_general_case}, $q_{j_{\lambda(1)-1}+1}, \ldots, q_{j_{\lambda(1)}} \in I_{\lambda(1)}$. By our choice of the sign $s_{\lambda(1)}$, there must be some integer $0 \leq r \leq k-1$, such that for any $j_{{\lambda(1)}-1}+1 \leq j \leq j_{\lambda(1)}$, $q_j$ is the unique entry of $f_P$ satisfying $f_P(q_j) = r + \pi_j / 2k$. This settles the case $\ell=1$. 

Now let $\ell > 1$, and assume that $f_P(q_j) = r + \pi_j / 2k$ for any $j_{\lambda(\ell') - 1} + 1 \leq j \leq j_{\lambda(\ell')}$ where $1 \leq \ell' < \ell$. We need to show that $f_P(q_j) = r + \pi_j / 2k$ for any $j_{\lambda(\ell) - 1} + 1 \leq j \leq j_{\lambda(\ell)}$.  

For any $j', j'' \in [k]$ for which we already know that $f_P(q_{j'}) =  r + \pi_{j'} / 2k$, $f_P(q_{j''}) =  r + \pi_{j''} / 2k$, and $\pi_{j'} < \pi_{j''}$, it must be true that $f_P(q_{j}) = r + \pi_j / 2k$ for any $j$ satisfying $\pi_{j'} < \pi_j < \pi_{j''}$. To see this, note that the number of entries of $f_P$ with value between $f_P(q_{j'})$ and $f_P(q_{j''})$ (not including $f_P(q_{j'}), f_P(q_{j''})$ themselves) is exactly $\pi_{j''} - \pi_{j'} - 1$. Since $q$ is a $\pi$-copy, it also contains exactly $\pi_{j''} - \pi_{j'} - 1$ entries with value between $f_P(q_{j'})$ and $f_P(q_{j''})$, so these entries of $q$ must be precisely all entries of $f_P$ whose value lies in this range.

Without loss of generality, assume that $\lambda(\ell) < \lambda(1)$ (that is, $\tau_l = \sigma_{\lambda(\ell)}$ lies before $\tau_1 = \sigma_{\lambda(1)}$ in $\pi$).
Since $E$ is an entangling, we know that $\pi_{j'} < \pi_{j_{\lambda(\ell)}} < \pi_{j''}$ for some $\pi_{j'}, \pi_{j''} \in \bigcup_{\ell' < \ell} \tau_{\ell'}$. By the previous paragraph, 
$f_P(q_{j_{\lambda(\ell)}}) = r + \pi_{j_{\lambda(\ell)}} / 2k$, also implying that $\ind(j_{\lambda(\ell)}) = \lambda(\ell)$. By~\autoref{claim:strcutre_copy_general_case}, $\ind(j_{\lambda(\ell) - 1} + 1) \geq \lambda(\ell)$, so
we get that $\ind(j) = \lambda(\ell)$ for any $j_{\lambda(\ell) - 1} + 1 \leq j \leq j_{\lambda(\ell)}$.
Considering our choice of the sign $s_l$, it follows that $f_P(q_j) = r + \pi_j / 2k$ must hold for any $j_{\lambda(\ell) - 1} + 1 \leq j \leq j_{\lambda(\ell)}$. This concludes the inductive proof.
\end{proof}

With~\autoref{claim:characterize_copies_general} it is easy to finish the proof. 
Since $E$ is an entangling, there exist $1 \leq \ell, \ell' \leq d$ such that $1 \in \tau_l = \sigma_{\lambda(\ell)}$ and $k \in \tau_{\ell'} = \sigma_{\lambda(\ell')}$, implying that $f_P(q_{\pi^{-1}(1)}) = r + 1/2k$ and $f_P(q_{\pi^{-1}(k)}) = r + 1/2$ for some $0 \leq r \leq r+1$. Thus, $r < f_P(q_j) < r+1$ for any $1 \leq j \leq k$. Since there are exactly $k$ entries $x \in [k^2]$ for which $r < f_P(x) < r+1$, $q$ must be a trivial $\pi$-copy. Therefore, $P$ is unique.
\end{proof}

We finish with an (easy) proof of~\autoref{thm:lower_bound_random_permutation} that builds on~\autoref{cor:entangling_lower_bound}.

\begin{proof}[Proof of~\autoref{thm:lower_bound_random_permutation}]
Let $\pi = (\pi_1, \ldots, \pi_k)$ be a permutation of length $k$ chosen uniformly at random. 
Without loss of generality assume that $\pi_i = 1, \pi_j = k$ for some $i < j$.
The probability that $\pi_{i+1} \leq k^{3/4}$ or $\pi_{j-1} \geq k- k^{3/4}$ or $|i - j| < k^{3/4}$ is $O(k^{-1/4})$.
Conditioning on the event that none of the above happens, the probability that there exists no $i+1 < x  < j-1$ for which $\pi_x < k^{3/4}$ and $\pi_{x+1} > k - k^{3/4}$ is also bounded by $O(k^{-1/4})$ (it is actually exponentially smaller than that). If none of these events happens, then $d(\pi) \geq k-3$, 
as there exists some $i < x < y$ for which $((\pi_x, \pi_{x+1}), (1, \pi_{i+1}), (\pi_{j-1}, k))$ is an entangling. 
Indeed, $(1, \pi_{i+1})$ and $(\pi_{j-1}, k)$ cannot be shadowed with respect to $(\pi_x, \pi_{x+1})$, and the two other conditions of an entanglement hold since $\pi_{x} < \pi_{i+1}, \pi_{j-1} < \pi_{x+1}$.
Thus $d(\pi) \geq k-3$ with probability at least $1 - O(k^{-1/4})$, as desired.

As an added bonus, note that $\Pr(d(\pi) \geq k-2) \geq 19/24 - O(1/k)$: Suppose that $i > 1$, $j < n$, and $j \geq i+2$ (all of these hold with probability $1 - O(1/k)$). Consider the event where either $\max\{\pi_{i-1}, \pi_{i+1}\} \geq \pi_{j-1}$ or $\min\{\pi_{j-1}, \pi_{j+1}\} \leq \pi_{i+1}$. This event has probability $19/24$, and if it occurs, one can verify that $d(\pi) \geq k-2$.
\end{proof}

\section{Hierarchy of adaptivity}\label{sec:hierarchy}
In this section, we establish an adaptivity hierarchy theorem for testing $(1,3,2)$-freeness, by proving~\autoref{theo:hierarchy:ub} and~\autoref{theo:hierarchy:lb}. The algorithm behind the former is obtained by a rather natural modification of the adaptive test of~\cite{NewmanRabinovich2017}: We replace the only adaptive component of this test -- a variant of a binary search -- by a less query-efficient, but adaptivity-limited subroutine performing the corresponding search by a recursive partitioning of the search space. The lower bound in our case is shown by a reduction to a clean problem,~\textsc{Template-Search} (a variant of a similar problem introduced by Newman et al.,~\textsc{Intersection-Search} in the context of \emph{non-adaptive} algorithms). Showing a lower bound on~\textsc{Template-Search} for adaptive tests with $r$ rounds, however, turns out to be far from straightforward. Our inductive proof relies on a connection with anti-concentration of Binomial distributions to argue that the ``uncertainty'' left to the algorithm does not decay too fast with every stage, but rather at most by a square root (or, put differently, that the algorithm cannot restrict its search space by more than a square root at each step).

\subsection{The upper bound part of the hierarchy}
We hereby prove~\autoref{theo:hierarchy:ub}, restated below:
\theohierarchub*
\begin{proof}
As aforementioned, our upper bound relies on a modification of the adaptive algorithm of Newman et al. (\cite[Theorem 5.1]{NewmanRabinovich2017}). Before explaining our modification of this algorithm, we briefly sketch how the original works, and introduce the necessary notation.

\paragraph{The $(1,3,2)$ adaptive test of Newman et al.} The adaptive test in \cite{NewmanRabinovich2017} runs (in parallel) two different tests, call them \textsc{Test1} and \textsc{Test2}, each trying to catch a violation of a specific kind; and rejects if any of them does find a violation. (Hence, the test is clearly one-sided.) We hereafter assume that $f$ is $\eps$-far from $(1,3,2)$-free. This implies that there exists a matching $T$ of $(1,3,2)$-tuples of size at least $\eps n/3$; which can be partitioned as $T= T_1\cup T_2$, where $T_1$ contains the tuples $(i,j,k)$ such that $j-i \geq k-j$ and $T_2$ those for which $j-i < k-j$.
\begin{itemize}
  \item If $T_1$ has size at least $\abs{T_1}\geq \frac{\eps n}{6}$, then \textsc{Test1} outputs $\reject$ with probability at least $2/3$, and makes $\bigO{\log^5 n/\eps^3}$ non-adaptive queries.
  \item If $T_2$ has size at least $\abs{T_2}\geq \frac{\eps n}{6}$, then  \textsc{Test2} outputs $\reject$ with probability at least $\poly(\eps,1/\log(n))$, which can be amplified by running it independently $\poly(1/\eps,\log(n))$ times. Here too, there are two cases to consider, where in what follows 
  \begin{enumerate}
      \item $I_L,I_R\subseteq[n]$ are disjoint intervals considered by (and known to) the algorithm, where all elements of the ``left interval'' $I_L$ lie before those of the ``right interval'' $I_R$. 
      \item $I'_R\subseteq I_R$ is a subset implicitly defined as $I'_R = \setOfSuchThat{ k\in I_R }{ f(k) > \alpha }$ (where $\alpha$ is a value obtained and known by the algorithm in a previous step), and unknown to the algorithm.
  \end{enumerate}
  $I_R,I_L,\alpha$ are all induced by the behavior of the algorithm \textsc{Test2} on steps 1 to 4 of the algorithm described in \cite{NewmanRabinovich2017}; these steps perform a total of $q\eqdef\bigO{\log^{20}n/\eps^2}$ queries to $f$. We condition on these steps to be successful, which happens with probability $\bigOmega{\eps/\log^2 n}$ by the analysis of~\cite{NewmanRabinovich2017}), so that
      \begin{enumerate}
          \item If $f|_{I'_R}$ is $\frac{1}{\log^5 n}$-far from monotone (i.e., does not contain any monotone subsequence of length at least $(1-\frac{1}{\log^5 n})$), then Step 5 (which also makes $q$ queries to $f$) will output \reject with probability $1-o(1)$.
          \item If $f|_{I'_R}$ is $\frac{1}{\log^5 n}$-close to monotone, then Step 6 (which makes $O(1)$ queries to $f$) will return with probability $\bigOmega{\eps^2/\log^4 n}$ a pair $(i,j)\in I_L^2$ such that 
          \begin{enumerate}
          	\item[(a)] $i < j$ and $\alpha < f(i) < f(j)$.
          	\item[(b)] There exists $k\in I'_R$ such that $(i,j,k)\in T$. 
          \end{enumerate}
      	Then, Step 7 leverages (a), (b), and the near-monotonicity of $f|_{I'_R}$, to find some $k'\in I_R$ such that $f(k')\in(f(i),f(j))$.
      \end{enumerate}
\end{itemize}
\begin{figure}[H]\centering
\ifx\du\undefined
  \newlength{\du}
\fi
\setlength{\du}{15\unitlength}
\begin{tikzpicture}
\pgftransformxscale{1.000000}
\pgftransformyscale{-1.000000}
\definecolor{dialinecolor}{rgb}{0.000000, 0.000000, 0.000000}
\pgfsetstrokecolor{dialinecolor}
\definecolor{dialinecolor}{rgb}{1.000000, 1.000000, 1.000000}
\pgfsetfillcolor{dialinecolor}
\pgfsetlinewidth{0.050000\du}
\pgfsetdash{}{0pt}
\pgfsetdash{}{0pt}
\pgfsetbuttcap
{
\definecolor{dialinecolor}{rgb}{0.000000, 0.000000, 0.000000}
\pgfsetfillcolor{dialinecolor}
}
\definecolor{dialinecolor}{rgb}{0.000000, 0.000000, 0.000000}
\pgfsetstrokecolor{dialinecolor}
\draw (14.150000\du,8.000000\du)--(22.050000\du,8.000000\du);
\pgfsetlinewidth{0.050000\du}
\pgfsetdash{}{0pt}
\pgfsetmiterjoin
\pgfsetbuttcap
\definecolor{dialinecolor}{rgb}{0.000000, 0.000000, 0.000000}
\pgfsetstrokecolor{dialinecolor}
\draw (14.650000\du,7.750000\du)--(14.650000\du,8.250000\du);
\pgfsetlinewidth{0.050000\du}
\pgfsetdash{}{0pt}
\pgfsetmiterjoin
\pgfsetbuttcap
\definecolor{dialinecolor}{rgb}{0.000000, 0.000000, 0.000000}
\pgfsetstrokecolor{dialinecolor}
\draw (21.550000\du,8.250000\du)--(21.550000\du,7.750000\du);
\pgfsetlinewidth{0.050000\du}
\pgfsetdash{}{0pt}
\pgfsetdash{}{0pt}
\pgfsetbuttcap
{
\definecolor{dialinecolor}{rgb}{0.000000, 0.000000, 0.000000}
\pgfsetfillcolor{dialinecolor}
}
\definecolor{dialinecolor}{rgb}{0.000000, 0.000000, 0.000000}
\pgfsetstrokecolor{dialinecolor}
\draw (21.030902\du,7.985451\du)--(31.700000\du,8.000000\du);
\pgfsetlinewidth{0.050000\du}
\pgfsetdash{}{0pt}
\pgfsetmiterjoin
\pgfsetbuttcap
\definecolor{dialinecolor}{rgb}{0.000000, 0.000000, 0.000000}
\pgfsetstrokecolor{dialinecolor}
\draw (21.531242\du,7.736133\du)--(21.530560\du,8.236132\du);
\pgfsetlinewidth{0.050000\du}
\pgfsetdash{}{0pt}
\pgfsetmiterjoin
\pgfsetbuttcap
\definecolor{dialinecolor}{rgb}{0.000000, 0.000000, 0.000000}
\pgfsetstrokecolor{dialinecolor}
\draw (31.199660\du,8.249318\du)--(31.200341\du,7.749318\du);
\pgfsetlinewidth{0.050000\du}
\pgfsetdash{{\pgflinewidth}{0.200000\du}}{0cm}
\pgfsetdash{{\pgflinewidth}{0.200000\du}}{0cm}
\pgfsetbuttcap
{
\definecolor{dialinecolor}{rgb}{0.000000, 0.000000, 0.000000}
\pgfsetfillcolor{dialinecolor}
\definecolor{dialinecolor}{rgb}{0.000000, 0.000000, 0.000000}
\pgfsetstrokecolor{dialinecolor}
\draw (14.150000\du,7.200000\du)--(31.800000\du,7.200000\du);
}
\pgfsetlinewidth{0.050000\du}
\pgfsetdash{{1.000000\du}{0.200000\du}{0.200000\du}{0.200000\du}{0.200000\du}{0.200000\du}}{0cm}
\pgfsetdash{{1.600000\du}{0.320000\du}{0.320000\du}{0.320000\du}{0.320000\du}{0.320000\du}}{0cm}
\pgfsetbuttcap
{
\definecolor{dialinecolor}{rgb}{0.000000, 0.000000, 0.000000}
\pgfsetfillcolor{dialinecolor}
}
\definecolor{dialinecolor}{rgb}{0.000000, 0.000000, 0.000000}
\pgfsetstrokecolor{dialinecolor}
\draw (15.600000\du,5.900000\du)--(20.900000\du,1.200000\du);
\pgfsetlinewidth{0.050000\du}
\pgfsetdash{}{0pt}
\pgfsetmiterjoin
\pgfsetbuttcap
\definecolor{dialinecolor}{rgb}{0.000000, 0.000000, 0.000000}
\pgfsetfillcolor{dialinecolor}
\pgfpathmoveto{\pgfpoint{15.600000\du}{5.900000\du}}
\pgfpathcurveto{\pgfpoint{15.533651\du}{5.825181\du}}{\pgfpoint{15.542121\du}{5.684014\du}}{\pgfpoint{15.616940\du}{5.617665\du}}
\pgfpathcurveto{\pgfpoint{15.691759\du}{5.551316\du}}{\pgfpoint{15.832926\du}{5.559786\du}}{\pgfpoint{15.899275\du}{5.634605\du}}
\pgfpathcurveto{\pgfpoint{15.965624\du}{5.709424\du}}{\pgfpoint{15.957154\du}{5.850591\du}}{\pgfpoint{15.882335\du}{5.916940\du}}
\pgfpathcurveto{\pgfpoint{15.807516\du}{5.983289\du}}{\pgfpoint{15.666349\du}{5.974819\du}}{\pgfpoint{15.600000\du}{5.900000\du}}
\pgfusepath{fill}
\definecolor{dialinecolor}{rgb}{0.000000, 0.000000, 0.000000}
\pgfsetstrokecolor{dialinecolor}
\pgfpathmoveto{\pgfpoint{15.600000\du}{5.900000\du}}
\pgfpathcurveto{\pgfpoint{15.533651\du}{5.825181\du}}{\pgfpoint{15.542121\du}{5.684014\du}}{\pgfpoint{15.616940\du}{5.617665\du}}
\pgfpathcurveto{\pgfpoint{15.691759\du}{5.551316\du}}{\pgfpoint{15.832926\du}{5.559786\du}}{\pgfpoint{15.899275\du}{5.634605\du}}
\pgfpathcurveto{\pgfpoint{15.965624\du}{5.709424\du}}{\pgfpoint{15.957154\du}{5.850591\du}}{\pgfpoint{15.882335\du}{5.916940\du}}
\pgfpathcurveto{\pgfpoint{15.807516\du}{5.983289\du}}{\pgfpoint{15.666349\du}{5.974819\du}}{\pgfpoint{15.600000\du}{5.900000\du}}
\pgfusepath{stroke}
\pgfsetlinewidth{0.050000\du}
\pgfsetdash{}{0pt}
\pgfsetmiterjoin
\pgfsetbuttcap
\definecolor{dialinecolor}{rgb}{0.000000, 0.000000, 0.000000}
\pgfsetfillcolor{dialinecolor}
\pgfpathmoveto{\pgfpoint{20.900000\du}{1.200000\du}}
\pgfpathcurveto{\pgfpoint{20.966349\du}{1.274819\du}}{\pgfpoint{20.957879\du}{1.415986\du}}{\pgfpoint{20.883060\du}{1.482335\du}}
\pgfpathcurveto{\pgfpoint{20.808241\du}{1.548684\du}}{\pgfpoint{20.667074\du}{1.540214\du}}{\pgfpoint{20.600725\du}{1.465395\du}}
\pgfpathcurveto{\pgfpoint{20.534376\du}{1.390576\du}}{\pgfpoint{20.542846\du}{1.249409\du}}{\pgfpoint{20.617665\du}{1.183060\du}}
\pgfpathcurveto{\pgfpoint{20.692484\du}{1.116711\du}}{\pgfpoint{20.833651\du}{1.125181\du}}{\pgfpoint{20.900000\du}{1.200000\du}}
\pgfusepath{fill}
\definecolor{dialinecolor}{rgb}{0.000000, 0.000000, 0.000000}
\pgfsetstrokecolor{dialinecolor}
\pgfpathmoveto{\pgfpoint{20.900000\du}{1.200000\du}}
\pgfpathcurveto{\pgfpoint{20.966349\du}{1.274819\du}}{\pgfpoint{20.957879\du}{1.415986\du}}{\pgfpoint{20.883060\du}{1.482335\du}}
\pgfpathcurveto{\pgfpoint{20.808241\du}{1.548684\du}}{\pgfpoint{20.667074\du}{1.540214\du}}{\pgfpoint{20.600725\du}{1.465395\du}}
\pgfpathcurveto{\pgfpoint{20.534376\du}{1.390576\du}}{\pgfpoint{20.542846\du}{1.249409\du}}{\pgfpoint{20.617665\du}{1.183060\du}}
\pgfpathcurveto{\pgfpoint{20.692484\du}{1.116711\du}}{\pgfpoint{20.833651\du}{1.125181\du}}{\pgfpoint{20.900000\du}{1.200000\du}}
\pgfusepath{stroke}
\pgfsetlinewidth{0.000000\du}
\pgfsetdash{{\pgflinewidth}{0.320000\du}}{0cm}
\pgfsetdash{{\pgflinewidth}{1.560000\du}}{0cm}
\pgfsetbuttcap
{
\definecolor{dialinecolor}{rgb}{0.000000, 0.000000, 0.000000}
\pgfsetfillcolor{dialinecolor}
\definecolor{dialinecolor}{rgb}{0.000000, 0.000000, 0.000000}
\pgfsetstrokecolor{dialinecolor}
\draw (24.750000\du,3.400000\du)--(26.450000\du,4.600000\du);
}
\definecolor{dialinecolor}{rgb}{0.000000, 0.000000, 0.000000}
\pgfsetstrokecolor{dialinecolor}
\draw (24.750000\du,3.400000\du)--(26.450000\du,4.600000\du);
\pgfsetlinewidth{0.000000\du}
\pgfsetdash{}{0pt}
\pgfsetmiterjoin
\pgfsetbuttcap
\definecolor{dialinecolor}{rgb}{0.000000, 0.000000, 0.000000}
\pgfsetfillcolor{dialinecolor}
\pgfpathmoveto{\pgfpoint{26.450000\du}{4.600000\du}}
\pgfpathcurveto{\pgfpoint{26.377915\du}{4.702121\du}}{\pgfpoint{26.203708\du}{4.732157\du}}{\pgfpoint{26.101587\du}{4.660071\du}}
\pgfpathcurveto{\pgfpoint{25.999466\du}{4.587986\du}}{\pgfpoint{25.969431\du}{4.413779\du}}{\pgfpoint{26.041516\du}{4.311658\du}}
\pgfpathcurveto{\pgfpoint{26.113601\du}{4.209537\du}}{\pgfpoint{26.287808\du}{4.179502\du}}{\pgfpoint{26.389929\du}{4.251587\du}}
\pgfpathcurveto{\pgfpoint{26.492050\du}{4.323673\du}}{\pgfpoint{26.522085\du}{4.497879\du}}{\pgfpoint{26.450000\du}{4.600000\du}}
\pgfusepath{fill}
\definecolor{dialinecolor}{rgb}{0.000000, 0.000000, 0.000000}
\pgfsetstrokecolor{dialinecolor}
\pgfpathmoveto{\pgfpoint{26.450000\du}{4.600000\du}}
\pgfpathcurveto{\pgfpoint{26.377915\du}{4.702121\du}}{\pgfpoint{26.203708\du}{4.732157\du}}{\pgfpoint{26.101587\du}{4.660071\du}}
\pgfpathcurveto{\pgfpoint{25.999466\du}{4.587986\du}}{\pgfpoint{25.969431\du}{4.413779\du}}{\pgfpoint{26.041516\du}{4.311658\du}}
\pgfpathcurveto{\pgfpoint{26.113601\du}{4.209537\du}}{\pgfpoint{26.287808\du}{4.179502\du}}{\pgfpoint{26.389929\du}{4.251587\du}}
\pgfpathcurveto{\pgfpoint{26.492050\du}{4.323673\du}}{\pgfpoint{26.522085\du}{4.497879\du}}{\pgfpoint{26.450000\du}{4.600000\du}}
\pgfusepath{stroke}
\definecolor{dialinecolor}{rgb}{0.000000, 0.000000, 0.000000}
\pgfsetstrokecolor{dialinecolor}
\node[anchor=west] at (17.050000\du,9.850000\du){$I_L$};
\definecolor{dialinecolor}{rgb}{0.000000, 0.000000, 0.000000}
\pgfsetstrokecolor{dialinecolor}
\node[anchor=west] at (25.025000\du,9.790000\du){$I_R$};
\definecolor{dialinecolor}{rgb}{0.000000, 0.000000, 0.000000}
\pgfsetstrokecolor{dialinecolor}
\node[anchor=west] at (12.550000\du,5.78000\du){$f(i)$};
\definecolor{dialinecolor}{rgb}{0.000000, 0.000000, 0.000000}
\pgfsetstrokecolor{dialinecolor}
\node[anchor=west] at (12.575000\du,1.440000\du){$f(j)$};
\definecolor{dialinecolor}{rgb}{0.000000, 0.000000, 0.000000}
\pgfsetstrokecolor{dialinecolor}
\node[anchor=west] at (12.500000\du,4.535000\du){$f(k)$};
\definecolor{dialinecolor}{rgb}{0.000000, 0.000000, 0.000000}
\pgfsetstrokecolor{dialinecolor}
\node[anchor=west] at (12.350000\du,7.400000\du){$\alpha$};
\definecolor{dialinecolor}{rgb}{0.000000, 0.000000, 0.000000}
\pgfsetstrokecolor{dialinecolor}
\node[anchor=west] at (15.650000\du,8.700000\du){$i$};
\definecolor{dialinecolor}{rgb}{0.000000, 0.000000, 0.000000}
\pgfsetstrokecolor{dialinecolor}
\node[anchor=west] at (20.650000\du,8.700000\du){$j$};
\definecolor{dialinecolor}{rgb}{0.000000, 0.000000, 0.000000}
\pgfsetstrokecolor{dialinecolor}
\node[anchor=west] at (26.200000\du,8.700000\du){$k$};
\end{tikzpicture}
    \caption{Our starting assumptions for Step 7.}
\end{figure}
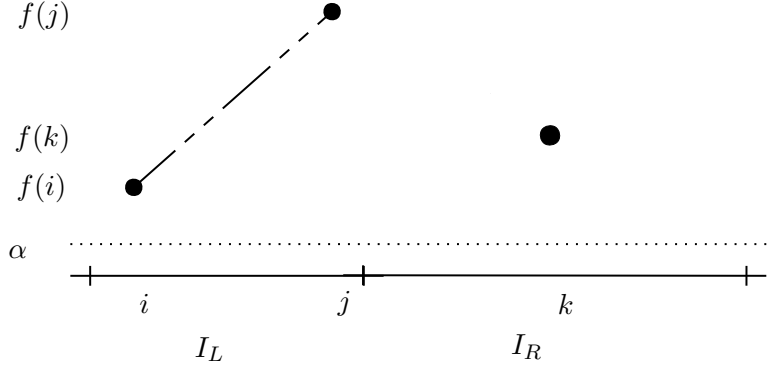
The only adaptive part of the test is Step 7 of the second case (\textsc{Test2}), and this is the one which is amenable to improvements with bounded adaptivity. Indeed, this step is implemented in~\cite{NewmanRabinovich2017} in two different ways, to obtain respectively an adaptive and a non-adaptive test: 
\begin{itemize}
	\item[(a)] An \textit{adaptive} method making a ``filtered'' binary search, with query complexity $\poly(\log n, 1/\eps)$.
	\item[(b)] A \textit{non-adaptive} sampling method, with query complexity $\bigO{(\sqrt{n}\log^2 n)/\eps}$. 
\end{itemize}
To prove our theorem, it is thus sufficient to explain how to replace the adaptive method with one that uses $r$ rounds of queries, which we do next.

\paragraph{Our goal, and high-level idea} We have to solve the following problem. We are given values $a\eqdef f(i)$, $b\eqdef f(j)$ with $a<b$, and a lower bound $\alpha \eqdef f(i_0)$ with $\alpha < a$; and granted query access to $f$ over an interval $I\eqdef I_R$ (known), with the guarantee that there exists an (unknown) subset $J\eqdef I'|_R \subseteq I$ such that:
\begin{itemize}
  \item $i\in I$ is in $J$ iff $f(i) > \alpha$; 
  \item $J$ has size at least $\delta\abs{I}$ for $\delta \eqdef \frac{\eps}{8\log n}$;
  \item $J$ contains at least $\ell$ indices $k$ such that $a < f(k) < b$ (for $\ell$ to be determined later);
  \item $f|_J$ is $\eta$-close to monotone non-decreasing, for $\eta\eqdef\frac{1}{\log^5 n}$.
\end{itemize}
The goal is to output, with non-trivial probability, either an index $k\in I$ such that $a < f(k) < b$ or two indices $k<k'\in I$ such that $\alpha < f(k') < f(k)$.

At a high-level, our algorithm acts recursively as follows. In any given round (other than the last one), our algorithm divides the interval $I_R$ into $s$ randomly chosen intervals, that partition $I'_R$ into roughly equal pieces with high probability, and queries the endpoints of these intervals. Then, using the near-monotonicity of $f|_{I'_R}$, the algorithm focuses on a single one of these intervals, which will be the $I_R$ of choice for the next round of queries. The last round just consists of sampling enough points uniformly at random, which should hit (with constant probability) at least one of the $k$ witnesses which are now concentrated in our last, much smaller interval. The issues here are that, of course: (i) we do not know $I'_R$, only $I_R$ (though testing membership in $I'_R$ is easy); (ii) that $f$ is only \emph{close} to monotone, not exactly monotone; and (iii) that $f$ is close to monotone on $I'_R$, \emph{not} on $I_R$. In spite of this, we can make the above approach work, losing only some logarithmic factors in the query complexity.

\noindent We need the following well-known fact.
\begin{proposition}[{See~\cite[Section 6.4]{DavidNagaraja:2004}}]\label{prop:uniform:sampling}
    Suppose $N-1$ points are drawn uniformly and independently at random from $[0,1]$, and let $X_1, \dots, X_N$ be the lengths of the $N$ segments they partition $[0,1]$ into. Letting $\Delta_N\eqdef \max_{1\leq k\leq N} X_l$, we have
    $
      \expect{\Delta_N} = \frac{H_N}{N}
    $ where $H_N$ is the $N$-th Harmonic number.
\end{proposition}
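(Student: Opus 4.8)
I would compute $\expect{\Delta_N}$ through the tail formula $\expect{\Delta_N} = \int_0^1 \probaOf{\Delta_N > t}\,dt$ combined with inclusion-exclusion over the $N$ events $\{X_i > t\}$. The key input is the standard fact that the uniform spacings $(X_1,\dots,X_N)$ are distributed uniformly over the simplex $\{x \in \R^N : x_i \ge 0,\ \sum_i x_i = 1\}$, and hence are exchangeable with
\[
\probaOf{X_{i_1} > t,\ \dots,\ X_{i_j} > t} = (1 - jt)_+^{N-1}
\]
for any $j$ distinct indices $i_1,\dots,i_j$ and any $t \ge 0$; I would justify this by the affine change of variables $x_{i_\ell} \mapsto x_{i_\ell} - t$ (for $\ell = 1,\dots,j$), which maps the event onto a rescaled sub-simplex of $(N-1)$-dimensional volume $(1-jt)^{N-1}$ times that of the full simplex (and of volume $0$ once $jt > 1$).

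From here the remaining steps are short. Inclusion-exclusion gives $\probaOf{\Delta_N > t} = \sum_{j=1}^N (-1)^{j+1}\binom{N}{j}(1-jt)_+^{N-1}$ for every $t\in[0,1]$; integrating term by term over $[0,1]$ and using $\int_0^{1/j}(1-jt)^{N-1}\,dt = \tfrac{1}{jN}$ (substitute $u = 1-jt$) reduces the problem to the combinatorial identity $\sum_{j=1}^N (-1)^{j+1}\binom{N}{j}\tfrac1j = H_N$, which I would establish by dividing $\sum_{j=1}^N (-1)^{j+1}\binom{N}{j}x^j = 1 - (1-x)^N$ by $x$ and integrating over $x \in [0,1]$, the right-hand side collapsing to $\int_0^1 (1 + x + \dots + x^{N-1})\,dx = H_N$ after the substitution $x \mapsto 1 - x$. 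Together these give $\expect{\Delta_N} = H_N/N$.

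\textbf{Main obstacle and alternative.} The only step requiring any thought is the spacing-tail identity above; everything downstream is routine calculus and a generating-function manipulation. A slicker route avoiding inclusion-exclusion altogether is to invoke Rényi's representation of uniform spacings, $X_{(k)} \overset{d}{=} \tfrac1N \sum_{i=1}^k \tfrac{E_i}{N-i+1}$ for i.i.d.\ standard exponentials $E_1,\dots,E_N$, whence $\expect{\Delta_N} = \expect{X_{(N)}} = \tfrac1N\sum_{i=1}^N \tfrac{1}{N-i+1} = \tfrac{H_N}{N}$ directly by linearity of expectation; there the effort is instead in proving the representation. In any case the result is classical, and we simply cite~\cite{DavidNagaraja:2004}.
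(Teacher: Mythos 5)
The paper does not prove this proposition; it simply cites Section~6.4 of David and Nagaraja's book on order statistics, so there is no ``paper's approach'' to compare against — you are supplying a proof where the authors supply a pointer. Your main argument is correct and self-contained: the uniform distribution of the spacing vector on the simplex gives $\probaOf{X_{i_1}>t,\dots,X_{i_j}>t}=(1-jt)_+^{N-1}$, inclusion--exclusion and the tail formula reduce the problem to $\frac1N\sum_{j=1}^{N}(-1)^{j+1}\binom{N}{j}\frac1j$, and your generating-function manipulation correctly evaluates this to $H_N$. This is essentially the textbook derivation one finds in David--Nagaraja, so it is a faithful reconstruction of what the citation points to.

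One caveat on your ``slicker'' alternative: the distributional identity you write, $X_{(k)}\overset{d}{=}\frac1N\sum_{i=1}^{k}\frac{E_i}{N-i+1}$, cannot hold as stated, since the left-hand side is bounded by $1$ almost surely while the right-hand side is an unbounded sum of exponentials. What is true is the Sukhatme--R\'enyi representation for the exponential order statistics, $E_{(k)}\overset{d}{=}\sum_{i=1}^{k}\frac{E_i}{N-i+1}$, together with the fact that, writing $S=\sum_i E_i$, the normalized vector $(E_1/S,\dots,E_N/S)$ is distributed as the uniform spacings and is independent of $S$. Independence then gives $\expect{E_{(k)}}=\expect{X_{(k)}\cdot S}=\expect{X_{(k)}}\expect{S}$, hence $\expect{X_{(k)}}=\expect{E_{(k)}}/N=\frac1N\sum_{i=1}^{k}\frac{1}{N-i+1}$, which for $k=N$ recovers $H_N/N$. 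The conclusion is the same, but the justification needs to go through the expectation ratio rather than a pointwise distributional equality.
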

Combining~\autoref{prop:uniform:sampling} with Markov inequality, it follows that by taking $N= \bigO{\frac{s}{r}\log\frac{s}{r}}$ uniformly random and independent points from any ordered set $S$ (with $s=\littleO{\abs{S}}$), with probability at least $1-\frac{1}{10r}$ the partition of $S$ into $N$ ``intervals'' induced by these points is such that no interval contains more than an $\frac{1}{s}$ fraction of $S$.

The last thing we need, for our recursion to go through, is to ensure that $J$ (which has density $\delta$ in $I$) still has density roughly $\delta$ in all sub-intervals of $I$. This is not necessarily true \textit{per se}, but we can rely on~\cite[Lemma 5.3]{NewmanRabinovich2017} to get the following. There exists a subset $J'\subset J$ of size at least $\abs{J'} \geq (1-\frac{3}{\log^2 n})\abs{J}$ such that every $i\in J'$ satisfies the property below.  For every interval $K\ni i$, we have $\abs{K\cap J'} \geq \gamma \abs{K}$, for $\gamma \eqdef \eps/\log^2 n$. (In the terminology of~\cite{NewmanRabinovich2017}, no element of $J'$ is \emph{$\gamma$-deserted}.)

Then, we have the following:
\begin{itemize}
  \item $J'\subseteq J$; 
  \item $J'$ has size at least $\delta'\abs{I}$ for $\delta' \eqdef \frac{\eps}{8\log n}(1-\frac{3}{\log^2 n}) \geq \gamma$, and contains no $\gamma$-deserted elements;
  \item $J$ contains at least $\ell$ indices $k$ such that $a < f(k) < b$ (for $\ell$ to be determined later);
  \item $f|_{J'}$ is $(\eta+\gamma)$-close to monotone non-decreasing.
\end{itemize}
(The only non-immediate point is the third, which will follow from the way we guarantee the existence of these $\ell$ elements.) With all this in place, we can describe the algorithm.

\begin{algorithmic}[1]
  \State Set $s\eqdef n^{\frac{1}{r+1}}$
  \For{$1\leq t\leq r-1$, given a current interval $I_t$}
    \State Query $f$ on $N\eqdef \bigO{\frac{\log r}{\gamma}\cdot\frac{s}{r}\log\frac{s}{r}}$ uniformly random and independent points from $I_t$ 
    \State Let $S_t$ be those of these points which belong to $J$, i.e. whose $f$-values is at least $\alpha$
    \If{$S_t = \emptyset$} \Comment{Happens with negligible probability if $f$ when far from $(1,3,2)$-free}
      \State\Return $\accept$
    \EndIf
    \If{ any of these values is in $(a,b)$ } 
      \State \Return \reject \Comment{We found some $k'$ with $(f(i),f(j), f(k'))$ being an $(1,3,2)$-pattern}
    \ElsIf{the sequence corresponding to the $f$-values of $S_t$ is not non-increasing}
      \State \Return \reject. \Comment{We found some $k,k'$ with $(f(i_0),f(k'), f(k))$ being an $(1,3,2)$-pattern}
    \Else
      \State Let $I_{t+1}$ be the (only) interval induced by the points in $S_t$ such that $I_{t+1}\subseteq (f(i),f(j))$;
      \State Recurse on $I_{t+1}$.
    \EndIf
  \EndFor
\end{algorithmic}
We condition on not outputting $\accept$ during any of the $r-1$ rounds, and then on every of the rounds being such that $\abs{I_{t+1}}\leq \frac{1}{s}\abs{I_{t}}$. By a union bound, this happens with probability at least $9/10$; the first, by a union bound and the fact that $\abs{J'\cap I_t} \geq \gamma\abs{I_t}$ (from the non-desertion property) and the $O((\log r)/\gamma)$ factor in $N$. The second, again by a union bound over the $r-1$ rounds, and~\autoref{prop:uniform:sampling}.

Thus, at the end, we either have found already an $(1,3,2)$-pattern and rejected, or are left with an interval $I_{r}$ of size at most $\frac{n}{s^{r-1}} = \frac{n}{n^{\frac{r-1}{{r+1}}}} = n^{\frac{2}{r+1}}$ which contains at least $\ell$ indices $k\in I'_R\subseteq I_R$ such that $(i,j,k)\in T$ (This last point by monotonicity: all such indices must be in the remaining interval, or we would have found one already). At this stage, by taking (in the $r$th (last) round of queries) $\bigO{n^{\frac{2}{r+1}}/\ell}$ independent and uniformly distributed queries in $I_r$, we find such a $k$ with high constant probability. The total number of queries made is upper bounded by
\[
    (r-1)\cdot \left(n^{\frac{1}{r+1}}+1\right) + \bigO{n^{\frac{2}{r+1}}/\ell}
\]
which for $\ell \eqdef \tildeO{\eps n^{\frac{1}{r+1}}}$ (recalling that $r \leq \poly\log(n)$) is $\tildeO{\eps^{-1} n^{\frac{1}{r+1}}}$.

\paragraph{Last step: the promise of these $\ell$ witnesses} It only remains to describe how to achieve the guarantee of having at least $\ell$ ``witnesses'' in $J'$ (with high constant probability). This is done by calling the \textsf{DyadicSampler} (Algorithm 3.1) of~\cite{NewmanRabinovich2017} $O(\ell \eps^{-2} \log^2 n)$ times; doing so, one gets with high constant probability a set of $\ell$ many $(1,2)$ pairs, each dominating a different $(1,2)$-pair in $T_2$ (similarly as in~\cite[Section 5.2]{NewmanRabinovich2017}. By considering $(i,j)$ among these $(1,2)$ pairs such that $(f(i),f(j))$, we then have $(i,j)$ for which there exist at least $\ell$ different indices $k\in I'|_R$ such that $(i,j,k)$ is a $(1,3,2)$-pattern.

 Since $\abs{J'}=(1-o(1))\abs{I'_R}$, we can additionally guarantee at the cost of a small multiplicative factor that we get $\ell$ points in $J'$ (not only in $J=I'_R$). Further, these calls to the \textsf{DyadicSampler} can all be done in parallel, and in parallel to the first round of queries, thus preserving the number of rounds $r$.
 
\paragraph{Query and round complexity} Thus, overall the query complexity of this step is
 \[ 
    O(\ell \eps^{-2} \log^2 n)+ \tildeO{\eps^{-1} n^{\frac{1}{r+1}}} = \tildeO{\eps^{-1} n^{\frac{1}{r+1}}} + \tildeO{\eps^{-1} n^{\frac{1}{r+1}}} = \tildeO{\eps^{-1} n^{\frac{1}{r+1}}}
 \]
 as claimed, using $r$ rounds of adaptivity; and this algorithm, under the assumptions laid out when describing ``our goal,'' rejects with constant probability. Integrating it into the (non-adaptive) rest of the algorithm behind~\cite[Theorem 5.1]{NewmanRabinovich2017} yields the theorem.
\end{proof}

\subsection{The lower bound part of the hierarchy}
We now establish the lower bound part of our adaptivity hierarchy for testing $(1,3,2)$-freeness, namely~\autoref{theo:hierarchy:lb}:
\theohierarchlb*

In order to do so, we first introduce a related problem, that we refer to as \textsc{Template-Search}~--~a variant of the \textsc{Intersection-Search} problem defined in~\cite{NewmanRabinovich2017}. We then show our lower bound on the query complexity of any $r$-round algorithm solving this problem in~\autoref{sec:hierarchy:lb:templatesearch}, before showing in~\autoref{sec:hierarchy:lb:reduction} how to reduce the \textsc{Template-Search} problem to testing $(1,3,2)$-freeness, while preserving both query complexity and number of rounds of adaptivity.

\subsubsection{Lower bound on \textsc{Template-Search}}\label{sec:hierarchy:lb:templatesearch}

We start by defining the problem we consider in this subsection:
\begin{definition}[\textsc{Template-Search}]
  An instance of this problem is a tuple $(S,T)$, where $\abs{T}=m$, $\abs{S}=3m$, and $S,T$ are two non-decreasing arrays of elements from $\R$. Furthermore, $T$ is a consecutive subarray of $S$, that is there exists an integer $\Delta$ such that $S_{i+\Delta} = T_i$ for all $1\leq i\leq m$. The goal is, granted query access to both $S$ and $T$ (i.e., one can ask for the $i$-th element of either array), to determine the value $\Delta$.
\end{definition}

Our goal is to prove that this problem is ``hard'' for $r$-round algorithms, as formalized in the theorem below:
\begin{theorem}\label{theo:lb:template:search}
  Let $r\leq \log^{\bigO{1}} m$. Any (possibly randomized) $r$-round algorithm that correctly solves \textsc{Template-Search} with probability at least $2/3$ must make $\bigOmega{m^{1/2^{r+3}}}$ queries.
\end{theorem}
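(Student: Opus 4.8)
The plan is to argue via Yao's minimax principle: it suffices to construct a distribution $\mathcal{D}$ over \textsc{Template-Search} instances on which every \emph{deterministic} $r$-round algorithm making at most $q\eqdef c\, m^{1/2^{r+3}}$ queries (for a suitable absolute constant $c>0$) fails to output the correct value of $\Delta$ with probability greater than $1/3$. In $\mathcal{D}$, the offset $\Delta$ is drawn uniformly from $\{0,1,\dots,2m\}$ and, independently, the array $S$ is generated as a monotone random walk: with $S_0\eqdef 0$, set $S_i\eqdef S_{i-1}+g_i$ for $1\leq i\leq 3m$, where the increments $g_i$ are i.i.d. from a fixed distribution over $\{0,1,2,\dots\}$ chosen so that partial sums are strongly \emph{anti-concentrated} at every scale --- the sum of the $g_i$ over any window of length $\ell$ places mass $\bigO{1/\sqrt{\ell}}$ on any individual value (the cleanest instantiation being $g_i\sim\bernoulli{1/2}$, so that a window sum is $\binomial{\ell}{1/2}$ and a local limit theorem applies). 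By construction $S$ is non-decreasing and $T\eqdef(S_{\Delta+1},\dots,S_{\Delta+m})$ is a consecutive subarray of $S$, so these are valid instances, and recovering $\Delta$ is exactly the required task.

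The heart of the argument is an inductive claim tracking how much a single round can shrink the algorithm's uncertainty about $\Delta$. Fix a deterministic $r$-round, $q$-query algorithm. For a partial transcript $\tau_t$ after $t$ rounds, let $L_t(\tau_t)$ be an ``effective uncertainty'' of $\Delta$ --- defined so that $L_t\geq 2$ forces the posterior of $\Delta$ given $\tau_t$ to place mass at most $2/3$ on any individual value (e.g.\ the number of values carrying posterior mass at least $1/(C_0 L_t)$). Initially $L_0 = \Theta(m)$. The key lemma is: conditioned on $\tau_t$ with $L_t(\tau_t)=L$, with probability at least $1-1/(10r)$ over the fresh randomness ($\Delta$ and the values of $S$ at the newly queried positions) the next transcript satisfies $L_{t+1}(\tau_{t+1}) \geq \Omega\big(\sqrt{L}/q^{C}\big)$ for an absolute constant $C$. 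The intuition is that the only way a round extracts information about $\Delta$ is by comparing the $q$ answers $S_{b_i+\Delta}$ returned by the $T$-queries against the $q$ answers $S_{a_j}$ of the $S$-queries (obtaining sandwiching bounds of the form $a_j - b_i < \Delta < a_{j'} - b_i$); but as $\Delta$ ranges over a window of length $L$, each $S_{b_i+\Delta}$ is a partial sum over a window whose length varies by $L$, so by anti-concentration it is compatible with $\Omega(\sqrt{L})$ many offsets, and pooling $q$ such constraints with $q$ sandwiching cut-points cannot localize $\Delta$ to a sub-window of length below $\sqrt{L}/q^{C}$ except with probability $o(1/r)$.

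Granting the lemma, a union bound over the $r$ rounds (valid since $r\leq\log^{\bigO{1}}m$) shows that with probability at least $2/3$ the final transcript $\tau_r$ has $L_r(\tau_r) \geq \Omega\big(m^{1/2^{r}}/q^{C'}\big)$: unrolling the recursion $L_t\mapsto\sqrt{L_{t-1}}/q^{C}$ raises the initial $\Theta(m)$ to the exponent $2^{-r}$ and accumulates a $q$-loss whose total exponent $\sum_{t}C\,2^{-(r-t)}$ is an absolute constant, which one can arrange to be at most $8$ --- this is exactly where the $2^{r+3}$ comes from (together with the factor $3$ in $|S|=3m$, which only gives room at the base of the induction). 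Choosing $c$ small enough so that $q = c\,m^{1/2^{r+3}}$ makes $m^{1/2^{r}}/q^{C'}\geq 2$, the posterior of $\Delta$ given $\tau_r$ is then not concentrated on a single value, so no deterministic output is correct with conditional probability above $2/3$; hence the overall success probability is below $2/3$, and Yao's principle yields the claimed $\bigOmega{m^{1/2^{r+3}}}$ bound for randomized algorithms.

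The main obstacle is the per-round square-root lemma, i.e.\ quantifying how much one non-adaptive batch of $q$ queries can collapse the posterior. This requires (i) a clean description of the information a round reveals --- values of the random walk $S$ at the data-dependent positions $\{a_j\}\cup\{b_i+\Delta\}$ --- so that the randomness to condition on is cleanly separated from the randomness still ``free''; (ii) a uniform-in-window-length anti-concentration estimate for the increment sums, which is the Binomial/local-limit ingredient; and (iii) handling the fact that the algorithm combines several \emph{correlated} answers (all read off a single random walk), so that one must control the \emph{joint}, not merely marginal, residual uncertainty and show it is still $\Omega(\sqrt{L}/q^{C})$. Everything else is bookkeeping; the approach is deliberately lossy, which is why the exponent $2^{r+3}$ falls short of the conjectured $2r+O(1)$.
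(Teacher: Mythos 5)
Your plan is structurally the same as the paper's: Yao's principle over a hard distribution with a uniformly random offset $\Delta$, a per-round square-root decay in the "uncertainty" about $\Delta$ driven by Binomial anti-concentration, and unrolling the recursion over $r$ rounds (with a polylog loss per round absorbed by $r\le\log^{O(1)}m$) to land on $m^{1/2^{r+3}}$. The construction you pick is a monotone random walk with Bernoulli increments, whereas the paper sorts $m$ i.i.d.\ uniforms and uses DKW to argue global linearity; these are essentially dual instantiations of the same idea (randomness in the values vs.\ randomness in the increments), and either yields the needed local anti-concentration, so that choice is not a concern.

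Where the plan has a real gap is exactly where you flag it: the per-round square-root lemma, and specifically your point (iii) about correlated answers. Your invariant---"the posterior of $\Delta$ given $\tau_t$ has effective support $L_t$"---is a statement about the distribution conditioned on a transcript, which is hard to propagate: the posterior is a complicated, query-dependent object, and it is not obvious how to show it cannot collapse below $\sqrt{L_t}/q^{O(1)}$ except with probability $o(1/r)$ given the transcript so far. The paper replaces this with a purely \emph{geometric} invariant on the queries themselves: after round $\ell$, every queried $S$-index is at distance at least $d_\ell$ from the true $S$-position of every queried $T$-index. Maintaining this invariant is what the proof actually does. The decoupling you want in point (i) is achieved there by a case analysis on the relative order of the new query $s$, its bracketing $s_-,s_+$, and the corresponding $\delta(t_-),\delta(t_+)$; in each of the three cases, the algorithm's residual uncertainty about the position of $s$ relative to $\delta(T)$ is shown to be \emph{exactly} the realization of a single Binomial variable counting random samples falling in one specific gap, and the anti-concentration is then applied to that one variable. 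This reduction to a single Binomial per query pair (rather than a joint residual-entropy estimate) is the key step missing from your plan; without it, the claim that "pooling $q$ correlated constraints cannot localize $\Delta$ below $\sqrt{L}/q^{C}$" is unjustified. Once you adopt a query-geometry invariant like the paper's $(\dagger)$ and carry out that case analysis, the remaining bookkeeping---union bounds over $q^2$ query pairs and $r$ rounds, the DKW or local-CLT normalization, and the final unrolling---does go through as you sketch.
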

\begin{proof}
By (the easy direction of) Yao's principle, it is sufficient to present a particular distribution over \textsc{Template-Search} instances such that any \emph{deterministic} algorithm which succeeds with probability at least $2/3$ over a random choice of $(S,T)\sim\mathcal{D}$ must make $\bigOmega{m^{1/2^{r+3}}}$.

Our distribution $\mathcal{D}$ is as follows:
 We generate two tuples $S,T$ of elements from $\R$ and of size $\abs{S}=3m$ and $\abs{T}=m$, as follows.
\begin{itemize}
  \item $T$ is obtained by drawing $m$ numbers independently at random from $(0,1)$, and sorting them;
  \item $S$ is obtained by choosing an offset $\Delta$ uniformly at random in $\{0,\dots,2m\}$, and setting 
      \[
            S\eqdef \underbrace{(-1,\dots,-1)}_{\Delta \text{ times}} \sqcup T\sqcup \underbrace{(2,\dots,2)}_{2m-\Delta \text{ times}}
      \]
\end{itemize}
Note that with probability $1$ all elements of $T$ are distinct; we will assume this is the case in the rest of the proof.

Fix an arbitrary $r$-round algorithm $\Algo$ for \textsc{Template-Search} with query complexity $q$. For an index $i\in [m]$ (corresponding to an element $T_i$ in $T$), we write $\delta(i) = \Delta+i$ for its position in $S$, i.e. $S_{\delta(i)} = T_i$ for all $i\in[m]$ (and accordingly write $\delta(T)$ for the position of the ``template'' of $T$ inside $S$).
Moreover, for simplicity (up to losing some $\poly\log(n)$ factors in the lower bound), we can assume that the algorithm makes exactly $2q$ queries in every of the $r$ rounds, namely $q$ to $S$ and $q$ to $T$; and will proceed by induction.

We let $d_0 \eqdef m$, and $d_{\ell+1} = \alpha\frac{\sqrt{d_\ell}}{ rq^2 }$ for $0\leq \ell\leq r-1$ (where $\alpha>0$ is an absolute constant determined in the course of the analysis). First, we apply the Dvoretzky--Kiefer--Wolfowitz (DKW) inequality~\cite{DKW:56,Massart:90} to the $m$ i.i.d. samples defining $T$ to argue that, with probability at least $5/6$, the empirical distribution defined by these $m$ samples has Kolmogorov distance (i.e., maximum pointwise distance between the cumulative distribution functions) at most $O(1/\sqrt{m})$ from the uniform distribution; or, equivalently, that simultaneously every interval $[a,b]\subseteq [m]$ (recalling that $T$ is sorted) is such that $T_b - T_a$ is within an additive $O(1/\sqrt{m})$ of $\frac{1}{m}(b-a)$. We hereafter condition on this.

Consider now the $\ell$-th round, for $0\leq \ell\leq r-1$. Letting $S_\ell=\{s_i\}_{1\leq i \leq \ell q}$ and $T_\ell=\{t_i\}_{1\leq i \leq \ell q}$ be the set of queries made, in the previous rounds, to $S$ and $T$ respectively. Our induction hypothesis is that
\begin{enumerate}[$(\dagger)$]
  \item\label{assump:1:lb:adap} for all $1\leq i,j\leq \ell q$, $\abs{s_i - \delta(t_j)} \geq d_\ell$;
\end{enumerate}
In other words, this asks that no query in $S$ lies ``too close'' to the location in $S$ of a query made in $T$. (Note that as $S_0, T_0 = \emptyset$, the initial case of our induction trivially holds.) Our goal then is to show that, with high probability over the choice of $(S,T)$, we will have that~\autoref{assump:1:lb:adap} still hold at round $\ell+1$, for $d_{\ell+1} \approx \sqrt{d_\ell}$ as defined above.\medskip

Fix any pair $(s,t)\in [n]\times[m]$ that \Algo{} queries (in respectively $S$ and $T$) in the $\ell$-th round; and let $s_-,s_+\in S_\ell$, $t_-,t_+\in T_\ell$ be the indices previously queried with 
\begin{align*}
s_- &\eqdef \max\setOfSuchThat{s'\in S_\ell}{ s' < s}  & s_+ &\eqdef \max\setOfSuchThat{s'\in S_\ell}{ s' > s}\\
t_- &\eqdef \max\setOfSuchThat{t'\in T_\ell}{ t' < t}  & t_+ &\eqdef \max\setOfSuchThat{t'\in T_\ell}{ t' > t}
\end{align*}
so that $(s_-,s_+)$ and $(s_-,t_+)$ are the minimal intervals delimited by previously queried points which contain $s$ and $t$. Note that if $s_- > \Delta$ or $s_+ < \Delta$, then $s$ will fall at least as far from $\delta(T)$ as $s_-$ and $s_+$, revealing no new information; thus, we assume afterwards that $[s,s_-]\cap \delta(T) \neq \emptyset$. 
 
\noindent We divide the analysis in 3 cases, depending on the relative position of these points:
\begin{description}
  \item[ Case $s_- < \delta(t_-) < \delta(t_+) < s_+$: ] by~\autoref{assump:1:lb:adap}, we must have $\abs{\delta(t_-)-s_-}, \abs{\delta(t_+)-s_+} \geq d_\ell$.  Then, no matter where the query $t\in T$ is made, we will have $\min_{s'\in S_\ell} \abs{t-s'} = \min(\abs{t-s_-},\abs{t-s_+}) \geq d_\ell$, so our bound on $d_{\ell+1}$ is determined by $\abs{t-s}$ (i.e., by the position of the new query $s$).
  \begin{figure}[H]\centering
\ifx\du\undefined
  \newlength{\du}
\fi
\setlength{\du}{15\unitlength}
\begin{tikzpicture}
\pgftransformxscale{1.000000}
\pgftransformyscale{-1.000000}
\definecolor{dialinecolor}{rgb}{0.000000, 0.000000, 0.000000}
\pgfsetstrokecolor{dialinecolor}
\definecolor{dialinecolor}{rgb}{1.000000, 1.000000, 1.000000}
\pgfsetfillcolor{dialinecolor}
\pgfsetlinewidth{0.100000\du}
\pgfsetdash{}{0pt}
\pgfsetdash{}{0pt}
\pgfsetbuttcap
{
\definecolor{dialinecolor}{rgb}{0.000000, 0.000000, 0.000000}
\pgfsetfillcolor{dialinecolor}
}
\definecolor{dialinecolor}{rgb}{0.000000, 0.000000, 0.000000}
\pgfsetstrokecolor{dialinecolor}
\draw (20.350000\du,3.850000\du)--(30.900000\du,3.850000\du);
\pgfsetlinewidth{0.100000\du}
\pgfsetdash{}{0pt}
\pgfsetmiterjoin
\pgfsetbuttcap
\definecolor{dialinecolor}{rgb}{0.000000, 0.000000, 0.000000}
\pgfsetfillcolor{dialinecolor}
\pgfpathmoveto{\pgfpoint{20.100000\du}{3.850000\du}}
\pgfpathcurveto{\pgfpoint{20.100000\du}{3.775000\du}}{\pgfpoint{20.175000\du}{3.700000\du}}{\pgfpoint{20.250000\du}{3.700000\du}}
\pgfpathcurveto{\pgfpoint{20.325000\du}{3.700000\du}}{\pgfpoint{20.400000\du}{3.775000\du}}{\pgfpoint{20.400000\du}{3.850000\du}}
\pgfpathcurveto{\pgfpoint{20.400000\du}{3.925000\du}}{\pgfpoint{20.325000\du}{4.000000\du}}{\pgfpoint{20.250000\du}{4.000000\du}}
\pgfpathcurveto{\pgfpoint{20.175000\du}{4.000000\du}}{\pgfpoint{20.100000\du}{3.925000\du}}{\pgfpoint{20.100000\du}{3.850000\du}}
\pgfusepath{fill}
\definecolor{dialinecolor}{rgb}{0.000000, 0.000000, 0.000000}
\pgfsetstrokecolor{dialinecolor}
\draw (20.225000\du,4.100000\du)--(20.225000\du,3.600000\du);
\pgfsetlinewidth{0.100000\du}
\pgfsetdash{}{0pt}
\pgfsetmiterjoin
\pgfsetbuttcap
\definecolor{dialinecolor}{rgb}{0.000000, 0.000000, 0.000000}
\pgfsetfillcolor{dialinecolor}
\pgfpathmoveto{\pgfpoint{31.150000\du}{3.850000\du}}
\pgfpathcurveto{\pgfpoint{31.150000\du}{3.925000\du}}{\pgfpoint{31.075000\du}{4.000000\du}}{\pgfpoint{31.000000\du}{4.000000\du}}
\pgfpathcurveto{\pgfpoint{30.925000\du}{4.000000\du}}{\pgfpoint{30.850000\du}{3.925000\du}}{\pgfpoint{30.850000\du}{3.850000\du}}
\pgfpathcurveto{\pgfpoint{30.850000\du}{3.775000\du}}{\pgfpoint{30.925000\du}{3.700000\du}}{\pgfpoint{31.000000\du}{3.700000\du}}
\pgfpathcurveto{\pgfpoint{31.075000\du}{3.700000\du}}{\pgfpoint{31.150000\du}{3.775000\du}}{\pgfpoint{31.150000\du}{3.850000\du}}
\pgfusepath{fill}
\definecolor{dialinecolor}{rgb}{0.000000, 0.000000, 0.000000}
\pgfsetstrokecolor{dialinecolor}
\draw (31.025000\du,3.600000\du)--(31.025000\du,4.100000\du);
\pgfsetlinewidth{0.100000\du}
\pgfsetdash{}{0pt}
\pgfsetdash{}{0pt}
\pgfsetbuttcap
{
\definecolor{dialinecolor}{rgb}{0.000000, 0.000000, 0.000000}
\pgfsetfillcolor{dialinecolor}
}
\definecolor{dialinecolor}{rgb}{0.000000, 0.000000, 0.000000}
\pgfsetstrokecolor{dialinecolor}
\draw (20.450000\du,6.450000\du)--(25.650000\du,6.450000\du);
\pgfsetlinewidth{0.100000\du}
\pgfsetdash{}{0pt}
\pgfsetmiterjoin
\pgfsetbuttcap
\definecolor{dialinecolor}{rgb}{0.000000, 0.000000, 0.000000}
\pgfsetfillcolor{dialinecolor}
\pgfpathmoveto{\pgfpoint{20.200000\du}{6.450000\du}}
\pgfpathcurveto{\pgfpoint{20.200000\du}{6.375000\du}}{\pgfpoint{20.275000\du}{6.300000\du}}{\pgfpoint{20.350000\du}{6.300000\du}}
\pgfpathcurveto{\pgfpoint{20.425000\du}{6.300000\du}}{\pgfpoint{20.500000\du}{6.375000\du}}{\pgfpoint{20.500000\du}{6.450000\du}}
\pgfpathcurveto{\pgfpoint{20.500000\du}{6.525000\du}}{\pgfpoint{20.425000\du}{6.600000\du}}{\pgfpoint{20.350000\du}{6.600000\du}}
\pgfpathcurveto{\pgfpoint{20.275000\du}{6.600000\du}}{\pgfpoint{20.200000\du}{6.525000\du}}{\pgfpoint{20.200000\du}{6.450000\du}}
\pgfusepath{fill}
\definecolor{dialinecolor}{rgb}{0.000000, 0.000000, 0.000000}
\pgfsetstrokecolor{dialinecolor}
\draw (20.325000\du,6.700000\du)--(20.325000\du,6.200000\du);
\pgfsetlinewidth{0.100000\du}
\pgfsetdash{}{0pt}
\pgfsetmiterjoin
\pgfsetbuttcap
\definecolor{dialinecolor}{rgb}{0.000000, 0.000000, 0.000000}
\pgfsetfillcolor{dialinecolor}
\pgfpathmoveto{\pgfpoint{25.900000\du}{6.450000\du}}
\pgfpathcurveto{\pgfpoint{25.900000\du}{6.525000\du}}{\pgfpoint{25.825000\du}{6.600000\du}}{\pgfpoint{25.750000\du}{6.600000\du}}
\pgfpathcurveto{\pgfpoint{25.675000\du}{6.600000\du}}{\pgfpoint{25.600000\du}{6.525000\du}}{\pgfpoint{25.600000\du}{6.450000\du}}
\pgfpathcurveto{\pgfpoint{25.600000\du}{6.375000\du}}{\pgfpoint{25.675000\du}{6.300000\du}}{\pgfpoint{25.750000\du}{6.300000\du}}
\pgfpathcurveto{\pgfpoint{25.825000\du}{6.300000\du}}{\pgfpoint{25.900000\du}{6.375000\du}}{\pgfpoint{25.900000\du}{6.450000\du}}
\pgfusepath{fill}
\definecolor{dialinecolor}{rgb}{0.000000, 0.000000, 0.000000}
\pgfsetstrokecolor{dialinecolor}
\draw (25.775000\du,6.200000\du)--(25.775000\du,6.700000\du);
\pgfsetlinewidth{0.100000\du}
\pgfsetdash{}{0pt}
\pgfsetdash{}{0pt}
\pgfsetbuttcap
{
\definecolor{dialinecolor}{rgb}{0.000000, 0.000000, 0.000000}
\pgfsetfillcolor{dialinecolor}
}
\definecolor{dialinecolor}{rgb}{0.000000, 0.000000, 0.000000}
\pgfsetstrokecolor{dialinecolor}
\draw (22.995700\du,3.965700\du)--(28.195700\du,3.965700\du);
\pgfsetlinewidth{0.100000\du}
\pgfsetdash{}{0pt}
\pgfsetmiterjoin
\pgfsetbuttcap
\definecolor{dialinecolor}{rgb}{0.000000, 0.000000, 0.000000}
\pgfsetfillcolor{dialinecolor}
\pgfpathmoveto{\pgfpoint{22.745700\du}{3.965700\du}}
\pgfpathcurveto{\pgfpoint{22.745700\du}{3.890700\du}}{\pgfpoint{22.820700\du}{3.815700\du}}{\pgfpoint{22.895700\du}{3.815700\du}}
\pgfpathcurveto{\pgfpoint{22.970700\du}{3.815700\du}}{\pgfpoint{23.045700\du}{3.890700\du}}{\pgfpoint{23.045700\du}{3.965700\du}}
\pgfpathcurveto{\pgfpoint{23.045700\du}{4.040700\du}}{\pgfpoint{22.970700\du}{4.115700\du}}{\pgfpoint{22.895700\du}{4.115700\du}}
\pgfpathcurveto{\pgfpoint{22.820700\du}{4.115700\du}}{\pgfpoint{22.745700\du}{4.040700\du}}{\pgfpoint{22.745700\du}{3.965700\du}}
\pgfusepath{fill}
\definecolor{dialinecolor}{rgb}{0.000000, 0.000000, 0.000000}
\pgfsetstrokecolor{dialinecolor}
\draw (22.870700\du,4.215700\du)--(22.870700\du,3.715700\du);
\pgfsetlinewidth{0.100000\du}
\pgfsetdash{}{0pt}
\pgfsetmiterjoin
\pgfsetbuttcap
\definecolor{dialinecolor}{rgb}{0.000000, 0.000000, 0.000000}
\pgfsetfillcolor{dialinecolor}
\pgfpathmoveto{\pgfpoint{28.445700\du}{3.965700\du}}
\pgfpathcurveto{\pgfpoint{28.445700\du}{4.040700\du}}{\pgfpoint{28.370700\du}{4.115700\du}}{\pgfpoint{28.295700\du}{4.115700\du}}
\pgfpathcurveto{\pgfpoint{28.220700\du}{4.115700\du}}{\pgfpoint{28.145700\du}{4.040700\du}}{\pgfpoint{28.145700\du}{3.965700\du}}
\pgfpathcurveto{\pgfpoint{28.145700\du}{3.890700\du}}{\pgfpoint{28.220700\du}{3.815700\du}}{\pgfpoint{28.295700\du}{3.815700\du}}
\pgfpathcurveto{\pgfpoint{28.370700\du}{3.815700\du}}{\pgfpoint{28.445700\du}{3.890700\du}}{\pgfpoint{28.445700\du}{3.965700\du}}
\pgfusepath{fill}
\definecolor{dialinecolor}{rgb}{0.000000, 0.000000, 0.000000}
\pgfsetstrokecolor{dialinecolor}
\draw (28.320700\du,3.715700\du)--(28.320700\du,4.215700\du);
\definecolor{dialinecolor}{rgb}{0.000000, 0.000000, 0.000000}
\pgfsetstrokecolor{dialinecolor}
\node[anchor=west] at (19.450000\du,2.900000\du){$s_-$};
\definecolor{dialinecolor}{rgb}{0.000000, 0.000000, 0.000000}
\pgfsetstrokecolor{dialinecolor}
\node[anchor=west] at (30.225000\du,2.790000\du){$s_+$};
\definecolor{dialinecolor}{rgb}{0.000000, 0.000000, 0.000000}
\pgfsetstrokecolor{dialinecolor}
\node[anchor=west] at (19.700000\du,7.485000\du){$t_-$};
\definecolor{dialinecolor}{rgb}{0.000000, 0.000000, 0.000000}
\pgfsetstrokecolor{dialinecolor}
\node[anchor=west] at (25.025000\du,7.530000\du){$t_+$};
\pgfsetlinewidth{0.050000\du}
\pgfsetdash{{\pgflinewidth}{0.200000\du}}{0cm}
\pgfsetdash{{\pgflinewidth}{0.200000\du}}{0cm}
\pgfsetbuttcap
{
\definecolor{dialinecolor}{rgb}{0.000000, 0.000000, 0.000000}
\pgfsetfillcolor{dialinecolor}
\definecolor{dialinecolor}{rgb}{0.000000, 0.000000, 0.000000}
\pgfsetstrokecolor{dialinecolor}
\draw (25.700000\du,6.500000\du)--(28.350000\du,3.900000\du);
}
\pgfsetlinewidth{0.050000\du}
\pgfsetdash{{\pgflinewidth}{0.200000\du}}{0cm}
\pgfsetdash{{\pgflinewidth}{0.200000\du}}{0cm}
\pgfsetbuttcap
{
\definecolor{dialinecolor}{rgb}{0.000000, 0.000000, 0.000000}
\pgfsetfillcolor{dialinecolor}
\definecolor{dialinecolor}{rgb}{0.000000, 0.000000, 0.000000}
\pgfsetstrokecolor{dialinecolor}
\draw (20.260400\du,6.480400\du)--(22.910400\du,3.880400\du);
}
\definecolor{dialinecolor}{rgb}{0.000000, 0.000000, 0.000000}
\pgfsetstrokecolor{dialinecolor}
\node[anchor=west] at (16.750000\du,3.950000\du){$S$};
\definecolor{dialinecolor}{rgb}{0.000000, 0.000000, 0.000000}
\pgfsetstrokecolor{dialinecolor}
\node[anchor=west] at (16.800000\du,6.650000\du){$T$};
\end{tikzpicture}
  \end{figure}
Over the choice of $S,T$, the number of elements that fall in $(s_-,\delta(t_-))$ is distributed as a Binomial random variable $X$ with parameters 
\[
N\eqdef (s_+-\delta(t_+))+(\delta(t_-)-s_-) \geq 2d_\ell, \qquad p\eqdef\frac{T_{t_-}-S_{s_-}}{(T_{t_-}-S_{s_-})+(S_{s_+}-T_{t_+})}
\] 
which has standard deviation $\sqrt{Np(1-p)}\geq \sqrt{Np/2}$.\footnote{Without loss of generality, we here assume $p\leq 1/2$; otherwise, we proceed with the same argument, but considering the number of elements in $(s_+,\delta(t_+))$ instead of $(s_-,\delta(t_-))$, and $1-p$ instead of $p$.} 

We lower bound this variance as follows.
\begin{align*}
    Np &= (s_--\delta(t_-))\cdot\left(1+\frac{\delta(t_+)-s_+}{s_--\delta(t_-)}\right)\cdot \frac{1}{1+\frac{S_{s_+}-T_{t_+}}{T_{t_-}-S_{s_-}}} \\
    &\geq d_\ell \cdot\left(1+\frac{\delta(t_+)-s_+}{s_--\delta(t_-)}\right)\cdot \frac{1}{1+\frac{S_{s_+}-T_{t_+}}{T_{t_-}-S_{s_-}}} \\
    &\geq d_\ell \cdot\left(1+\frac{\delta(t_+)-s_+}{s_--\delta(t_-)}\right)\cdot \frac{1}{1+\frac{\delta(t_+)-s_+ + O(\sqrt{m})}{s_--\delta(t_-)-O(\sqrt{m}}} \tag{DKW}\\
    &\geq d_\ell \cdot\left(1+\frac{\delta(t_+)-s_+}{s_--\delta(t_-)}\right)\cdot \frac{1}{1+\frac{2(\delta(t_+)-s_+)}{(s_--\delta(t_-))/2}} \tag{$d_\ell = \bigOmega{\sqrt{m}}$}\\
    &\geq \frac{d_\ell}{4} \tag{$\ddagger$}
\end{align*}
which holds as long as $d_\ell \geq C\sqrt{m}$ for some absolute constant $C>0$.

Since this number of elements $X$ fully characterizes, in this first case, the distance from $s$ (whose position in $(s_-,s_+)$ is known to the algorithm) to $\delta(t_-)$, $\delta(t)$, and $\delta(t_+)$ (whose relative position is the same as that of $t_-,t,t_+$ and thus also known to the algorithm), the distance of $s$ to $\delta(T)$ is equivalent to the realization of $X$. Thus, it is sufficient to show that with high probability any algorithm would have error at least $d_{\ell+1}\approx \sqrt{d_\ell}$ when guessing $X$. This follows from anticoncentration of Binomial distributions: namely, 
\[
    \inf_{x\in\R} \probaDistrOf{X}{ \abs{X-x} \leq d_{\ell+1} } = \probaDistrOf{X}{ \abs{X-\expect{X}} \leq d_{\ell+1} } 
    =  \bigO{\frac{d_{\ell+1}}{\sqrt{pN}}} \leq \frac{1}{30rq^2}
\]
the last equality by our choice of $d_{\ell+1} = \bigO{\frac{\sqrt{d_\ell}}{ rq^2 }}$, and the fact that $Np = \bigOmega{d_\ell}$.

By the above discussion, and a union bound over the 3 events, we get that 
\[
    \probaOf{ \min_{u\in\{t',t,t''\}} \abs{s-\delta(u)} \leq d_{\ell+1} } \leq \frac{1}{10rq^2}.
\]
    
  \item[ Case $\delta(t_-) < s_- < s_+ < \delta(t_+)$: ] by~\autoref{assump:1:lb:adap}, we must also have $\abs{\delta(t_-)-s_-}, \abs{\delta(t_+)-s_+} \geq d_\ell$.  
  \begin{figure}[H]\centering
\ifx\du\undefined
  \newlength{\du}
\fi
\setlength{\du}{15\unitlength}
\begin{tikzpicture}
\pgftransformxscale{1.000000}
\pgftransformyscale{-1.000000}
\definecolor{dialinecolor}{rgb}{0.000000, 0.000000, 0.000000}
\pgfsetstrokecolor{dialinecolor}
\definecolor{dialinecolor}{rgb}{1.000000, 1.000000, 1.000000}
\pgfsetfillcolor{dialinecolor}
\pgfsetlinewidth{0.100000\du}
\pgfsetdash{}{0pt}
\pgfsetdash{}{0pt}
\pgfsetbuttcap
{
\definecolor{dialinecolor}{rgb}{0.000000, 0.000000, 0.000000}
\pgfsetfillcolor{dialinecolor}
}
\definecolor{dialinecolor}{rgb}{0.000000, 0.000000, 0.000000}
\pgfsetstrokecolor{dialinecolor}
\draw (23.762499\du,1.888054\du)--(28.900001\du,1.899446\du);
\pgfsetlinewidth{0.100000\du}
\pgfsetdash{}{0pt}
\pgfsetmiterjoin
\pgfsetbuttcap
\definecolor{dialinecolor}{rgb}{0.000000, 0.000000, 0.000000}
\pgfsetfillcolor{dialinecolor}
\pgfpathmoveto{\pgfpoint{23.512500\du}{1.887500\du}}
\pgfpathcurveto{\pgfpoint{23.512666\du}{1.812500\du}}{\pgfpoint{23.587832\du}{1.737667\du}}{\pgfpoint{23.662832\du}{1.737833\du}}
\pgfpathcurveto{\pgfpoint{23.737832\du}{1.737999\du}}{\pgfpoint{23.812666\du}{1.813165\du}}{\pgfpoint{23.812499\du}{1.888165\du}}
\pgfpathcurveto{\pgfpoint{23.812333\du}{1.963165\du}}{\pgfpoint{23.737167\du}{2.037999\du}}{\pgfpoint{23.662167\du}{2.037832\du}}
\pgfpathcurveto{\pgfpoint{23.587167\du}{2.037666\du}}{\pgfpoint{23.512334\du}{1.962500\du}}{\pgfpoint{23.512500\du}{1.887500\du}}
\pgfusepath{fill}
\definecolor{dialinecolor}{rgb}{0.000000, 0.000000, 0.000000}
\pgfsetstrokecolor{dialinecolor}
\draw (23.636945\du,2.137777\du)--(23.638054\du,1.637778\du);
\pgfsetlinewidth{0.100000\du}
\pgfsetdash{}{0pt}
\pgfsetmiterjoin
\pgfsetbuttcap
\definecolor{dialinecolor}{rgb}{0.000000, 0.000000, 0.000000}
\pgfsetfillcolor{dialinecolor}
\pgfpathmoveto{\pgfpoint{29.150000\du}{1.900000\du}}
\pgfpathcurveto{\pgfpoint{29.149834\du}{1.975000\du}}{\pgfpoint{29.074668\du}{2.049833\du}}{\pgfpoint{28.999668\du}{2.049667\du}}
\pgfpathcurveto{\pgfpoint{28.924668\du}{2.049501\du}}{\pgfpoint{28.849834\du}{1.974335\du}}{\pgfpoint{28.850001\du}{1.899335\du}}
\pgfpathcurveto{\pgfpoint{28.850167\du}{1.824335\du}}{\pgfpoint{28.925333\du}{1.749501\du}}{\pgfpoint{29.000333\du}{1.749668\du}}
\pgfpathcurveto{\pgfpoint{29.075333\du}{1.749834\du}}{\pgfpoint{29.150166\du}{1.825000\du}}{\pgfpoint{29.150000\du}{1.900000\du}}
\pgfusepath{fill}
\definecolor{dialinecolor}{rgb}{0.000000, 0.000000, 0.000000}
\pgfsetstrokecolor{dialinecolor}
\draw (29.025555\du,1.649723\du)--(29.024446\du,2.149722\du);
\pgfsetlinewidth{0.100000\du}
\pgfsetdash{}{0pt}
\pgfsetdash{}{0pt}
\pgfsetbuttcap
{
\definecolor{dialinecolor}{rgb}{0.000000, 0.000000, 0.000000}
\pgfsetfillcolor{dialinecolor}
}
\definecolor{dialinecolor}{rgb}{0.000000, 0.000000, 0.000000}
\pgfsetstrokecolor{dialinecolor}
\draw (21.745699\du,1.965064\du)--(32.337501\du,1.938136\du);
\pgfsetlinewidth{0.100000\du}
\pgfsetdash{}{0pt}
\pgfsetmiterjoin
\pgfsetbuttcap
\definecolor{dialinecolor}{rgb}{0.000000, 0.000000, 0.000000}
\pgfsetfillcolor{dialinecolor}
\pgfpathmoveto{\pgfpoint{21.495700\du}{1.965700\du}}
\pgfpathcurveto{\pgfpoint{21.495509\du}{1.890700\du}}{\pgfpoint{21.570318\du}{1.815510\du}}{\pgfpoint{21.645318\du}{1.815319\du}}
\pgfpathcurveto{\pgfpoint{21.720318\du}{1.815128\du}}{\pgfpoint{21.795508\du}{1.889938\du}}{\pgfpoint{21.795699\du}{1.964937\du}}
\pgfpathcurveto{\pgfpoint{21.795890\du}{2.039937\du}}{\pgfpoint{21.721081\du}{2.115127\du}}{\pgfpoint{21.646081\du}{2.115318\du}}
\pgfpathcurveto{\pgfpoint{21.571081\du}{2.115509\du}}{\pgfpoint{21.495891\du}{2.040700\du}}{\pgfpoint{21.495700\du}{1.965700\du}}
\pgfusepath{fill}
\definecolor{dialinecolor}{rgb}{0.000000, 0.000000, 0.000000}
\pgfsetstrokecolor{dialinecolor}
\draw (21.621335\du,2.215381\du)--(21.620064\du,1.715383\du);
\pgfsetlinewidth{0.100000\du}
\pgfsetdash{}{0pt}
\pgfsetmiterjoin
\pgfsetbuttcap
\definecolor{dialinecolor}{rgb}{0.000000, 0.000000, 0.000000}
\pgfsetfillcolor{dialinecolor}
\pgfpathmoveto{\pgfpoint{32.587500\du}{1.937500\du}}
\pgfpathcurveto{\pgfpoint{32.587691\du}{2.012500\du}}{\pgfpoint{32.512882\du}{2.087690\du}}{\pgfpoint{32.437882\du}{2.087881\du}}
\pgfpathcurveto{\pgfpoint{32.362882\du}{2.088072\du}}{\pgfpoint{32.287692\du}{2.013262\du}}{\pgfpoint{32.287501\du}{1.938263\du}}
\pgfpathcurveto{\pgfpoint{32.287310\du}{1.863263\du}}{\pgfpoint{32.362119\du}{1.788073\du}}{\pgfpoint{32.437119\du}{1.787882\du}}
\pgfpathcurveto{\pgfpoint{32.512119\du}{1.787691\du}}{\pgfpoint{32.587309\du}{1.862500\du}}{\pgfpoint{32.587500\du}{1.937500\du}}
\pgfusepath{fill}
\definecolor{dialinecolor}{rgb}{0.000000, 0.000000, 0.000000}
\pgfsetstrokecolor{dialinecolor}
\draw (32.461865\du,1.687819\du)--(32.463136\du,2.187817\du);
\definecolor{dialinecolor}{rgb}{0.000000, 0.000000, 0.000000}
\pgfsetstrokecolor{dialinecolor}
\node[anchor=west] at (22.950000\du,1.250000\du){$s_-$};
\definecolor{dialinecolor}{rgb}{0.000000, 0.000000, 0.000000}
\pgfsetstrokecolor{dialinecolor}
\node[anchor=west] at (28.225000\du,1.290000\du){$s_+$};
\definecolor{dialinecolor}{rgb}{0.000000, 0.000000, 0.000000}
\pgfsetstrokecolor{dialinecolor}
\node[anchor=west] at (18.450000\du,5.485000\du){$t_-$};
\definecolor{dialinecolor}{rgb}{0.000000, 0.000000, 0.000000}
\pgfsetstrokecolor{dialinecolor}
\node[anchor=west] at (28.525000\du,5.530000\du){$t_+$};
\pgfsetlinewidth{0.050000\du}
\pgfsetdash{{\pgflinewidth}{0.200000\du}}{0cm}
\pgfsetdash{{\pgflinewidth}{0.200000\du}}{0cm}
\pgfsetbuttcap
{
\definecolor{dialinecolor}{rgb}{0.000000, 0.000000, 0.000000}
\pgfsetfillcolor{dialinecolor}
\definecolor{dialinecolor}{rgb}{0.000000, 0.000000, 0.000000}
\pgfsetstrokecolor{dialinecolor}
\draw (19.010400\du,4.480400\du)--(21.660400\du,1.880400\du);
}
\definecolor{dialinecolor}{rgb}{0.000000, 0.000000, 0.000000}
\pgfsetstrokecolor{dialinecolor}
\node[anchor=west] at (16.750000\du,1.950000\du){$S$};
\definecolor{dialinecolor}{rgb}{0.000000, 0.000000, 0.000000}
\pgfsetstrokecolor{dialinecolor}
\node[anchor=west] at (16.800000\du,4.650000\du){$T$};
\pgfsetlinewidth{0.100000\du}
\pgfsetdash{}{0pt}
\pgfsetdash{}{0pt}
\pgfsetbuttcap
{
\definecolor{dialinecolor}{rgb}{0.000000, 0.000000, 0.000000}
\pgfsetfillcolor{dialinecolor}
}
\definecolor{dialinecolor}{rgb}{0.000000, 0.000000, 0.000000}
\pgfsetstrokecolor{dialinecolor}
\draw (19.095699\du,4.457664\du)--(29.687501\du,4.430736\du);
\pgfsetlinewidth{0.100000\du}
\pgfsetdash{}{0pt}
\pgfsetmiterjoin
\pgfsetbuttcap
\definecolor{dialinecolor}{rgb}{0.000000, 0.000000, 0.000000}
\pgfsetfillcolor{dialinecolor}
\pgfpathmoveto{\pgfpoint{18.845700\du}{4.458300\du}}
\pgfpathcurveto{\pgfpoint{18.845509\du}{4.383300\du}}{\pgfpoint{18.920318\du}{4.308110\du}}{\pgfpoint{18.995318\du}{4.307919\du}}
\pgfpathcurveto{\pgfpoint{19.070318\du}{4.307728\du}}{\pgfpoint{19.145508\du}{4.382538\du}}{\pgfpoint{19.145699\du}{4.457537\du}}
\pgfpathcurveto{\pgfpoint{19.145890\du}{4.532537\du}}{\pgfpoint{19.071081\du}{4.607727\du}}{\pgfpoint{18.996081\du}{4.607918\du}}
\pgfpathcurveto{\pgfpoint{18.921081\du}{4.608109\du}}{\pgfpoint{18.845891\du}{4.533300\du}}{\pgfpoint{18.845700\du}{4.458300\du}}
\pgfusepath{fill}
\definecolor{dialinecolor}{rgb}{0.000000, 0.000000, 0.000000}
\pgfsetstrokecolor{dialinecolor}
\draw (18.971335\du,4.707981\du)--(18.970064\du,4.207983\du);
\pgfsetlinewidth{0.100000\du}
\pgfsetdash{}{0pt}
\pgfsetmiterjoin
\pgfsetbuttcap
\definecolor{dialinecolor}{rgb}{0.000000, 0.000000, 0.000000}
\pgfsetfillcolor{dialinecolor}
\pgfpathmoveto{\pgfpoint{29.937500\du}{4.430100\du}}
\pgfpathcurveto{\pgfpoint{29.937691\du}{4.505100\du}}{\pgfpoint{29.862882\du}{4.580290\du}}{\pgfpoint{29.787882\du}{4.580481\du}}
\pgfpathcurveto{\pgfpoint{29.712882\du}{4.580672\du}}{\pgfpoint{29.637692\du}{4.505862\du}}{\pgfpoint{29.637501\du}{4.430863\du}}
\pgfpathcurveto{\pgfpoint{29.637310\du}{4.355863\du}}{\pgfpoint{29.712119\du}{4.280673\du}}{\pgfpoint{29.787119\du}{4.280482\du}}
\pgfpathcurveto{\pgfpoint{29.862119\du}{4.280291\du}}{\pgfpoint{29.937309\du}{4.355100\du}}{\pgfpoint{29.937500\du}{4.430100\du}}
\pgfusepath{fill}
\definecolor{dialinecolor}{rgb}{0.000000, 0.000000, 0.000000}
\pgfsetstrokecolor{dialinecolor}
\draw (29.811865\du,4.180419\du)--(29.813136\du,4.680417\du);
\pgfsetlinewidth{0.050000\du}
\pgfsetdash{{\pgflinewidth}{0.200000\du}}{0cm}
\pgfsetdash{{\pgflinewidth}{0.200000\du}}{0cm}
\pgfsetbuttcap
{
\definecolor{dialinecolor}{rgb}{0.000000, 0.000000, 0.000000}
\pgfsetfillcolor{dialinecolor}
\definecolor{dialinecolor}{rgb}{0.000000, 0.000000, 0.000000}
\pgfsetstrokecolor{dialinecolor}
\draw (29.810400\du,4.495400\du)--(32.460400\du,1.895400\du);
}
\end{tikzpicture}
  \end{figure}
  The situation is the same as in the first case, with the roles of $(t_-,t,t_+)$ and $(s_-,s,s_+)$ swapped. Namely, the distance of $\delta(t)$ to $s_-,s_+$ is now fully captured by the number of elements in $(\delta(t_-),s_-)$, which is (over the choice of $S,T$) distributed as a random variable $X$ following a Binomial distribution with parameters 
  \[
    N\eqdef (\delta(t_+)-s_+)+(s_--\delta(t_-)) \geq 2d_\ell, \qquad p\eqdef\frac{S_{s_-}-T_{t_-}}{(S_{s_-}-T_{t_-})+(S_{s_+}-T_{t_+})}.
  \]
  The same argument then shows that 
  \[
      \probaOf{ \min_{u\in\{s',s,s''\}} \abs{\delta(t)-u} \leq d_{\ell+1} } \leq \frac{1}{10rq^2}.
  \]
  
  \item[ Case $\delta(t_-) < s_- < \delta(t_+) < s_+$ (equivalent by symmetry to $s_- < \delta(t_-) < s_+ < \delta(t_+)$): ] by~\autoref{assump:1:lb:adap}, we must have $\abs{\delta(t_-)-s_-}, \abs{\delta(t_+)-s_-}, \abs{\delta(t_+)-s_+} \geq d_\ell$.
  \begin{figure}[H]\centering
\ifx\du\undefined
  \newlength{\du}
\fi
\setlength{\du}{15\unitlength}
\begin{tikzpicture}
\pgftransformxscale{1.000000}
\pgftransformyscale{-1.000000}
\definecolor{dialinecolor}{rgb}{0.000000, 0.000000, 0.000000}
\pgfsetstrokecolor{dialinecolor}
\definecolor{dialinecolor}{rgb}{1.000000, 1.000000, 1.000000}
\pgfsetfillcolor{dialinecolor}
\pgfsetlinewidth{0.100000\du}
\pgfsetdash{}{0pt}
\pgfsetdash{}{0pt}
\pgfsetbuttcap
{
\definecolor{dialinecolor}{rgb}{0.000000, 0.000000, 0.000000}
\pgfsetfillcolor{dialinecolor}
}
\definecolor{dialinecolor}{rgb}{0.000000, 0.000000, 0.000000}
\pgfsetstrokecolor{dialinecolor}
\draw (23.787499\du,2.186809\du)--(32.337501\du,2.163191\du);
\pgfsetlinewidth{0.100000\du}
\pgfsetdash{}{0pt}
\pgfsetmiterjoin
\pgfsetbuttcap
\definecolor{dialinecolor}{rgb}{0.000000, 0.000000, 0.000000}
\pgfsetfillcolor{dialinecolor}
\pgfpathmoveto{\pgfpoint{23.537500\du}{2.187500\du}}
\pgfpathcurveto{\pgfpoint{23.537293\du}{2.112500\du}}{\pgfpoint{23.612085\du}{2.037293\du}}{\pgfpoint{23.687085\du}{2.037086\du}}
\pgfpathcurveto{\pgfpoint{23.762085\du}{2.036879\du}}{\pgfpoint{23.837292\du}{2.111672\du}}{\pgfpoint{23.837499\du}{2.186671\du}}
\pgfpathcurveto{\pgfpoint{23.837706\du}{2.261671\du}}{\pgfpoint{23.762914\du}{2.336878\du}}{\pgfpoint{23.687914\du}{2.337085\du}}
\pgfpathcurveto{\pgfpoint{23.612914\du}{2.337292\du}}{\pgfpoint{23.537707\du}{2.262500\du}}{\pgfpoint{23.537500\du}{2.187500\du}}
\pgfusepath{fill}
\definecolor{dialinecolor}{rgb}{0.000000, 0.000000, 0.000000}
\pgfsetstrokecolor{dialinecolor}
\draw (23.663190\du,2.437154\du)--(23.661809\du,1.937156\du);
\pgfsetlinewidth{0.100000\du}
\pgfsetdash{}{0pt}
\pgfsetmiterjoin
\pgfsetbuttcap
\definecolor{dialinecolor}{rgb}{0.000000, 0.000000, 0.000000}
\pgfsetfillcolor{dialinecolor}
\pgfpathmoveto{\pgfpoint{32.587500\du}{2.162500\du}}
\pgfpathcurveto{\pgfpoint{32.587707\du}{2.237500\du}}{\pgfpoint{32.512915\du}{2.312707\du}}{\pgfpoint{32.437915\du}{2.312914\du}}
\pgfpathcurveto{\pgfpoint{32.362915\du}{2.313121\du}}{\pgfpoint{32.287708\du}{2.238328\du}}{\pgfpoint{32.287501\du}{2.163329\du}}
\pgfpathcurveto{\pgfpoint{32.287294\du}{2.088329\du}}{\pgfpoint{32.362086\du}{2.013122\du}}{\pgfpoint{32.437086\du}{2.012915\du}}
\pgfpathcurveto{\pgfpoint{32.512086\du}{2.012708\du}}{\pgfpoint{32.587293\du}{2.087500\du}}{\pgfpoint{32.587500\du}{2.162500\du}}
\pgfusepath{fill}
\definecolor{dialinecolor}{rgb}{0.000000, 0.000000, 0.000000}
\pgfsetstrokecolor{dialinecolor}
\draw (32.461810\du,1.912846\du)--(32.463191\du,2.412844\du);
\pgfsetlinewidth{0.100000\du}
\pgfsetdash{}{0pt}
\pgfsetdash{}{0pt}
\pgfsetbuttcap
{
\definecolor{dialinecolor}{rgb}{0.000000, 0.000000, 0.000000}
\pgfsetfillcolor{dialinecolor}
}
\definecolor{dialinecolor}{rgb}{0.000000, 0.000000, 0.000000}
\pgfsetstrokecolor{dialinecolor}
\draw (21.745699\du,2.216506\du)--(28.087501\du,2.236704\du);
\pgfsetlinewidth{0.100000\du}
\pgfsetdash{}{0pt}
\pgfsetmiterjoin
\pgfsetbuttcap
\definecolor{dialinecolor}{rgb}{0.000000, 0.000000, 0.000000}
\pgfsetfillcolor{dialinecolor}
\pgfpathmoveto{\pgfpoint{21.495700\du}{2.215710\du}}
\pgfpathcurveto{\pgfpoint{21.495939\du}{2.140710\du}}{\pgfpoint{21.571177\du}{2.065950\du}}{\pgfpoint{21.646177\du}{2.066188\du}}
\pgfpathcurveto{\pgfpoint{21.721177\du}{2.066427\du}}{\pgfpoint{21.795937\du}{2.141666\du}}{\pgfpoint{21.795698\du}{2.216665\du}}
\pgfpathcurveto{\pgfpoint{21.795460\du}{2.291665\du}}{\pgfpoint{21.720221\du}{2.366426\du}}{\pgfpoint{21.645222\du}{2.366187\du}}
\pgfpathcurveto{\pgfpoint{21.570222\du}{2.365948\du}}{\pgfpoint{21.495461\du}{2.290710\du}}{\pgfpoint{21.495700\du}{2.215710\du}}
\pgfusepath{fill}
\definecolor{dialinecolor}{rgb}{0.000000, 0.000000, 0.000000}
\pgfsetstrokecolor{dialinecolor}
\draw (21.619903\du,2.466107\du)--(21.621496\du,1.966109\du);
\pgfsetlinewidth{0.100000\du}
\pgfsetdash{}{0pt}
\pgfsetmiterjoin
\pgfsetbuttcap
\definecolor{dialinecolor}{rgb}{0.000000, 0.000000, 0.000000}
\pgfsetfillcolor{dialinecolor}
\pgfpathmoveto{\pgfpoint{28.337500\du}{2.237500\du}}
\pgfpathcurveto{\pgfpoint{28.337261\du}{2.312500\du}}{\pgfpoint{28.262023\du}{2.387260\du}}{\pgfpoint{28.187023\du}{2.387022\du}}
\pgfpathcurveto{\pgfpoint{28.112023\du}{2.386783\du}}{\pgfpoint{28.037263\du}{2.311544\du}}{\pgfpoint{28.037502\du}{2.236545\du}}
\pgfpathcurveto{\pgfpoint{28.037740\du}{2.161545\du}}{\pgfpoint{28.112979\du}{2.086784\du}}{\pgfpoint{28.187978\du}{2.087023\du}}
\pgfpathcurveto{\pgfpoint{28.262978\du}{2.087262\du}}{\pgfpoint{28.337739\du}{2.162500\du}}{\pgfpoint{28.337500\du}{2.237500\du}}
\pgfusepath{fill}
\definecolor{dialinecolor}{rgb}{0.000000, 0.000000, 0.000000}
\pgfsetstrokecolor{dialinecolor}
\draw (28.213297\du,1.987103\du)--(28.211704\du,2.487101\du);
\definecolor{dialinecolor}{rgb}{0.000000, 0.000000, 0.000000}
\pgfsetstrokecolor{dialinecolor}
\node[anchor=west] at (22.950000\du,1.500000\du){$s_-$};
\definecolor{dialinecolor}{rgb}{0.000000, 0.000000, 0.000000}
\pgfsetstrokecolor{dialinecolor}
\node[anchor=west] at (31.475000\du,1.540000\du){$s_+$};
\definecolor{dialinecolor}{rgb}{0.000000, 0.000000, 0.000000}
\pgfsetstrokecolor{dialinecolor}
\node[anchor=west] at (18.450000\du,5.735000\du){$t_-$};
\definecolor{dialinecolor}{rgb}{0.000000, 0.000000, 0.000000}
\pgfsetstrokecolor{dialinecolor}
\node[anchor=west] at (24.325000\du,5.780000\du){$t_+$};
\definecolor{dialinecolor}{rgb}{0.000000, 0.000000, 0.000000}
\pgfsetstrokecolor{dialinecolor}
\node[anchor=west] at (17.950000\du,6.750000\du){};
\pgfsetlinewidth{0.050000\du}
\pgfsetdash{{\pgflinewidth}{0.200000\du}}{0cm}
\pgfsetdash{{\pgflinewidth}{0.200000\du}}{0cm}
\pgfsetbuttcap
{
\definecolor{dialinecolor}{rgb}{0.000000, 0.000000, 0.000000}
\pgfsetfillcolor{dialinecolor}
\definecolor{dialinecolor}{rgb}{0.000000, 0.000000, 0.000000}
\pgfsetstrokecolor{dialinecolor}
\draw (19.010400\du,4.730400\du)--(21.660400\du,2.130350\du);
}
\definecolor{dialinecolor}{rgb}{0.000000, 0.000000, 0.000000}
\pgfsetstrokecolor{dialinecolor}
\node[anchor=west] at (16.750000\du,2.200000\du){$S$};
\definecolor{dialinecolor}{rgb}{0.000000, 0.000000, 0.000000}
\pgfsetstrokecolor{dialinecolor}
\node[anchor=west] at (16.800000\du,4.900000\du){$T$};
\pgfsetlinewidth{0.050000\du}
\pgfsetdash{{\pgflinewidth}{0.200000\du}}{0cm}
\pgfsetdash{{\pgflinewidth}{0.200000\du}}{0cm}
\pgfsetbuttcap
{
\definecolor{dialinecolor}{rgb}{0.000000, 0.000000, 0.000000}
\pgfsetfillcolor{dialinecolor}
\definecolor{dialinecolor}{rgb}{0.000000, 0.000000, 0.000000}
\pgfsetstrokecolor{dialinecolor}
\draw (25.610400\du,4.745400\du)--(28.260400\du,2.145350\du);
}
\pgfsetlinewidth{0.100000\du}
\pgfsetdash{}{0pt}
\pgfsetdash{}{0pt}
\pgfsetbuttcap
{
\definecolor{dialinecolor}{rgb}{0.000000, 0.000000, 0.000000}
\pgfsetfillcolor{dialinecolor}
}
\definecolor{dialinecolor}{rgb}{0.000000, 0.000000, 0.000000}
\pgfsetstrokecolor{dialinecolor}
\draw (19.145699\du,4.705697\du)--(25.487501\du,4.725903\du);
\pgfsetlinewidth{0.100000\du}
\pgfsetdash{}{0pt}
\pgfsetmiterjoin
\pgfsetbuttcap
\definecolor{dialinecolor}{rgb}{0.000000, 0.000000, 0.000000}
\pgfsetfillcolor{dialinecolor}
\pgfpathmoveto{\pgfpoint{18.895700\du}{4.704900\du}}
\pgfpathcurveto{\pgfpoint{18.895939\du}{4.629900\du}}{\pgfpoint{18.971178\du}{4.555140\du}}{\pgfpoint{19.046177\du}{4.555379\du}}
\pgfpathcurveto{\pgfpoint{19.121177\du}{4.555618\du}}{\pgfpoint{19.195937\du}{4.630856\du}}{\pgfpoint{19.195698\du}{4.705856\du}}
\pgfpathcurveto{\pgfpoint{19.195460\du}{4.780856\du}}{\pgfpoint{19.120221\du}{4.855616\du}}{\pgfpoint{19.045221\du}{4.855377\du}}
\pgfpathcurveto{\pgfpoint{18.970222\du}{4.855138\du}}{\pgfpoint{18.895461\du}{4.779900\du}}{\pgfpoint{18.895700\du}{4.704900\du}}
\pgfusepath{fill}
\definecolor{dialinecolor}{rgb}{0.000000, 0.000000, 0.000000}
\pgfsetstrokecolor{dialinecolor}
\draw (19.019903\du,4.955297\du)--(19.021496\du,4.455300\du);
\pgfsetlinewidth{0.100000\du}
\pgfsetdash{}{0pt}
\pgfsetmiterjoin
\pgfsetbuttcap
\definecolor{dialinecolor}{rgb}{0.000000, 0.000000, 0.000000}
\pgfsetfillcolor{dialinecolor}
\pgfpathmoveto{\pgfpoint{25.737500\du}{4.726700\du}}
\pgfpathcurveto{\pgfpoint{25.737261\du}{4.801700\du}}{\pgfpoint{25.662022\du}{4.876460\du}}{\pgfpoint{25.587023\du}{4.876221\du}}
\pgfpathcurveto{\pgfpoint{25.512023\du}{4.875982\du}}{\pgfpoint{25.437263\du}{4.800744\du}}{\pgfpoint{25.437502\du}{4.725744\du}}
\pgfpathcurveto{\pgfpoint{25.437740\du}{4.650744\du}}{\pgfpoint{25.512979\du}{4.575984\du}}{\pgfpoint{25.587979\du}{4.576223\du}}
\pgfpathcurveto{\pgfpoint{25.662978\du}{4.576462\du}}{\pgfpoint{25.737739\du}{4.651700\du}}{\pgfpoint{25.737500\du}{4.726700\du}}
\pgfusepath{fill}
\definecolor{dialinecolor}{rgb}{0.000000, 0.000000, 0.000000}
\pgfsetstrokecolor{dialinecolor}
\draw (25.613297\du,4.476303\du)--(25.611704\du,4.976300\du);
\end{tikzpicture}
  \end{figure}
  In this case, the distance of $\delta(t)$ to $s_-,s_+$ and $s$ to $\delta(s_-),\delta(s_+)$ are entirely characterized (from the point of view of \Algo) by the unknown number of elements in $(s_-,\delta(t_+))$, which is this time (over the choice of $S,T$) distributed as a random variable $X$ following a Binomial distribution with parameters 
  \[
    N\eqdef s_+-s_- = (s_+-\delta(t_+))+(\delta(t_+)-s_-) \geq 2d_\ell, \quad p\eqdef\frac{T_{t_+}-S_{s_-}}{S_{s_+}-S_{s_-}} 
    = \frac{T_{t_+}-S_{s_-}}{(S_{s_+}-T_{t_+})+(T_{t_+}-S_{s_-})}.
  \]
  The same argument as before then shows that 
  \[
      \probaOf{ \min_{\substack{u\in\{s',s,s''\}\\ v\in\{t',t,t''\}}} \abs{\delta(v)-u} \leq d_{\ell+1} } \leq \frac{1}{6rq^2}
  \]
  (where we did a union bound over the 5 events ``$s$ close to $\delta(t')$,'' ``$s$ close to $\delta(t)$,'' ``$s$ close to $\delta(t'')$,'' ``$\delta(t)$ close to $s'$,'' and ``$\delta(t)$ close to $s''$'').
\end{description}
  A union bound over all $q^2$ pairs of queries $(s,t)$ then guarantees that, with probability at least $\frac{5}{6r}$,~\autoref{assump:1:lb:adap} still holds at round $\ell+1$.
  
  This concludes the induction part of the argument; to finish the proof, observe that by a union bound over all $r$ rounds and the application of the DKW inequality, we get that~\autoref{assump:1:lb:adap} holds all through the execution with probability at least $1-(r\cdot 1/(6r)+1/6)=2/3$ (over the choice of $(S,T)$). But since the algorithm is only successful when it finds the value of $\Delta$ (i.e., when $S_r\cap \delta(T_r) \neq \emptyset$), we must have $d_{r} \leq 1$. In particular, there exists some stage $1\leq L \leq r$ such that $d_L \leq C\sqrt{m} < d_{L-1}$,\footnote{Recall that $(\ddagger)$ is only valid for $d_\ell \geq C\sqrt{m}$.}{}  which leads to 
  \[
      C\sqrt{m}\geq d_L = \frac{\alpha}{rq^2}\sqrt{d_{L-1}} = \dots 
      = \left(\frac{\alpha}{rq^2}\right)^{\sum_{a=0}^L 1/2^a}\left(d_{0}\right)^{1/2^{L+1}}
      = \left(\frac{\alpha}{rq^2}\right)^{2(1-1/2^{L+1})}m^{1/2^{L+1}}
  \]
  from which
  \[
      q^2 \geq \frac{\alpha}{r}m^{\frac{1}{2^{L+2}}} \geq \frac{\alpha}{r}m^{\frac{1}{2^{r+2}}} 
  \]
  which yields the lower bound $q = \tildeOmega{m^{1/2^{r+3}}}$.
\end{proof}

\subsubsection{Reduction from \textsc{Template-Search} to $(1,3,2)$-testing}\label{sec:hierarchy:lb:reduction}

It remains to describe and analyze a reduction from \textsc{Template-Search} to (two-sided) testing of $(1,3,2)$-freeness. We first describe the reduction, before analyzing it and establishing the required properties (i.e., that it preserves both the query complexity and the number of rounds of adaptivity).

As a first step, we note that we can without loss of generality assume any test for $(1,3,2)$-freeness to be \emph{order-based}, i.e. bases its decisions only on the relative order of the values of $f$ on its queries (and not on the values themselves). This is possible by invoking a result of Fischer~\cite{Fischer2004}, along with the fact that $\pi$-freeness is a strongly order-based property (in the terminology of~\cite{Fischer2004}).\footnote{We actually need to be relatively careful in applying the result of~\cite{Fischer2004}, as (i) we also require that the equivalence between tests and order-based tests preserve the number of rounds of adaptivity, and (ii)~\cite{Fischer2004} deals with integer-valued functions, while in our case they take values in $\R$. However, it is not hard to see by inspection of the proof of~\cite[Theorem 3.2]{Fischer2004} that the argument does preserve the number of rounds of adaptivity; as for (ii), Section 5 of Fischer's paper mentions the extension of his results to real-valued functions.}

\paragraph{Construction and simulation} Given an instance $(S,T)$ of the \textsc{Template-Search} problem with $\abs{T}=m$, we build two (random) instances $f_\yes, f_\no$ of $(1,3,2)$-freeness testing of size $n\eqdef 5m$. To obtain them, we start by describing the deterministic construction of a related function $f\colon[n]\to\R$ that both $f_\yes, f_\no$ will be based on.
\begin{enumerate}
  \item set $\delta_i \eqdef \frac{1}{4} \min(T_{i+1}-T_i, T_{i}-T_{i-1})$, for all $1\leq i\leq m-1$;
  \item define $f$ on the first $2m$ elements by $f(2i-1) = f(2i) \eqdef T_i$ for all $i\in[m]$;
  \item define $f$ on the remaining $3m$ elements by $f(i+2m) \eqdef S_i$ for all $i\in[3m]$.
\end{enumerate}
In other terms, $f$ corresponds to duplicating each element of $T$ into two adjacent identical elements, and concatenating the $2m$ resulting sequence $T'$ with $S$.

\noindent Now, we define $f_\yes$ and $f_\no$ based on this $f$, which correspond to specific element-wise perturbations of the above $f$:
\begin{enumerate}
  \item set $f_\yes(2i-1)=f_\no(2i-1) \eqdef f(2i-1)-\delta_i$ and $f_\yes(2i) = f_\no(2i) \eqdef f(2i)+\delta_i$ for all $i\in[m]$ (that is, we decrease the first copy of $T_i$ by $\delta_i$ and increase the second by $\delta_i$);
  \item set $f_\yes(i+2m) \eqdef f(i+2m)+2\delta_i$ and $f_\no(i+2m) \eqdef f(i+2m)$ for all $i\in[m]$.
\end{enumerate}
It is easy to see that $f_\yes$ is always $(1,3,2)$-free. On the other hand, our function $f_\no$ will be $\Omega(1)$-far from $(1,3,2)$-free: indeed, one can check that every $f_\no$ has exactly $m$ $(1,3,2)$-tuples (one for each element of $T$, including the two first adjacent (perturbed) copies and the counterpart in the last $3m$ elements). Thus, since ``fixing'' one such triple requires modifying one of its elements, we have that $f_\no$ is $\frac{1}{5}$-far from $(1,3,2)$-free.

Furthermore, it is straightforward to simulate query access to either $f_\yes$ or $f_\no$, provided query access to $(S,T)$, while only blowing up the number of queries by a factor $4$ (and preserving the number of rounds of adaptivity). Indeed, when an algorithm queries $f_\yes(i)$ ($f_\no$ is similar), it suffices to
\begin{itemize}
  \item query the corresponding three adjacent elements of $T$ to compute the relevant $\delta_j$;
  \item if $i>2m$, also query $S_{i-2m}$
\end{itemize}
which can be done in parallel for all queries in a given round.

\paragraph{Reduction and simulation} Assume now we have a $q$-query two-sided (order-based) test $\Tester$ for $(1,3,2)$-freeness with $r$ rounds of adaptivity, which succeeds with probability $5/6$. Given an instance $(S,T)$ of \textsc{Template-Search}, we draw and simulate access to two function $f_\yes$, $f_\no$ as above, and run $\Tester$ in parallel on both (which costs in total, per the above, at most $2\cdot 4q$ queries to $(S,T)$) \emph{on the same randomness $\omega$}. By a union bound, both instances are correct with probability at least $2/3$ over the choice of $\omega$, i.e. $\Tester$ rejects $f_\no$ while accepting $f_\yes$. Fix any $\omega$ such that this holds.

We now use the fact that $\Tester$ is order-based. Since for every $i<j$ such that $(i,j)$ is not of the form $(2\ell, \ell+\Delta+2m)$ with $\ell\in[m]$ (that is, where $i$ is the index of the second perturbed copy of an element $T_\ell$, and $j$ the index in $f$ of the element of $S$ corresponding to $T_\ell$), the order relation is the same under $f_\yes$ and $f_\no$, all the answers given to $\Tester$ for such queries will be the same under both functions. (This is by construction of $f_\yes$ and $f_\no$, and the choice of the $\delta_i$'s.)

In addition, as the two instances of the test are run on the same coin tosses, in order for them to give different answers (respectively $\accept$ and $\reject$) they must have received a different answer to the same query at some point. That is, there exist a query pair $(i,j)\in[n]^2$ queried by both instances, such that $f_\yes(i) < f_\yes(j)$ yet $f_\no(i) > f_\no(j)$. But, by the above discussion, this only happens for $(i,j)$ of the form $(2\ell, \ell+\Delta+2m)$ (or $(\ell+\Delta+2m)$ if $i>j$), from which the offset $\Delta$ can be immediately computed. 

Thus, the above simulation allows one to solve the \textsc{Template-Search} problem on any instance $(S,T)$ with probability $2/3$, while preserving the number of rounds of adaptivity $r$, and with at most $6q$ queries to $(S,T)$. By~\autoref{theo:lb:template:search}, this implies $6q = \bigOmega{m^{1/2^{r+3}}}$, establishing~\autoref{theo:hierarchy:lb}.

\section{A permutation dependent non-adaptive hierarchy}\label{sec:other}
This section contains the proof of~\autoref{thm:hierarchy_nonadaptive}. Note that the corresponding lower bound follows from~\autoref{cor:max_diff_lower_bound}, so we only need to prove the upper bound.

 Following the notation of \cite{NewmanRabinovich2017}, for a set $\mathcal{A}$ of disjoint $\pi$-copies in $f\colon [n] \to \R$ we define $T_i = T_i(\mathcal{A}) = \{ t_i : (t_1, \ldots, t_k) \in \mathcal{A}\}$ for any $1 \leq i \leq k$. We also define 
 \begin{align*}
 T^*(\mathcal{A}) &= \{(t_1, \ldots, t_k) : \forall i\ t_i \in T_i(\mathcal{A})\ ,\ \forall i \neq j\  f(t_i) < f(t_j) \iff \pi(i) < \pi(j) \}\\
 T^*_{i,j}(\mathcal{A}) &= \{(t_i, \ldots, t_j) : (t_1, \ldots, t_k) \in T^*(\mathcal{A}) \}
 \end{align*}
 That is, $\mathcal{T}^*$ is the set of all $\pi$-copies induced by copies from $\mathcal{A}$, where the entry $t_i$ from a copy $(t_1, \ldots, t_k) \in \mathcal{A}$ is only allowed to play the role of an $i$-th entry of a copy. $\mathcal{T}^*_{i,j}$ is the projection of $\mathcal{T}^*$ onto coordinates $i$ to $j$. 
 
 The proof of~\autoref{thm:hierarchy_nonadaptive} uses the \textsf{DyadicSampler} (Algorithm 3.1 of~\cite{NewmanRabinovich2017}) to efficiently find a large set of monotone subsequences of a desired form, and combines it with uniform sampling of the entries of the input sequence, to show that a $\pi$-copy can be obtained with good probability.
 
 \begin{proof}[Proof of~\autoref{thm:hierarchy_nonadaptive}]
 	Let $2 \leq \ell \leq k-1$. (the case $\ell=1$ corresponds to a monotone permutation, and is settled by Newman et al.)
 	Take $\pi = (\pi_1, \ldots, \pi_k)$ to be any permutation in which $\pi_{\ell} = 1$ and $\pi_{\ell+i} = \ell+i$ for any $1 \leq i \leq k-\ell$.
 	
 	The lower bound follows from~\autoref{thm:main_thm_lower2} and it remains to obtain the corresponding upper bound. 
 	Let $f\colon [n] \to \R$ be $\eps$-far from $\pi$-freeness, so $f$ contains a set $\mathcal{A}$ of $\eps n / k$ pairwise-disjoint $\pi$-copies.
 	Our algorithm for finding a $\pi$-copy in $f$ with probability $2/3$ is described below. To simplify the presentation, we do not try to optimize the (polynomial) dependence of the number of queries in $\eps$ and $\log{n}$.
 	\begin{itemize}
 		\item We run the \textsf{DyadicSampler} sufficiently many times, where each run is independent of all other runs. Our goal here is to obtain a set $T' \subseteq T^*_{\ell,k}$ of $n^{1 - 1 / \ell}$
 		monotone increasing subsequences of $f$, that are $\ell$-dominating in the following sense. A sequence $s = (s_1, \ldots, s_{k-\ell+1}) \in T^*_{\ell, k}$ is \emph{$\ell$-dominating} if the unique $t(s) = (t_1, \ldots, t_k) \in \mathcal{A}$ for which $s_1 = t_{\ell}$ also satisfies $f(t_{\ell+1}) \leq f(s_{2})$. We also require that the subsequences of $T'$ have disjoint first entries. That is, for any two tuples $s = (s_1, \ldots, s_{k-\ell+1}), s' = (s'_1, \ldots, s'_{k-\ell+1}) \in T'$, it holds that $s_1 \neq s'_1$.
 		Later, we show that after (independently) running the dyadic sampler $\tilde{\Theta}_{\eps}(n^{1 - 1 / \ell})$ times (making $\tilde{\Theta}_{\eps}(n^{1 - 1 / \ell})$ queries in total, since each run makes $k$ queries), a set $T'$ of the required size is obtained with probability $9/10$, provided that the constants hidden in the $\tilde{\Theta}_{\eps}$ term are large enough.
 		
 		\item Suppose now that a set $T'$ with the desired size was obtained in the first step.
 		The second step is to sample single entries of our input sequence $f$ uniformly at random, where the probability for each entry to be sampled is $10 n^{-1/{\ell}}$, independently of other samples. 
 		The crucial idea here is that the set of entries of a sequence $s \in T'$, and the first $\ell-1$ entries of the $\ell$-dominated $\pi$-copy $t = t(s) \in \mathcal{A}$, can be combined together to obtain a $\pi$-copy.
 		For a single subsequence $s \in T'$, Let $E_s$ be the event that all of the first $\ell-1$ entries of $t(s)$ are sampled. Then $\Pr(E_s) = 10^{\ell-1} n^{-1 + 1/\ell}$, and $E_s$ is independent of all other events $E_{s'}$ for $s' \in T'$. Thus, it is not hard to see that with probability at least $9/10$, there exists $s \in T'$ for which $E_s$ holds.
 		
 		\item The number of entries sampled in the second step is bounded by $100 n^{1-1/{\ell}}$ with probability at least $9/10$ (so if the number of proposed samples exceeds this, we may stop and return an arbitrary answer, similarly to what was done in~\autoref{sec:upper}). Thus, the probability that a $\pi$-copy is found using our test, which makes $\tilde{\Theta}_{\eps}(n^{1 - 1/\ell})$ queries, is at least $7/10$.
 	\end{itemize}
 	
 	It remains to show that a set $T'$ of $n^{1 - 1 / \ell}$ monotone increasing subsequences as above may indeed be produced with probability $9/10$ using $\tilde{\Theta}_{\eps}(n^{1-1/\ell})$ runs of the dyadic sampling algorithm. The fact that the dyadic algorithm essentially generates $\ell$-dominating monotone subsequences at no additional cost is a direct consequence of the proof of Theorem 3.2 in \cite{NewmanRabinovich2017} (as was first observed by Newman et al., see the beginning of Section 5 in their paper). 
 	
 	It was shown in \cite{NewmanRabinovich2017} that the success probability of the dyadic sampler (i.e., the probability to generate an $\ell$-dominating monotone subsequence as defined above) is at least $1 / \beta_k(n, \eps)$, where $\beta_k$ is polynomial in $\log{n}$ and $\eps$.
 	
 	First, we make $g_{k,l}(n, \eps) = 20 \alpha_k \beta_k(n, \eps) n^{1 - 1 / \ell} \log{n}$ mutually independent runs of the dyadic sampler (where $\alpha_k$ is determined in~\autoref{lem:dyadic_distribution}), and denote by $S$ the set of output tuples of all successful runs.
 	By Chernoff's bound, $\Pr( |S| \geq 10 \alpha_k n^{1 - 1 / \ell} \log{n}) > 99/100$ for large enough $n$. 
 	The next lemma can be used to show that $T'$ is large enough with good probability, assuming that $S$ is large enough.
 	
 	\begin{lemma}
 		\label{lem:dyadic_distribution}
 		Let $s = (s_1, \ldots, s_{r})$ be the output of a single run of the \textsf{DyadicSampler} of Newman et al. (Algorithm 3.1 in \cite{NewmanRabinovich2017}) with parameters $I$ and $r$, where $I$ is a sequence of length $m \geq r$.
 		Then for any element $x \in I$, the probability that $s_1$ equals $x$
 		is bounded by $\alpha_r / m$, where $\alpha_r$ depends only on $r$. 
 	\end{lemma}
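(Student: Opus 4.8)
To prove the lemma, the plan is to unpack the recursive structure of the \textsf{DyadicSampler} (Algorithm 3.1 of~\cite{NewmanRabinovich2017}) just enough to pin down the distribution of the leftmost sampled point $s_1$. Recall that, given a sequence $I$ of length $m$ and a target length $r$, the sampler works recursively: if $r = 1$ it returns a single element of $I$ chosen uniformly at random; otherwise it splits $I$ into its left and right halves $I_L, I_R$ (of roughly equal size $m/2$, ignoring floors and ceilings as elsewhere in the paper), draws a split of the target length $r = r_L + r_R$ with $1 \le r_L \le r-1$ according to its internal randomness, recurses on $(I_L, r_L)$ and on $(I_R, r_R)$, and returns the concatenation of the two outputs. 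Consequently, the output $(s_1, \ldots, s_r)$ is produced by first fixing (at random) the $r$ pairwise disjoint consecutive ``leaf blocks'' $B_1 < B_2 < \cdots < B_r$ in $I$ arising as the leaves of the random recursion tree, and then drawing each $s_i$ uniformly and independently from $B_i$. In particular $s_1$ is a uniformly random element of the (random) leftmost leaf block $B_1$, so that for every fixed $x \in I$,
\[
  \probaOf{ s_1 = x } = \expect{ \frac{\indic{x \in B_1}}{|B_1|} } \le \expect{ \frac{1}{|B_1|} }.
\]

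First I would establish a \emph{deterministic} lower bound on $|B_1|$ that holds for every outcome of the sampler's coins. The leftmost leaf $B_1$ is reached from the root of the recursion tree by always following the ``left child'' edge; along this path the target length takes values $r = \rho_0 > \rho_1 > \cdots > \rho_D = 1$, and this sequence is \emph{strictly decreasing} precisely because each recursive split uses $r_L \le r - 1 < r$. Hence the path has length $D \le r - 1$. Since the ambient interval is (roughly) halved at each step of this path, $|B_1| \ge m / 2^{D} \ge m / 2^{r-1}$. Plugging this into the display above gives $\probaOf{ s_1 = x } \le 2^{r-1}/m$, so the lemma holds with $\alpha_r \eqdef 2^{r-1}$.

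The step I expect to demand the most care is matching this informal sketch to the precise specification of Algorithm 3.1 in~\cite{NewmanRabinovich2017}: one needs to check that every recursive call genuinely shrinks the interval, and that on the leftmost branch the target length strictly decreases, so that the leftmost leaf cannot be driven below size $m / 2^{\bigO{r}}$. Should the algorithm also permit ``one-sided'' recursion steps (recursing on $(I_L, r)$ or $(I_R, r)$ without decreasing $r$), or should it guess auxiliary quantities such as separating thresholds, the argument adapts as follows: threshold guesses never affect \emph{which} indices are sampled (only the roles assigned to them), hence are irrelevant to the law of $s_1$; and one shows that consecutive one-sided steps either are capped in number by a function of $r$, or are taken with probability decaying geometrically in their number, so that $\expect{m/|B_1|}$ — equivalently $\expect{2^{D}}$ with $D = \log_2(m/|B_1|)$ — stays bounded by a quantity depending on $r$ only, which is all that is required.
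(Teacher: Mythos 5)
Your proposal rests on an incorrect model of the \textsf{DyadicSampler}: you treat it as a recursion that halves the interval at each step, so that the leftmost leaf block $B_1$ has size deterministically at least $m/2^{r-1}$. That is not how Algorithm 3.1 of~\cite{NewmanRabinovich2017} works. At each recursive call the algorithm picks a \emph{random dyadic scale} $W=2^w$ (with $w$ roughly uniform over $\{0,\dots,\lfloor\log m\rfloor-1\}$) and then a \emph{random shift}, choosing $I_L$ to be one of $\Theta(m/2^w)$ overlapping windows of width $\Theta(2^w r)$. In particular $|I_L|$ can be as small as $2^w$ for $w=0$, so there is no deterministic (or even high-probability) lower bound of the form $|B_1|\ge m/2^{O(r)}$, and your first, ``clean'' argument collapses.

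Your hedge at the end also does not save it. You propose to control $\expect{1/|B_1|}$ (equivalently $\expect{m/|B_1|}$) and claim it is bounded by a function of $r$ only. For the actual algorithm this is false: conditioning on $w=0$ along the leftmost branch already contributes $\Omega(m)$ to $m/|B_1|$ at probability $\Theta(1/\log m)$ per level, so $\expect{m/|B_1|}$ grows with $m$. More to the point, the inequality $\probaOf{s_1=x}\le\expect{1/|B_1|}$ is too lossy: it discards exactly the correlation that makes the statement true, namely that when the chosen window is \emph{small} the event $\{x\in I_L\}$ is correspondingly \emph{unlikely}. The paper's proof keeps this coupling. By induction on $r$: conditioned on the slice-width $2^w$ and the split $\ell$, the element $x$ lies in $I_L$ with probability $O(2^w r/m)$ (each $x$ belongs to $O(r)$ of the $\Theta(m/2^w)$ candidate windows), and given $x\in I_L$ the inductive hypothesis bounds $\probaOf{s_1=x}$ by $\alpha_\ell/|I_L|\le\alpha_\ell/2^w$. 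The product is $O(r\alpha_\ell/m)$ \emph{uniformly in $w$}, so averaging over $w$ (with weight $1/\log m$) and summing over $\ell$ gives $\alpha_r = r\sum_{\ell<r}\alpha_\ell$. You need this scale-by-scale cancellation; bounding $\expect{1/|B_1|}$ alone cannot recover it.
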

 	\begin{proof}
 	For $r=1$ this is obvious, and $\alpha_1 = 1$ suffices. The rest of the proof is by induction.
 	If the ``split point'' is chosen to be $\ell \in [r-1]$ in step 2 of the sampler, and the ``slice-width'' is chosen to be $W = 2^w$ in step $3$ of the algorithm, then the \textsf{DyadicSampler} recursively takes (in step 5) its first $\ell$ elements from an interval $I_L$ of size at most $2\ell W < 2^{w+1}r$, chosen uniformly (in step 4) among a set of $\Theta(m / 2^w)$ possible intervals, such that each interval has a shift of $2^w$ from its predecessor. Therefore, each element is contained in at most $2r$ such intervals.
 	Note that an interval chosen to be $I_L$ in \textsf{DyadicSampler} must not be the last one -- otherwise $I_R$ would be empty -- so $|I_L| \geq 2^w$.
 	Combining the inductive assumption with all of the above considerations, the probability to choose a certain entry as $s_1$ is bounded by
 	\begin{align*}
 	\frac{1}{\log m} \sum_{w=0}^{\lfloor \log m - 1 \rfloor} \sum_{\ell=1}^{r-1} \left(O\left(\frac{ 2^{w}r}{m}\right)\frac{\alpha_\ell}{|I_L|}	\right) 
 	\leq \frac{r}{m \log{m}}  \sum_{w=0}^{\lfloor \log m - 1 \rfloor} \sum_{\ell=1}^{r-1} \alpha_{\ell} \leq \frac{1}{m} r \sum_{\ell=1}^{r-1} \alpha_{\ell}
 	\end{align*}
 	So picking $\alpha_r = r \sum_{\ell=1}^{r-1} \alpha_{\ell}$ suffices for our purposes.
 	\end{proof}
 	
 	With~\autoref{lem:dyadic_distribution} in hand, the rest of the proof is quite straightforward. We may assume that $n$ is large enough, so that $g_{k, l}(n, \eps) < n/10 \alpha_k$.
 	For a specific $x \in [n]$ the probability that among the $g_{k,l}(n, \eps)$ runs of the dyadic sampler, at least $\log{n}$ outputted a tuple in which $x$ is the first element is bounded by
 	\[
 	\binom{n/10\alpha_k}{\log{n}} \left(\frac{\alpha_k}{n} \right)^{\log{n}} \leq \left(\frac{n}{10 \alpha_k}\right)^{\log{n}} \left(\frac{\alpha_k}{n} \right)^{\log{n}} \leq \frac{1}{10}
 	\] 
 	Thus, with probability at least $89/100$, both $|S| \geq 10 \alpha_k n^{1 - 1 / \ell} \log{n}$ and no entry from $[n]$ appears as the first entry of more than $\log{n}$ tuples from $S$. In this case, $|T'| \geq n^{1 - 1/\ell}$ as required.
 \end{proof}

\section{Acknowledgments}
The authors wish to thank Noga Alon for fruitful discussions and invaluable feedback.
\clearpage
\bibliographystyle{alpha}
\bibliography{references} 

\clearpage
\appendix

\end{document}